\newtheorem{theorem}{Theorem}
\newtheorem{corollary}[theorem]{Corollary}
\newtheorem{lemma}[theorem]{Lemma}
\newtheorem{claim}[theorem]{Claim}
\newtheorem{proposition}[theorem]{Proposition}
\newtheorem{definition}[theorem]{Definition}
\newtheorem{observation}[theorem]{Observation}
\newcommand{\rev}{\textsc{Rev}}
\newcommand{\bvcg}{\textsc{BVCG}}
\newcommand{\copies}{$\textsc{OPT}^{\textsc{Copies}}$}
\newcommand{\vcg}{\textsc{VCG}}
\newcommand{\srev}{\textsc{SRev}}
\newcommand{\brev}{\textsc{BRev}}
\newenvironment{prevproof}[2]{\noindent {\em {Proof of {#1}~\ref{#2}:}}}{$\Box$\vskip \belowdisplayskip}
\newcommand{\junk}[1]{}
\newcommand{\notshow}[1]{{}}
\DeclareMathOperator{\Prob}{Pr}
\DeclareMathOperator{\E}{E}
\def \E  {{\mathbb{E}}}
\def \Var{{\text{Var}}}
\def \polytope{{P(\mathcal{F},D)}}
\def \L{{\mathcal{L}}}
\DeclareMathOperator{\argmax}{argmax}
\def\tp{{\tilde{\varphi}}}
\def\I{{\mathds{1}}}
\definecolor{MyGray}{rgb}{0.8,0.8,0.8}
\begin{document}

\title{A Duality-Based Unified Approach to Bayesian Mechanism Design}

\author{
Yang Cai\thanks{Supported by NSERC Discovery RGPIN-2015-06127. Work done in part while the author was a Research Fellow at the Simons Institute for the Theory of Computing.} \\
McGill University, Canada\\
cai@cs.mcgill.ca \and Nikhil R. Devanur\thanks{Work done in part while the author was visiting the Simons Institute for the Theory of Computing.} \\Microsoft Resarch, USA\\ nikdev@microsoft.com\and S. Matthew Weinberg\thanks{Work done in part while the author was a Research Fellow at the Simons Institute for the Theory of Computing.}\\Princeton University, USA\\ smweinberg@princeton.edu. }
\addtocounter{page}{-1}
\maketitle
\begin{abstract} 

We provide a unified view of many recent developments in Bayesian mechanism design, including the black-box reductions of Cai et al.~\cite{CaiDW13b}, simple auctions for additive buyers~\cite{HartN12}, and posted-price mechanisms for unit-demand buyers~\cite{ChawlaHK07}. Additionally, we show that viewing these three previously disjoint lines of work through the same lens leads to new developments as well. First, we provide a duality framework for Bayesian mechanism design, which naturally accommodates multiple agents and arbitrary objectives/feasibility constraints. Using this, we prove that either a posted-price mechanism or the Vickrey-Clarke-Groves auction with per-bidder entry fees achieves a constant-factor of the optimal revenue achievable by a Bayesian Incentive Compatible mechanism whenever buyers are unit-demand or additive, unifying previous breakthroughs of Chawla et al.~\cite{ChawlaHMS10} and Yao~\cite{Yao15}, and improving both approximation ratios (from $30$ to $24$ and $69$ to $8$, respectively). Finally, we show that this view also leads to improved structural characterizations in the Cai et al. framework.\stepcounter{footnote}\footnotetext{An earlier version of this work appeared under the same title in the proceedings of STOC 2016.}
\end{abstract}



\newpage

\section{Introduction}
In the past several years, we have seen a tremendous advance in the field of Bayesian Mechanism Design, based on ideas and concepts rooted in Theoretical Computer Science (TCS). For instance, due to a line of work initiated by Chawla, Hartline, and Kleinberg~\cite{ChawlaHK07}, we now know that posted-price mechanisms are approximately optimal with respect to the optimal {Bayesian Incentive Compatible\footnote{A mechanism is Bayesian Incentive Compatible (BIC) if it is in every bidder's interest to tell the truth, assuming that all other bidders' reported their true values. A mechanism is Dominant Strategy Incentive Compatible (DSIC) if it is in every bidder's interest to tell the truth \emph{no matter what reports the other bidders make.} }
(BIC)} mechanism whenever buyers are unit-demand,\footnote{A valuation is unit-demand if $v(S) = \max_{i \in S}\{v(\{i\})\}$. A valuation is additive if $v(S) = \sum_{i \in S} v(\{i\})$.} and values are independent\footnote{That is, the random variables $\{v_{ij}\}_{i,j}$ are independent (where $v_{ij}$ denotes bidder $i$'s value for item $j$).}~\cite{ChawlaHMS10,ChawlaMS10,KleinbergW12}. Due to a line of work initiated by Hart and Nisan \cite{HartN12}, we now know that either running Myerson's auction separately for each item or running the VCG mechanism with a per-bidder entry fee\footnote{By this, we mean that the mechanism offers each bidder the option to participate for $b_i$, which might depend on the other bidders' bids but not bidder $i$'s. If they choose to participate, then they play in the VCG auction (and pay any additional prices that VCG charges them).} is approximately optimal with respect to the optimal BIC mechanism whenever buyers are additive, and values are independent~\cite{LiY13,BabaioffILW14,Yao15}. Due to a line of work initiated by Cai et al.~\cite{CaiDW12a}, we now know that optimal mechanisms are distributions over virtual welfare maximizers, and have computationally efficient algorithms to find them in quite general settings~\cite{CaiDW12b,CaiDW13a,CaiDW13b,BhalgatGM13,DaskalakisW15,DaskalakisDW15}. The main contribution of this work is a unified approach to these three previously disjoint research directions. 
At a high level, we show how a new interpretation of the Cai-Daskalakis-Weinberg (CDW) framework provides us a duality theory, which 
then allows us to strengthen the characterization results of Cai et al., as well as interpret the benchmarks used in~\cite{ChawlaHK07,ChawlaHMS10, ChawlaMS10, KleinbergW12,HartN12,CaiH13, LiY13,BabaioffILW14} as dual solutions. Surprisingly, we learn that \emph{essentially the same dual solution} yields all the key benchmarks in these works. We show how to extend this dual solution to multi-bidder settings, and analyze the mechanisms developed in~\cite{ChawlaHMS10, Yao15} with respect to the resulting benchmarks. In both cases, our analysis yields improved approximation ratios.



\subsection{Simple vs. Optimal Auction Design}
It is well-known by now that optimal multi-item auctions suffer many properties that are undesirable in practice. For example, with just a single additive buyer and two items, the optimal auction could be randomized~\cite{Tha04, Pavlov11a}. Moreover, there exist instances where the buyer's two values are drawn from a correlated distribution where the optimal revenue achieves infinite revenue while the best deterministic mechanism achieves revenue $\leq 1$~\cite{BriestCKW10, HartN13}. Even when the two item values are drawn independently, the optimal mechanism might offer uncountably many different randomized options for the buyer to choose from~\cite{DaskalakisDT13}. Additionally, revenue-optimal multi-item auctions behave non-monotonically: there exist distributions $F$ and $F^+$, where $F^+$ stochastically dominates $F$, such that the revenue-optimal auction when a single additive buyer's values for two items are drawn from $F \times F$ achieves strictly larger revenue than the revenue-optimal auction when a single additive buyer's values are drawn from $F^+ \times F^+$~\cite{HartR12}. Finally, it is known that revenue-optimal auctions may not be DSIC~\cite{Yao17}, and are also \#P-hard to find~\cite{DaskalakisDT14}.

In light of the aforementioned properties, \emph{simple} mechanisms are often used in lieu of \emph{optimal} mechanisms in practice, and an active line of research coined ``simple versus optimal'' mechanism design~\cite{HartlineR09} aims to rigorously understand when simple mechanisms are appropriate in practice. Still, prior work essentially shows that simple mechanisms are never \emph{exactly} optimal, so the main goal of these works is to understand when simple mechanisms are \emph{approximately} optimal.\footnote{On this front, one should not interpret (say) an $8$-approximation as suggesting that sellers should be happy with $1/8$ of the revenue they could potentially achieve. Rather, these guarantees are meant to be interpreted more qualitatively, and suggest claims like ``If simple auction $A$ guarantees a small constant-factor approximation in the worst-case, but simple auction $B$ does not, maybe it's safer to use auction $A$ in practice.''} Some of the most exciting contributions from TCS to Bayesian mechanism design have come from this direction, and include a line of work initiated by Chawla et al.~\cite{ChawlaHK07} for unit-demand buyers, and Hart and Nisan~\cite{HartN12} for additive buyers. 

In a setting with $m$ heterogeneous items for sale and $n$ \emph{unit-demand} buyers whose values for the items are drawn independently, the state-of-the-art shows that a simple posted-price mechanism\footnote{ A posted price mechanism visits each buyer one at a time and posts a price for each item. The buyer can then select any subset of items and pay the corresponding prices. Observe that such a mechanism is DSIC.} obtains a constant factor of the optimal BIC revenue (the revenue of the optimal BIC mechanism)~\cite{ChawlaHK07,ChawlaHMS10,ChawlaMS10,KleinbergW12}. The main idea behind these works is a multi- to single-dimensional reduction. They consider a related setting where each bidder is split into $m$ separate copies, one for each item, with bidder $i$'s copy $j$ interested only in item $j$. The value distributions are the same as the original multi-dimensional setting. One key ingredient driving these works is that the optimal revenue in the original setting is upper bounded by a small constant times the optimal revenue in the copies setting. 

In a setting with $m$ heterogeneous items for sale and $n$ \emph{additive} buyers whose values for the items are drawn independently, the state-of-the-art result shows that for all inputs, either running Myerson's optimal single-item auction for each item separately or running the VCG auction with a per-bidder entry fee obtains a constant factor of the optimal BIC revenue~\cite{HartN12,LiY13,BabaioffILW14,Yao15}. One main idea behind these works is a ``core-tail decomposition'', that breaks the revenue down into cases where the buyers have either low (the core) or high (the tail) values. 

{Although these two approaches appear different at first,} we are able to show that they in fact arise from \emph{basically the same dual} in our duality theory. 
Essentially, we show that a specific dual solution within our framework gives rise to an upper bound that decomposes into the sum of two terms, one that looks like the the copies benchmark, and one that looks like the core-tail benchmark.
In terms of concrete results, {this new understanding yields improved approximation ratios on both fronts. For additive buyers, we improve the ratio provided by Yao~\cite{Yao15} from $69$ to $8$. For unit-demand buyers, we  improve the approximation ratio provided by Chawla et al.~\cite{ChawlaHMS10} from $30$ to $24$.}

{In addition to these concrete results, our work makes the following conceptual contributions as well.}
First, while the single-buyer core-tail decomposition techniques (first introduced by Li and Yao~\cite{LiY13}) are now becoming standard~\cite{LiY13,BabaioffILW14,RubinsteinW15,BateniDHS15}, they do not generalize naturally to multiple buyers. Yao \cite{Yao15} introduced new techniques in his extension to multi-buyers termed ``$\beta$-adjusted revenue'' and ``$\beta$-exclusive mechanisms,'' which are technically quite involved. 
Our duality-based proof can be viewed as a natural generalization of the core-tail decomposition to multi-buyer settings. Second, we use basically the same analysis for both additive and unit-demand valuations, meaning that our framework provides a unified approach to tackle both settings. Finally, we wish to point out that the key difference between our proofs and those of~\cite{ChawlaHMS10,BabaioffILW14,Yao15} are our duality-based benchmarks: we are able to immediately get more mileage out of these benchmarks while barely needing to develop new approximation techniques. Indeed, the bulk of the work is in properly decomposing our benchmarks into terms that can be approximated using  ideas similar to prior work. All these suggest that our techniques are likely be useful in more general settings (and indeed, they have been: see Section~\ref{sec:related}). 


\subsection{Optimal Multi-Dimensional Mechanism Design}
Another recent contribution of the TCS community is the CDW framework for generic Bayesian mechanism design problems. Here, it is shown that Bayesian mechanism design problems for essentially any objective can be solved with black-box access just to an \emph{algorithm} that optimizes a perturbed version of that same objective. That is, even though the original mechanism design problem involves incentives, the optimal BIC mechanism can be found via black-box queries to an algorithm (where the input is known/given and there are no incentives), but this algorithm optimizes a perturbed objective instead. One aspect of this line of work is computational: we now have computationally efficient algorithms to find the optimal (or approximately optimal) mechanism in numerous settings of interest (including the aforementioned cases of many additive/unit-demand buyers, but significantly more general as well). Another aspect is structural: we now know, for instance, that in all settings that fit into this framework, the revenue-optimal mechanism is a distribution over {\em virtual welfare optimizers}.\footnote{Their reduction applies to objectives beyond revenue, such as makespan. The focus of the present paper is on revenue, so we only focus on the projection of their results onto this setting.}  
A mechanism is a virtual welfare optimizer if it 
pointwise optimizes the {\em virtual welfare} (that is, on every input, it selects an outcome that maximizes the virtual welfare). 
The virtual welfare is given by a {\em virtual valuation/transformation}, which is 
 a mapping from valuations to linear combinations of valuations. 

The structural characterization from previous work roughly ends here: the guaranteed virtual transformations were randomized with no promise  any additional properties beyond their existence (and that they could be found in poly-time). Our contribution to this line of work is to improve the existing structural characterization. Specifically, we show that every instance has a strong dual in the form of $n$ disjoint flows, one for each agent. The nodes in agent $i$'s flow correspond to possible types of this agent,\footnote{Both the CDW framework and our duality theory only apply directly if there are finitely many possible types for each agent.} and non-zero flow from type $t_i(\cdot)$ to $t'_i(\cdot)$ captures that the incentive constraint between $t_i(\cdot)$ and $t'_i(\cdot)$ binds. We show how a flow induces a {virtual transformation}, 
and that the optimal dual gives a virtual valuation function such that:
\begin{enumerate}
	\item This virtual valuation function is deterministic and can be found computationally efficiently.
	\item The optimal mechanism has expected revenue $=$ its expected virtual welfare, and every BIC mechanism has expected revenue $\leq $ its expected virtual welfare.
	\item The optimal mechanism optimizes virtual welfare pointwise (i.e. on every input, the virtual welfare maximizing outcome is selected).\footnote{This could be \emph{randomized}; there is always a deterministic maximizer but in cases where the optimal mechanism is randomized, the virtual transformations are such that there are numerous maximizers, and the optimal mechanism selects one of them from a particular probability distribution.}
\end{enumerate}

Here are a few examples of the benefits of such a characterization (which cannot be deduced from~\cite{CaiDW13b}). First, the promised virtual valuation function certifies the optimality of the optimal mechanism: every BIC mechanism has expected revenue $\leq$ its expected virtual welfare, yet the optimal mechanism maximizes virtual welfare pointwise. Second, by looking at the promised virtual valuation function, we can immediately determine which incentive constraints ``matter.'' Specifically, if the flow corresponding to the promised virtual valuation function sends flow from $t_i(\cdot)$ to $t'_i(\cdot)$, then removing the constraint guaranteeing that $t_i(\cdot)$ prefers to tell the truth rather than report $t'_i(\cdot)$ (e.g. through some form of verification) would increase the optimal achievable revenue.\footnote{This claim is only guaranteed to be true in non-degenerate instances with a unique optimal dual - and exactly results from the fact that relaxing tight constraints in non-degenerate LPs improves the optimal solution.} Such a characterization should prove a valuable analytical tool for multi-item auctions, akin to Myerson's virtual values for single-dimensional settings~\cite{Myerson81}.

\junk{\subsection{General Bayesian Mechanism Design}
Another recent contribution of the TCS community is the CDW framework for generic Bayesian mechanism design problems. Here, it is shown that Bayesian mechanism design problems for essentially any objective can be solved with black-box access just to an \emph{algorithm} that optimizes a perturbed version of that same objective. One aspect of this line of work is computational: we now have computationally efficient algorithms to find the optimal (or approximately optimal) mechanism in numerous settings of interest. Another aspect is structural: we now know that in all settings that fit into this framework, the optimal mechanism is a distribution over {\em virtual objective optimizers}.  
A mechanism is a virtual objective optimizer if it 
pointwise maximizes the sum of the original objective and the {\em virtual welfare}. 
The virtual welfare is given by a {\em virtual valuation/transformation}, which is 
 a mapping from valuations to linear combinations of valuations.

Our contribution to this line of work is to improve the existing structural characterization. Previously, these virtual transformations were thought to be randomized and arbitrary, having no clear connection to the objective at hand. Our duality theory can say much more about what these virtual transformations might look like: 
every instance has a strong dual in the form of $n$ disjoint flows, one for each agent. The nodes in agent $i$'s flow correspond to possible valuations of this agent,\footnote{Both the CDW framework and our duality theory only apply directly if there are finitely many possible types for each agent.} and non-zero flow from type $t_i(\cdot)$ to $t'_i(\cdot)$ captures that the incentive constraint between $t_i(\cdot)$ and $t'_i(\cdot)$ binds. We show how a flow induces a {virtual transformation}, 
and that the optimal dual gives a single, deterministic virtual valuation function such that:
\begin{enumerate}
	\item This virtual valuation function can be found computationally efficiently.
	\item In the special case of revenue, the optimal mechanism has expected revenue $=$ its expected virtual welfare, and every BIC mechanism has expected revenue $\leq $ its expected virtual welfare.
	\item The optimal mechanism optimizes the original objective $+$ virtual welfare pointwise.\footnote{This could be \emph{randomized}; there is always a deterministic maximizer but in cases where the optimal mechanism is randomized, the objective plus virtual welfare are such that there are numerous maximizers, and the optimal mechanism randomly selects one.}
\end{enumerate}}

\subsection{Related Work}\label{sec:related}
\subsubsection{Duality Frameworks}Recently, strong duality frameworks for a single additive buyer were developed in~\cite{DaskalakisDT13,DaskalakisDT15,DaskalakisDT16,GiannakopoulosK14,Giannakopoulos14a,GiannakopoulosK15}. These frameworks show that the dual problem to revenue optimization for a single additive buyer can be interpreted as an optimal transport/bipartite matching problem. Work of Hartline and Haghpanah also provides an alternative ``path-finding'' duality framework for a single additive or unit-demand buyer, and has a more similar flavor to ours (as flows can be interpreted as distributions over paths)~\cite{HartlineH15}. When they exist, these paths provide a witness that a certain Myerson-type mechanism is optimal, but the paths are not guaranteed to exist in all instances. Also similar is independent work of Carroll, which also makes use of a partial Lagrangian over incentive constraints, again for a single additive buyer~\cite{Carroll16}. In addition to their mathematical beauty, these duality frameworks also serve as tools to prove that mechanisms are optimal. These tools have been successfully applied to provide conditions when pricing only the grand bundle (give the buyer the choice only to buy everything or nothing)~\cite{DaskalakisDT13}, posting a uniform item pricing (post the same price on every item)~\cite{HartlineH15}, or even employing a randomized mechanism~\cite{GiannakopoulosK15} is optimal when selling to a single additive or unit-demand buyer. However, none of these frameworks currently applies in multi-bidder settings, and to date have been unable to yield any approximate optimality results in the (single bidder) settings where they do apply.

We also wish to argue that our duality is perhaps more transparent than existing theories. {For instance, it is easy to interpret dual solutions in our framework as virtual valuation functions, and dual solutions for multiple buyer instances just list a dual for each single buyer. In addition,} we are able to re-derive and improve the breakthrough results of~\cite{ChawlaHK07,ChawlaHMS10, ChawlaMS10,HartN12,LiY13,BabaioffILW14,Yao15} \emph{using essentially the same dual solution}. 
Still, it is not our goal to subsume previous duality theories, and our new theory certainly doesn't. 
For instance, previous frameworks are capable of proving that a mechanism is \emph{exactly} optimal when the input distributions are continuous.
Our theory as-is can only handle distributions with finite support exactly.\footnote{Our theory can still handle continuous distributions arbitrarily well. See Section~\ref{sec:prelim}.}
However, we have demonstrated that there is at least one important domain (simple and approximately optimal mechanisms) where our theory seems to be more applicable.

\subsubsection{Related Techniques}
Techniques similar to ours have appeared in prior works as well. For instance, the idea to use Lagrangian multipliers/LP duality for mechanism design dates back at least to early work of Laffont and Robert for selling a single item to budget-constrained bidders~\cite{LaffontR98}, is discussed extensively for instance in~\cite{MyersonGT,Vohra2011mechanism}, and also used for example in recent works as well~\cite{BhalgatGM13,Vohra12}. It is also apparently informal knowledge among some economists that Myerson's seminal result~\cite{Myerson81} can be proved using some form of LP duality, and some versions of these proofs have been published as well (e.g.~\cite{MalakhovV04}). Still, we include in Section~\ref{sec:myerson} a proof of~\cite{Myerson81} in our framework to serve as a warm-up (and because some elements of the proof are simplified via our approach). 

The idea to use ``paths'' of incentive compatibility constraints to upper bound revenue in single-bidder problems dates back at least to work of Rochet and Chon\'{e} studying revenue optimization in general multi-item settings~\cite{RochetC98}, and Armstrong~\cite{Armstrong96,Armstrong99}, which studies the special case of a single bidder and two items. More recently, Cai et al. use this approach to prove hardness of approximation for a single bidder with submodular valuations for multiple items~\cite{CaiDW13b}, Hartline and Haghpanah provide sufficient conditions for especially simple mechanisms to be optimal for a single unit-demand or additive bidder~\cite{HartlineH15}, and Carroll proves that selling separately is max-min optimal for a single additive buyer when only the marginals are known but not the (possibly correlated) joint value distribution~\cite{Carroll16}. Indeed, many of these works also observe that the term ``virtual welfare'' is appropriate to describe the resulting upper bounds. Still, these works focus exclusively on providing conditions for certain mechanisms to be \emph{exactly optimal}, and therefore impose some technical conditions on the settings where they apply. In comparison, our work pushes the boundaries by accommodating both approximation (in the sense that our framework can prove that simple mechanisms are approximately optimal and not just that optimal mechanisms are optimal) and unrestricted settings (in the sense that our framework isn't restricted to a single buyer, or additive/unit-demand valuations).


Finally, we note that some of the benchmarks used in later sections can be derived without appealing to duality~\cite{ChawlaMS10}. Therefore, a duality theory is not ``necessary'' in order to obtain our benchmarks. Still, prior to our work it was unknown that these benchmarks were at all useful outside of the unit-demand settings for which they were developed. Additionally, both the primal and dual understanding of these benchmarks is valuable for extending the state-of-the-art, discussed in more detail below. Prior work has also obtained approximately optimal auctions via some sort of ``benchmark decomposition'' in the unrelated setting of digital goods~\cite{ChenGL15}. 

\subsubsection{Approximation in Multi-Dimensional Mechanism Design}
Finally, we provide a brief overview of recent work providing simple and approximately optimal mechanisms in multi-item settings. Seminal work of Chawla, Hartline, and Kleinberg proves that a posted-price mechanism gets a 3-approximation to the optimal deterministic mechanism for a single unit-demand buyer with independently drawn item values~\cite{ChawlaHK07}. Chawla et al. improve the ratio to 2, and prove a bound of 6.75 against the optimal deterministic, DSIC mechanism for multiple buyers~\cite{ChawlaHMS10}. Chawla, Malec, and Sivan show that the bound degrades by at most a factor of 5 when comparing to the optimal randomized, BIC mechanism~\cite{ChawlaMS10}. Kleinberg and Weinberg improve the bound of 6.75 to 6~\cite{KleinbergW12}. Roughgarden, Talgam-Cohen and Yan provide a prior-independent ``supply-limiting'' mechanism in this setting, and also prove a Bulow-Klemperer~\cite{BulowK96} result: the VCG mechanism with additional bidders yields more expected revenue than the optimal deterministic, DSIC mechanism (with the original number of bidders), when bidder valuations are unit-demand, i.i.d., and values for items are regular and independent (possibly asymmetric)~\cite{RoughgardenTY12}. All of these results get mileage from the ``\copies'' benchmark initiated in~\cite{ChawlaHK07}.  

More recent influential work of Hart and Nisan proves that selling each item separately at its Myerson reserve\footnote{The Myerson reserve of a one-dimensional distribution refers to the revenue-optimal price to set if one seller is selling only this item to a single buyer.} gets an $O(\log^2 m)$ approximation to the optimal mechanism for a single additive buyer and $m$ independent (possibly asymmetric) items~\cite{HartN12}. Li and Yao improve this to $O(\log m)$, which is tight~\cite{LiY13}. Babaioff et al. prove that the better of selling separately and bundling together gets a 6-approximation~\cite{BabaioffILW14}. Bateni et al. extend this to a model of limited correlation~\cite{BateniDHS15}. Rubinstein and Weinberg extend this to a single buyer with ``subadditive valuations over independent items''~\cite{RubinsteinW15}. Yao shows that the better of selling each item separately using Myerson's auction and running VCG with a per-bidder entry fee gets a $69$-approximation when there are many additive buyers and all values for all items are independent~\cite{Yao15}. Goldner and Karlin show how to use these results to obtain approximately optimal prior-independent mechanisms for many additive buyers~\cite{GoldnerK16}. More recently, Chawla and Miller show that a posted-price mechanism with per-bidder entry fee gets a constant-factor approximation for many bidders with ``additive valuations subject to matroid constraints''~\cite{ChawlaM16}. All of these results get mileage from the ``core-tail decomposition'' initiated in~\cite{LiY13}. 

The present paper unifies these two lines of work by showing that the \copies\ benchmark and the core-tail decomposition both arise from essentially the same dual in our duality theory. These benchmarks provide a necessary starting point for the above results, but proving guarantees against these benchmarks of course still requires significant work. We believe that our duality theory now provides the necessary starting point to extend these results to much more general settings, as evidenced by the follow-up works discussed below.

\subsubsection{Subsequent Work}\label{sec:subsequent} Since the presentation of an earlier version of this work at STOC 2016, numerous follow-up works have successfully made use of our framework to design (approximately) optimal auctions in much more general settings. For example, Cai and Zhao show that the better of a posted-price mechanism and an anonymous posted-price mechanism with per-bidder entry fee gets a constant-factor approximation for many bidders with ``\emph{XOS valuations over independent items}''~\cite{CaiZ17}. This extends the previous state-of-the-art~\cite{ChawlaM16} from Gross Substitutes to XOS valuations. Eden et al. show that the better of selling separately and bundling together gets an $O(d)$-approximation for a single bidder with ``complementarity-$d$ valuations over independent items''~\cite{EdenFFTW17a}. The same authors also prove a Bulow-Klemperer result: the VCG mechanism with additional bidders yields more expected revenue than the optimal randomized, BIC mechanism (with the original number of bidders), when bidder valuations are ``additive subject to downward closed constraints,'' i.i.d., and values for items are regular and independent (possibly asymmetric)~\cite{EdenFFTW17b}. Brustle et al. design a simple mechanism that achieves  $\frac{1}{2}$ of the optimal gains from trade in certain \emph{two-sided markets}, such as bilateral trading and double auctions~\cite{BrustleCWZ17}. Devanur and Weinberg provide an alternative proof of Fiat et al.'s solution to the ``FedEx Problem,'' and extend it to design the optimal auction for a single buyer with a private budget~\cite{FiatGKK16, DevanurW17}. Finally, Liu and Psomas provide a Bulow-Klemperer result for \emph{dynamic auctions}~\cite{LiuP17}, and Fu et al. design approximately optimal BIC mechanisms for correlated bidders~\cite{FuLLT17}.

\vspace{.1in}
\noindent\textbf{Organization.} We provide preliminaries and notation below. In Section~\ref{sec:duality}, we present our duality theory for revenue maximization in the special case of additive/unit-demand bidders. In Section~\ref{sec:myerson}, we present a duality proof of Myerson's seminal result, and in Section~\ref{sec:flow} we present a canonical dual solution that proves useful in different settings. As a warm-up, we show in Section~\ref{sec:single} how to analyze this dual solution when there is just a single buyer. In Section~\ref{sec:multi}, we provide the multi-bidder analysis, which is more technical. In Section~\ref{sec:general}, we conclude with a formal statement of our duality theory in general settings.


\section{Preliminaries}\label{sec:prelim}

\noindent\textbf{Optimal Auction Design.} For the bulk of the paper, we will study the following setting (in Section~\ref{sec:general}, we will show that our duality theory holds much more generally). The buyers (we will use the terms buyer and bidder interchangeably) are either all unit-demand or all additive, with buyer $i$ having value $t_{ij}$ for item $j$. Recall that a valuation is unit-demand if $v(S) = \max_{i \in S}\{v(\{i\})\}$ and a valuation is additive if $v(S) = \sum_{i \in S} v(\{i\})$. {We use $t_{i}=(t_{i1},\ldots, t_{im})$ to denote buyer $i$'s values for all the goods and $t_{-i}$ to denote every buyer except $i$'s values for all the goods. $T_{ij}$ is the set of all possible values of buyer $i$ for item $j$, $T_{i}=\times_{j} T_{ij}$, $T_{-i} = \times_{i^{*}\neq i} T_{i^{*}}$ and $T = \times_{i} T_{i}$.}
All values for all items are drawn independently. We denote by $D_{ij}$ the distribution of $t_{ij}$, $D_i = \times_{j} D_{ij}$, $D_{i,-j}=\times_{j^{*}\neq j} D_{ij^{*}}, $  $D = \times_i D_i$, and $D_{-i} = \times_{i^* \neq i} D_{i^*}$, and $f_{ij}$ (or $f_i, f_{i,-j}, f_{-i}$, etc.) the densities of these finite-support distributions (that is, $f_i(x) = \Prob_{t_i \sim D_i}[t_i = x]$). We define $\mathcal{F}$ to be a set system over $[n]\times[m]$ that describes all feasible allocations.\footnote{When bidders are additive, $\mathcal{F}$ only allows allocating each item at most once. When bidders are unit-demand, $\mathcal{F}$ contains all matchings between the bidders and the items.}

A \emph{mechanism} takes as input a reported type from each bidder and selects (possibly randomly) an outcome in $\mathcal{F}$, and payments to charge the bidders. A mechanism is Bayesian Incentive Compatible (BIC) if it is in each buyers interest to report their true type, assuming that the other buyers do so as well, and Bayesian Individually Rational (BIR) if each buyer gets non-negative utility for reporting their true type (assuming that the other bidders do so as well). The revenue of an auction is simply the expected sum of payments made when bidders drawn from $D$ report their true values. The \emph{optimal auction} optimizes expected revenue over all BIC and BIR mechanisms. For a given value distribution $D$, we denote by $\rev(D)$ the expected revenue achieved by this auction, and it will be clear from context whether buyers are additive or unit-demand. For a specific BIC mechanism $M$, we will also use $\rev^M(D)$ to denote the expected revenue achieved by $M$ when bidders with valuations drawn from $D$ report truthfully.

\vspace{.1in}

\noindent\textbf{Reduced Forms.} The reduced form of an auction stores for all bidders $i$, items $j$, and types $t_i$, the probability that bidder $i$ will receive item $j$ when reporting $t_i$ to the mechanism (over the randomness in the mechanism and randomness in other bidders' reported types, assuming they come from $D_{-i}$) as $\pi_{ij}(t_i)$. It is easy to see that if a buyer is additive, or unit-demand and receives only one item at a time, that their expected value for reporting type $t'_i$ to the mechanism is just $t_i \cdot \pi_i(t'_i)$ (where we treat $t_i$ and $\pi_i(t'_i)$ as vectors, and $\cdot$ denotes a vector dot-product). We say that a reduced form is \emph{feasible} if there exists some feasible mechanism (that ex-post selects an outcome in $\mathcal{F}$ with probability $1$) that matches the probabilities promised by the reduced form. If $\polytope$ is defined to be the set of all feasible reduced forms, it is easy to see (and shown in~\cite{CaiDW12a}, for instance) that $\polytope$ is closed and convex.

We will also use $p_i(t_i)$ to refer to the expected payment made by bidder $i$ when reporting $t_i$ to the mechanism (over the randomness in the mechanism and randomness in other bidders' reported types, assuming they come from $D_{-i}$). 
\vspace{.1in}

\noindent\textbf{Simple Mechanisms.} Even though the benchmark we target is the optimal \emph{randomized} BIC mechanism, the simple mechanisms we design will all be deterministic and satisfy DSIC. For a single buyer, the two mechanisms we consider are selling separately and selling together. Selling separately posts a price $p_j$ on each item $j$ and lets the buyer purchase whatever subset of items she pleases. We denote by $\srev(D)$ the revenue of the optimal such pricing. Selling together posts a single price $p$ on the grand bundle, and lets the buyer purchase the entire bundle for $p$ or nothing. We denote by $\brev(D)$ the revenue of the optimal such pricing. For multiple buyers the generalization of selling together is the VCG mechanism with an entry fee, which offers to each bidder $i$ the opportunity to pay an entry fee $e_i(t_{-i})$ and participate in the VCG mechanism (paying any additional fees charged by the VCG mechanism). If they choose not to pay the entry fee, they pay nothing and receive nothing. We denote the revenue of the mechanism that charges the optimal entry fees to the buyers as $\bvcg(D)$, and $\vcg(D)$ the revenue of the VCG mechanism with no entry fees. The generalization of selling separately is a little different, and described immediately below.

\vspace{.1in}
\noindent\textbf{Single-Dimensional Copies.} A benchmark that shows up in our decompositions relates the multi-dimensional instances we care about to a single-dimensional setting, and originated in work of Chawla et. al.~\cite{ChawlaHK07}. For any multi-dimensional instance $D$ we can imagine splitting bidder $i$ into $m$ different copies, with bidder $i$'s copy $j$ interested only in receiving item $j$ and nothing else. So in this new instance there are $nm$ single-dimensional bidders, and copy $(i,j)$'s value for winning is $t_{ij}$ (which is still drawn from $D_{ij}$). 
The set system $\mathcal{F}$ from the original setting now specifies which copies can  simultaneously win. We denote by \copies$(D)$ the revenue of Myerson's optimal auction~\cite{Myerson81} in the copies setting induced by $D$.\footnote{Note that when buyers are additive that \copies\ is exactly the revenue of selling items separately using Myerson's optimal auction in the original setting.}

\vspace{.1in}
\noindent\textbf{Continuous versus Finite-Support Distributions.} Our approach explicitly assumes that the input distributions have finite support. This is a standard assumption when computation is involved. However, most existing works in the simple vs. optimal paradigm hold even for continuous distributions (including~\cite{ChawlaHK07,ChawlaHMS10,ChawlaMS10,HartN12,LiY13,BabaioffILW14,Yao15,RubinsteinW15,BateniDHS15}). Fortunately, it is known that every $D$ can be discretized into $D^+$ such that $\rev(D) \in [(1-\epsilon)\rev(D^+),(1+\epsilon)\rev(D^+)]$, and $D^+$ has finite support. So all of our results can be made arbitrarily close to exact for continuous distributions. We conclude this section by proving this formally, making use of the following theorem proved in~\cite{RubinsteinW15}, which draws from prior works~\cite{HartlineL10,HartlineKM11,BeiH11,DaskalakisW12}. Note that the theorem below holds for distributions over arbitrary valuation functions $t_i(\cdot)$, and not just additve/unit-demand.

\begin{theorem}\cite{RubinsteinW15,DaskalakisW12}\label{thm:RW15}
Let $M$ be any BIC mechanism for values drawn from distribution $D$, and for all $i$, let $D_i$ and $D_i^+$ be any two distributions, with coupled samples $t_i(\cdot)$ and $t_i^+(\cdot)$ such that $t_i^+(x) \geq t_i(x)$ for all $x \in \mathcal{F}$. If $\delta_i(\cdot) = t_i^+(\cdot) - t_i(\cdot)$, then for any $\epsilon > 0$, there exists a BIC mechanism $M'$ such that {$\rev^{M'}(D^+) \geq (1-\epsilon)(\rev^M(D) - \frac{\textsc{Val}(\delta)}{\epsilon})$}, where $\textsc{Val}(\delta)$ denotes the expected welfare of the VCG allocation when buyer $i$'s type is drawn according to the random variable $\delta_i(\cdot)$. 
\end{theorem}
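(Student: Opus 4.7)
The plan is to construct $M'$ in two stages: first build an intermediate mechanism $\tilde M$ on $D^+$ that faithfully simulates $M$ via the given coupling, and then repair its (generally absent) BIC property using a standard reference-mechanism transformation. The key observation that makes this plausible is that because $t_i^+(x) \geq t_i(x)$ pointwise on $\mathcal{F}$, the difference $\delta_i(\cdot)$ is itself a (random) non-negative valuation, so quantities like the VCG welfare on $\delta$ are well defined and provide a natural budget to pay for correcting incentives.

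I would define $\tilde M$ as follows: on reports $(t_1^+,\ldots,t_n^+)$, independently for each $i$ sample $t_i$ from the conditional distribution of $t_i \mid t_i^+$ given by the coupling, and then run $M$ on $(t_1,\ldots,t_n)$, using exactly the allocations and payments $M$ prescribes. Because the marginal distribution of $t_i$ produced this way is exactly $D_i$, truthful reporting of $D^+$ in $\tilde M$ induces the same joint distribution over allocations and payments as truthful reporting of $D$ in $M$; in particular, under truthful reporting, $\rev^{\tilde M}(D^+) = \rev^M(D)$. The catch is that $\tilde M$ need not be BIC on $D^+$: a bidder with true type $t_i^+$ who deviates to $\hat t_i^+$ still enjoys the original BIC inequality for $M$ in terms of $t_i$, but gets an additional ``bonus'' $\delta_i$ on whatever allocation the misreport procures. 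A direct calculation shows that the resulting interim BIC slack is at most the expected value of $\delta_i$ on some allocation in $\mathcal{F}$, which is in turn bounded by $\textsc{Val}(\delta)$.

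The final step is to convert $\tilde M$ into a truly BIC mechanism $M'$ via the $\epsilon$-BIC-to-BIC reduction used in~\cite{HartlineKM11,BeiH11,DaskalakisW12}: with probability $1-\epsilon$, run $\tilde M$; with probability $\epsilon$, run a ``reference'' mechanism whose sole purpose is to reward truthful reporting by enough to dominate the $O(\textsc{Val}(\delta))$ incentive slack identified above. In our setting the natural reference is VCG applied to the valuation profile $\delta$ (with appropriately scaled payments), since this is precisely what pays out $\textsc{Val}(\delta)$ in expected social surplus. Mixing gives revenue at least $(1-\epsilon)\rev^{\tilde M}(D^+)$ from the first component, minus an additive loss of order $\textsc{Val}(\delta)/\epsilon$ arising from the scaling required to make the reference overwhelm the slack; combined with $\rev^{\tilde M}(D^+) = \rev^M(D)$, this produces exactly the bound in the statement.

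The main obstacle is the careful accounting in the BIC-repair step: one must verify that under the mixed mechanism, the incentive for a type-$t_i^+$ bidder to report truthfully, created by the VCG-on-$\delta$ phase of weight $\epsilon$, dominates the worst-case slack $\textsc{Val}(\delta)$ present in the $\tilde M$ phase of weight $1-\epsilon$, and that the scaling needed to achieve this costs no more than $\textsc{Val}(\delta)/\epsilon$ in expected revenue. Everything else — the coupling-plus-conditional-sampling construction of $\tilde M$, the bound on interim BIC slack in terms of $\textsc{Val}(\delta)$, and the preservation of BIR under pointwise domination — is conceptually routine once that transformation is in place.
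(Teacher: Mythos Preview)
The paper does not contain a proof of this theorem. It is stated as a black-box result ``proved in~\cite{RubinsteinW15}, which draws from prior works~\cite{HartlineL10,HartlineKM11,BeiH11,DaskalakisW12},'' and is then \emph{applied} to argue that continuous distributions can be discretized with negligible loss. So there is no in-paper proof to compare your proposal against.

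That said, your high-level plan---resample $t_i$ from the conditional $t_i\mid t_i^+$ induced by the coupling, simulate $M$, observe that revenue is preserved while BIC fails by an amount controlled by $\delta$, and then repair BIC via an $\epsilon$-BIC-to-BIC reduction---is indeed the architecture behind the cited results. Where your sketch is too loose is the repair step. The actual reductions in~\cite{HartlineKM11,BeiH11,DaskalakisW12} do not simply mix with ``VCG on $\delta$'': the mechanism only sees the reported $t_i^+$, not $\delta_i$ (which depends on the unreported $t_i$), so you cannot literally run VCG on $\delta$ as a reference mechanism, nor is a plain convex combination with any fixed mechanism enough to annihilate arbitrary per-type incentive slack. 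Those papers instead use replica--surrogate matching (or closely related constructions) to convert $\epsilon$-BIC to BIC, and the $\textsc{Val}(\delta)/\epsilon$ loss arises from that machinery rather than from a direct VCG-on-$\delta$ subsidy. Your intuition for why $\textsc{Val}(\delta)$ is the right scale is sound, but the mechanism you describe would not actually restore BIC as written.
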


To see how this implies that our duality is arbitrarily close to exact for continuous distributions, let $D^\epsilon_i$ be the distribution that first samples $t_i(\cdot)$ from $D_i$, then outputs $t^\epsilon_i(\cdot)$ such that $t^\epsilon_i(x) = t_i(x) \cdot \I(t_i([m]) \leq 1/\epsilon)$.\footnote{That is, if the value of $t_i$ for the grand bundle satisfies $t_i([m]) \leq 1/\epsilon$, then $t^\epsilon_i(x) = t_i(x)$. Otherwise, $t_i^\epsilon(x) = 0$ for all $x$.} It is easy to see that as $\epsilon \rightarrow 0$, $\rev^M(D^\epsilon) \rightarrow \rev^M(D)$: for every mechanism $M$ and every $\eta > 0$, there exists an $\epsilon > 0$ such that a $(1-\eta)$ fraction of $M$'s revenue when buyers' types are drawn from $D$ comes from buyers with $t_i([m])\leq 1/\epsilon$\footnote{Similarly, if $M$ achieves infinite revenue, then for every $\eta > 0$, there exists an $\epsilon > 0$ such that the revenue of $M$ when buyers' types are drawn from $D$ coming from buyers with $t_i([m]) \leq 1/\epsilon$ is at least $1/\eta$. So our approach will still show that whenever $\rev(D)$ is infinite, the revenue of the approximately optimal mechanisms we use is unbounded.}. For the chosen $\epsilon$,  $M$ is still a BIC mechanism when buyers' types are drawn from $D^{\epsilon}$, and its revenue under $D^{\epsilon}$ is at least $(1-\eta)$ fraction of its revenue under $D$.  So we can get arbitrarily close while only considering distributions that are bounded.

Now for any bounded distribution $D_i$, define $D^{+,\epsilon}_i$ to first sample $t_i(\cdot)$ from $D_i$, then output $t^{+,\epsilon}_i(\cdot)$ such that $t^{+,\epsilon}_i(x) = \epsilon^2 \cdot \lceil t_i(x)/\epsilon^2 \rceil$. Similarly define $D^{-,\epsilon}_i$ to first sample $t_i(\cdot)$ from $D_i$, then output $t^{-,\epsilon}_i(\cdot)$ such that $t^{-,\epsilon}_i(x) = \epsilon^2 \cdot \left(\lceil t_i(x)/\epsilon^2 \rceil -1 \right)$. Then it's clear that $D^{+,\epsilon}_i, D_i,$ and $D^{-,\epsilon}_i$ can be coupled so that $t^{+,\epsilon}_i(x) \geq t_i(x) \geq t^{-,\epsilon}_i(x)$ for all $x$, and that taking either of the two consecutive differences results in a $\delta_i(\cdot)$ such that $\delta_i(x) \leq \epsilon^2$ for all $x$. So for any desired $\epsilon$, applying Theorem~\ref{thm:RW15} with $M$ as the optimal mechanism for $D$, we get a mechanism $M'$ for $D^{+,\epsilon}$ with revenue at least $(1-\epsilon)\rev(D)-n\epsilon$. Similarly, applying Theorem~\ref{thm:RW15} with $M$ as the optimal mechanism for $D^{-,\epsilon}$, we get a mechanism $M'$ for $D$ with revenue at least $(1-\epsilon)\rev(D^{-,\epsilon}) - n\epsilon$. Together, these claims imply that $\rev(D) \in [(1-\epsilon)\rev(D^{-,\epsilon})-n\epsilon, \frac{\rev(D^{+,\epsilon})}{1-\epsilon} +\frac{n\epsilon}{1-\epsilon}]$.  Finally, we just observe that $\rev(D^{+,\epsilon}) = \rev(D^{-,\epsilon}) + n\epsilon$ (the revenue of the optimal mechanism increases by exactly $n\epsilon$ going from $D^{-,\epsilon}$ to $D^{+,\epsilon}$), as every buyer values every outcome at exactly $\epsilon$ more in $D^{+,\epsilon}$ versus $D^{-,\epsilon}$. So as $\epsilon \rightarrow 0$, both approach $\rev(D)$. Note that both $D^{+,\epsilon}$ and $D^{-,\epsilon}$ have finite support, so our theory will directly design $M$ that achieve constant-factor approximations for $\rev(D^{-,\epsilon})$.


\section{Our Duality Theory}\label{sec:duality}
In this section we provide our duality framework, specialized to unit-demand/additive bidders. We begin by writing the linear program (LP) for revenue maximization (\Cref{fig:LPRevenue}). 
 For ease of notation, assume that there is a special type $\varnothing$ to represent the option of not participating in the auction. That means ${\pi}_i(\varnothing)=\textbf{0}$ and $p_{i}(\varnothing)=0$. Now a Bayesian IR (BIR) constraint is simply another BIC constraint: for any type $t_{i}$, bidder $i$ will not want to lie to type $\varnothing$.  We let $T_{i}^{+}=T_{i}\cup \{\varnothing\}$. 
 To proceed, we will introduce a variable $\lambda_i(t,t')$ for each of the BIC constraints,
and take the partial Lagrangian of LP~\ref{fig:LPRevenue} by Lagrangifying all BIC constraints. 
The theory of Lagrangian multipliers tells us that the solution to  LP~\ref{fig:LPRevenue} is equivalent to the primal variables solving the partially Lagrangified dual (\Cref{fig:Lagrangian}). 

\begin{figure}[ht]
\colorbox{MyGray}{
\begin{minipage}{0.98\textwidth} {
\noindent\textbf{Variables:}
\begin{itemize}
\item $p_i(t_i)$, for all bidders $i$ and types $t_i \in T_i$, denoting the expected price paid by bidder $i$ when reporting type $t_i$ over the randomness of the mechanism and the other bidders' types.
\item $\pi_{ij}(t_i)$, for all bidders $i$, items $j$, and types $t_i \in T_i$, denoting the probability that bidder $i$ receives item $j$ when reporting type $t_i$ over the randomness of the mechanism and the other bidders' types.
\end{itemize}
\textbf{Constraints:}
\begin{itemize}
\item ${\pi}_i(t_i) \cdot t_i - p_i(t_i) \geq {\pi}_i(t'_i)\cdot t_i - p_i(t'_i) $, for all bidders $i$, and types $t_i \in T_i, t'_i \in T_i^+$, guaranteeing that the reduced form mechanism $({\pi},{p})$ is BIC and BIR.
\item ${\pi} \in \polytope$, guaranteeing ${\pi}$ is feasible.
\end{itemize}
\textbf{Objective:}
\begin{itemize}
\item $\text{Maximize:} \sum_{i=1}^{n} \sum_{t_i \in T_i} f_{i}(t_{i})\cdot p_i(t_i)$, the expected revenue.\\
\end{itemize}}
\end{minipage}}
\caption{A Linear Program (LP) for Revenue Optimization.}
\label{fig:LPRevenue}
\end{figure}

\begin{figure}[ht]
\colorbox{MyGray}{
\begin{minipage}{0.98\textwidth} {
\noindent\textbf{Variables:}
\begin{itemize}
\item $\lambda_i(t_{i},t_{i}')$ for all $i,t_{i}\in T_{i},t_{i}' \in T_i^{+}$, the Lagrangian multipliers for Bayesian IC constraints.
\end{itemize}
\textbf{Constraints:}
\begin{itemize}
\item $\lambda_i(t_{i},t_{i}')\geq 0$ for all $i,t_{i}\in T_{i},t_{i}' \in T_i^{+}$, guaranteeing that the Lagrangian multipliers are non-negative. 
\end{itemize}
\textbf{Objective:}
\begin{itemize}
\item $\text{Minimize:} \max_{\pi\in\polytope, p} \L(\lambda, \pi, p)$.\\
\end{itemize}}
\end{minipage}}
\caption{Partial Lagrangian of the Revenue Maximization LP.}
\label{fig:Lagrangian}
\end{figure}
\begin{definition}
Let $\L(\lambda, \pi, p)$ be the partial Lagrangian defined as follows:
\begin{align*}
 \L(\lambda, \pi, p)=\sum_{i=1}^{n} \left(\sum_{t_i \in T_i} f_{i}(t_{i})\cdot p_i(t_i)+\sum_{t_{i}\in T_{i}}\sum_{t_{i}'\in T_i^{+}} \lambda_{i}(t_{i},t_{i}')\cdot \Big(t_{i}\cdot\big(\pi(t_{i})-\pi({t_{i}'})\big)-\big(p_{i}(t_{i})-p_{i}(t_{i}')\big)\Big)\right)\stepcounter{equation}\tag{\theequation} \label{eq:primal lagrangian}
\end{align*}
\begin{align*}
~=\sum_{i=1}^n \sum_{t_{i}\in T_{i}} p_{i}(t_{i})\Big(f_{i}(t_{i})+&\sum_{t_{i}'\in T_{i}} \lambda_{i}(t_{i}',t_{i})-\sum_{t_{i}'\in T_{i}^{+}} \lambda_{i}(t_{i},t_{i}')\Big)\\
&+\sum_{i=1}^n\sum_{t_{i}\in T_{i}}\pi_{i}(t_{i}) \Big(\sum_{t_{i}'\in T_{i}^{+}}t_i \cdot \lambda_{i}(t_{i},t_{i}')-\sum_{t'_{i}\in T_{i}}t_{i}'\cdot \lambda_{i}(t_{i}',t_{i})\Big)\stepcounter{equation}\tag{\theequation} \label{eq:dual lagrangian}
\end{align*}
\end{definition}
\notshow{
({\pi}_i(\varnothing)=\textbf{0},\ p_{i}(\varnothing)=0)

=& \sum_{i=1}^{n}\Big(\sum_{t_{i}\in T_{i}} p_{i}(t_{i})\cdot\big(f_{i}(t_{i})+\sum_{t_{i}'\in T_{i}} \lambda_{i}(t_{i}',t_{i})-\sum_{t_{i}'\in T_{i}^{+}} \lambda_{i}(t_{i},t_{i}')\big)\\
&~~+\sum_{t_{i}\in T_{i}}\pi_{i}(t_{i})\cdot \big(t_{i}\cdot \sum_{t_{i}'\in T_{i}^{+}}\lambda_{i}(t_{i},t_{i}')-\sum_{t'_{i}\in T_{i}}(t_{i}'\cdot \lambda_{i}(t_{i}',t_{i}))\big)\Big)\\ &~~~~~~~~~~~~({\pi}_i(\varnothing)=\textbf{0},\ p_{i}(\varnothing)=0)\stepcounter{equation}\tag{\theequation} \label{eq:dual lagrangian}}

\subsection{Useful Properties of the Dual Problem}
In this section, we make some observations about the dual problem to get some traction on what duals might induce useful upper bounds. 
\begin{definition}[Useful Dual]
A feasible dual solution $\lambda$ is \textbf{useful} if $\max_{\pi\in\polytope, p} \L(\lambda, \pi, p)< \infty$.
\end{definition}

\begin{lemma}[Useful Dual]\label{lem:useful dual}
A dual solution $\lambda$ is useful if and only if for each bidder $i$, $\lambda_{i}$ forms a valid flow, i.e., iff the following satisfies flow conservation (flow in = flow out) at all nodes except the source and the sink:
\begin{itemize}
\item Nodes: A super source $s$ and a super sink $\varnothing$, along with a node $t_{i}$ for every type $t_{i}\in T_{i}$. 
\item Flow from $s$ to $t_{i}$ of weight $f_i(t_{i})$, for all $t_{i}\in T_{i}$.
\item Flow from $t$ to $t'$ of weight $\lambda_i(t,t')$ for all $t\in T$, and $t'\in T_{i}^{+}$ (including the sink $\varnothing$).
\end{itemize}
\end{lemma}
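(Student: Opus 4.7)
The plan is to stare at expression~\eqref{eq:dual lagrangian} and read off the usefulness condition variable-by-variable. The Lagrangian is linear in both $p$ and $\pi$, and crucially the prices $p_i(t_i)$ are \emph{unrestricted} real variables in the inner maximization, while the allocation variables $\pi$ are confined to the bounded set $\polytope$. This asymmetry will drive the proof.

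First I would argue the ``only if'' direction. Suppose $\lambda$ is useful, so $\max_{\pi \in \polytope, p}\L(\lambda,\pi,p) < \infty$. Reading off the coefficient of $p_i(t_i)$ from \eqref{eq:dual lagrangian}, it equals
\[
f_i(t_i) + \sum_{t_i' \in T_i} \lambda_i(t_i', t_i) - \sum_{t_i' \in T_i^+} \lambda_i(t_i, t_i').
\]
If this coefficient were nonzero for some $i, t_i$, then sending $p_i(t_i) \to \pm \infty$ (with sign chosen to match the coefficient) would drive $\L$ to $+\infty$, contradicting usefulness. Hence the coefficient vanishes for every $i, t_i$, which rearranges to
\[
\underbrace{f_i(t_i) + \sum_{t_i' \in T_i} \lambda_i(t_i', t_i)}_{\text{flow into } t_i} \;=\; \underbrace{\sum_{t_i' \in T_i^+} \lambda_i(t_i, t_i')}_{\text{flow out of } t_i},
\]
which is exactly flow conservation at node $t_i$ in the graph described in the lemma statement (flow into $t_i$ consists of $f_i(t_i)$ from the source $s$ plus $\sum_{t_i'} \lambda_i(t_i', t_i)$ from other type-nodes, and flow out goes to other type-nodes and to the sink $\varnothing$).

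For the ``if'' direction, I would simply reverse this computation. If $\lambda_i$ forms a valid flow for every $i$, then the coefficient of every $p_i(t_i)$ in \eqref{eq:dual lagrangian} vanishes, so the entire first summation in~\eqref{eq:dual lagrangian} is identically zero and $\L$ reduces to
\[
\L(\lambda,\pi,p) = \sum_{i=1}^n \sum_{t_i \in T_i} \pi_i(t_i) \cdot \Big(\sum_{t_i' \in T_i^+} t_i \cdot \lambda_i(t_i, t_i') - \sum_{t_i' \in T_i} t_i' \cdot \lambda_i(t_i', t_i)\Big),
\]
independent of $p$. The maximum over $\pi \in \polytope$ is then finite because $\polytope$ is a bounded polytope (each $\pi_{ij}(t_i) \in [0,1]$) and the objective is a bounded linear function of $\pi$. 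Hence $\lambda$ is useful.

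There is no real obstacle here; the lemma is essentially a statement of LP duality / Lagrangian theory, and the ``hard part'' is only the small bookkeeping of matching the two sums $\sum_{t_i' \in T_i}\lambda_i(t_i',t_i)$ and $\sum_{t_i' \in T_i^+} \lambda_i(t_i, t_i')$ with the in-flow versus out-flow at node $t_i$, while remembering that the source $s$ contributes $f_i(t_i)$ of inflow and the sink $\varnothing$ only receives outflow (which is why $T_i^+$ appears on the outgoing side but $T_i$ on the incoming side).
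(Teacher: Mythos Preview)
Your proof is correct and follows essentially the same approach as the paper: both read off the coefficient of $p_i(t_i)$ in expression~\eqref{eq:dual lagrangian}, observe that $p_i(t_i)$ is unconstrained so this coefficient must vanish (else the inner max is $+\infty$), identify the vanishing condition with flow conservation at $t_i$, and then note that when it holds $\L$ depends only on $\pi\in\polytope$, which is bounded. The only minor difference is presentational---the paper explicitly flags that $p_i(t_i)\geq 0$ is \emph{not} a constraint in the LP, whereas you simply assert that prices are unrestricted---but the argument is the same.
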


\begin{proof}
Let us think of $\L(\lambda, \pi, p)$ using expression~(\ref{eq:dual lagrangian}). Clearly, if there exists any $i$ and $t_{i}\in T_{i}$ such that $$f_{i}(t_{i})+\sum_{t_{i}'\in T_{i}} \lambda_i(t_{i}',t_{i})-\sum_{t_{i}'\in T_{i}^{+}} \lambda_i(t_{i},t_{i}')\neq 0,$$ then since $p_{i}(t_{i})$ is unconstrained (note that we do \textbf{not} include the constraints $p_i(t_i) \geq 0$ in Figure~\ref{fig:LPRevenue}, so these variables are indeed unconstrained) and has a non-zero multiplier in the objective, $\max_{\pi\in\polytope, p} \L(\lambda, \pi, p)=+\infty$. Therefore, in order for $\lambda$ to be useful, we must have $$f_{i}(t_{i})+\sum_{t_{i}'\in T_{i}} \lambda_i(t_{i}',t_{i})-\sum_{t_{i}'\in T_{i}^{+}} \lambda_i(t_{i},t_{i}')=0$$ for all $i$ and $t_{i}\in T_{i}$. This is exactly saying what we described in the Lemma statement is a flow. The other direction is simple, whenever $\lambda$ forms a flow, $\L(\lambda, \pi, p)$ only depends on $\pi$. Since $\pi$ is bounded, the maximization problem has a finite value.
\end{proof}

\begin{definition}[Virtual Value Function]\label{def:virtual value}
For each $\lambda$, we define a corresponding virtual value function $\Phi^\lambda(\cdot)$, such that for every bidder $i$, every type $t_{i}\in T_{i}$, $\Phi^\lambda_{i}(t_{i})=t_{i}-{1\over f_{i}(t_{i})}\sum_{t_{i}'\in T_{i}} \lambda_{i}(t_{i}',t_{i})(t_{i}'-t_{i}).$ Note that for all $i$, $\Phi^\lambda_i(\cdot)$ is a vector-valued function, so we use $\Phi^\lambda_{ij}(\cdot)$ to refer to the $j^{th}$ component of $\Phi^\lambda_i(\cdot)$, and refer to this as bidder $i$'s virtual value for item $j$.
\end{definition}

\begin{theorem}[Virtual Welfare $\geq$ Revenue]\label{thm:revenue less than virtual welfare}
Let $\lambda$ be any useful dual solution and $M = (\pi,p)$ be any BIC mechanism. The revenue of $M$ is less than or equal to the virtual welfare of $\pi$ w.r.t. the virtual value function $\Phi^\lambda(\cdot)$ corresponding to $\lambda$. That is:
$$\sum_{i=1}^{n} \sum_{t_i \in T_i} f_{i}(t_{i})\cdot p_i(t_i)\leq \sum_{i=1}^{n} \sum_{t_{i}\in T_{i}} f_{i}(t_{i})\cdot \pi_{i}(t_{i})\cdot\Phi^\lambda_{i}(t_{i}).$$
Equality holds if and only if for all $i, t, t'$ such that $\lambda_i(t, t') > 0$, the BIC constraint for bidder $i$ between $t$ and $t'$ binds in $M$ (that is, bidder $i$ with type $t$ is indifferent between reporting $t$ and $t'$). Furthermore, let $\lambda^{*}$ be the optimal dual variables and $M^{*}=(\pi^{*},p^{*})$ be the revenue-optimal BIC mechanism, then the expected virtual welfare with respect to $\Phi^{*}$ (induced by $\lambda^{*}$) under  $\pi^{*}$ equals the expected revenue of $M^{*}${, and $$\pi^* \in \argmax_{\pi \in \polytope}\left\{\sum_{i=1}^n \sum_{t_i \in T_i} f_i(t_i) \cdot \pi_i(t_i) \cdot \Phi^*_i(t_i)\right\}.$$}
\end{theorem}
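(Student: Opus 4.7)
The heart of the argument is that expression~(\ref{eq:primal lagrangian}) and expression~(\ref{eq:dual lagrangian}) give two different views of the same quantity $\L(\lambda,\pi,p)$, and sandwiching revenue between the two is exactly what the theorem asks for. First, I would use expression~(\ref{eq:primal lagrangian}): since $M=(\pi,p)$ is BIC, every bracketed term $t_i\cdot(\pi(t_i)-\pi(t_i'))-(p_i(t_i)-p_i(t_i'))$ is nonnegative, and $\lambda\ge 0$, so the entire Lagrangian correction is nonnegative. Hence
\[
\sum_{i,t_i} f_i(t_i)\, p_i(t_i)\ \le\ \L(\lambda,\pi,p),
\]
with equality iff $\lambda_i(t,t')>0$ implies the BIC constraint from $t$ to $t'$ binds in $M$.

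Next, I would evaluate $\L(\lambda,\pi,p)$ via expression~(\ref{eq:dual lagrangian}). Because $\lambda$ is useful, Lemma~\ref{lem:useful dual} gives the flow conservation identity $\sum_{t_i'\in T_i^+}\lambda_i(t_i,t_i')=f_i(t_i)+\sum_{t_i'\in T_i}\lambda_i(t_i',t_i)$ at every node $t_i$, so the coefficient of every $p_i(t_i)$ in (\ref{eq:dual lagrangian}) vanishes and the Lagrangian reduces to the $\pi$-term only. Substituting flow conservation once more into the coefficient of $\pi_i(t_i)$ yields
\[
\sum_{t_i'\in T_i^+} t_i\,\lambda_i(t_i,t_i')-\sum_{t_i'\in T_i} t_i'\,\lambda_i(t_i',t_i)
= t_i\,f_i(t_i)-\sum_{t_i'\in T_i}\lambda_i(t_i',t_i)(t_i'-t_i)
= f_i(t_i)\,\Phi^\lambda_i(t_i),
\]
by Definition~\ref{def:virtual value}. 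Therefore $\L(\lambda,\pi,p)=\sum_{i,t_i} f_i(t_i)\,\pi_i(t_i)\cdot \Phi^\lambda_i(t_i)$ is exactly the expected virtual welfare of $\pi$, which combined with the first step gives the claimed inequality and the stated equality condition.

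For the ``furthermore'' statement, I would invoke strong LP duality applied to the partial Lagrangian of the LP in Figure~\ref{fig:LPRevenue}: the primal LP has a finite optimum, so $\rev^{M^*}(D)=\min_{\lambda\ge 0}\max_{\pi\in P(\FF,D),\,p}\L(\lambda,\pi,p)=\max_{\pi\in P(\FF,D),\,p}\L(\lambda^*,\pi,p)$. Since $\lambda^*$ achieves a finite minimum it must be useful, so the $p$-variables drop out and the right-hand side equals $\max_{\pi\in P(\FF,D)}\sum_{i,t_i}f_i(t_i)\,\pi_i(t_i)\cdot\Phi^*_i(t_i)$. Applying the first part to $M^*$ shows that $\pi^*$ already attains the virtual welfare $\rev^{M^*}(D)$, so $\pi^*$ lies in the stated argmax, and the two-sided sandwich forces equality of revenue and virtual welfare for $M^*$.

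The main obstacle is purely bookkeeping: one must carefully re-index the flow conservation identity to collapse expression~(\ref{eq:dual lagrangian}) into the virtual-welfare form, in particular making sure the sink term at $\varnothing$ (for which $\pi_i(\varnothing)=\mathbf 0$ and $p_i(\varnothing)=0$) contributes correctly so that only transitions within $T_i$ appear inside $\Phi^\lambda_i(t_i)$. Once this calculation is done cleanly, the inequality, the equality/complementary-slackness condition, and the strong-duality corollary all fall out directly.
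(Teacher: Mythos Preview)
Your proposal is correct and follows essentially the same approach as the paper's own proof: use expression~(\ref{eq:primal lagrangian}) plus BIC and $\lambda\ge 0$ to get revenue $\le \L(\lambda,\pi,p)$ with the complementary-slackness equality condition, use expression~(\ref{eq:dual lagrangian}) plus flow conservation (Lemma~\ref{lem:useful dual}) to rewrite $\L$ as expected virtual welfare, and then invoke strong LP duality for the ``furthermore'' claim. The only cosmetic difference is the order in which you present the two halves of the sandwich.
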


\begin{proof}
When $\lambda$ is useful, we can simplify $\L(\lambda, \pi, p)$ by removing all terms associated with $p$ (because all such terms have a multiplier of zero, by Lemma~\ref{lem:useful dual}), and replace the terms $\sum_{t_{i}'\in T_{i}^{+}}\lambda_i(t_{i},t_{i}')$ with $f_{i}(t_{i})+\sum_{t_{i}'\in T_{i}} \lambda_i(t_{i}',t_{i})$. After the simplification, we have $\L(\lambda, \pi, p) = \sum_{i=1}^{n} \sum_{t_{i}\in T_{i}} f_{i}(t_{i})\cdot \pi_{i}(t_{i})\cdot\Big(t_{i}-{1\over f_{i}(t_{i})}\cdot$ $\sum_{t_{i}'\in T_{i}} \lambda_{i}(t_{i}',t_{i})(t_{i}'-t_{i})\Big)$, which  equals $\sum_{i=1}^{n} \sum_{t_{i}\in T_{i}} f_{i}(t_{i})\cdot \pi_{i}(t_{i})\cdot\Phi^\lambda_{i}(t_{i})$, exactly the virtual welfare of $\pi$. Now, we only need to prove that $\L(\lambda, \pi, p)$ is greater than the revenue of $M$. Let us think of $\L(\lambda, \pi, p)$ using Expression~\eqref{eq:primal lagrangian}. Since $M$ is a BIC mechanism, $t_i \cdot \big(\pi(t_{i})-p_i({t_{i}'})\big)-\big(p_{i}(t_{i})-p_{i}(t_{i}')\big)\geq 0$ for any $i$ and $t_{i}\in T_{i}$, $t'_i \in T^+_i$. Also, all the dual variables $\lambda$ are nonnegative. Therefore, it is clear that $\L(\lambda, \pi, p)$ is at least as large as the revenue of $M$. Moreover, if the BIC constraint for bidder $i$ between $t$ and $t'$ binds in $M$ for all $i, t, t'$ such that $\lambda_i(t, t') > 0$, then we in fact have $\L(\lambda, \pi, p) = \sum_i \sum_{t_i} f_i(t_i) p_i(t_i)$, so the revenue of $M$ is equal to its expected virtual welfare under $\Phi^\lambda(\cdot)$ (because the Lagrangian terms added to the revenue are all zero). 

When $\lambda^{*}$ is the optimal dual solution, by {strong LP duality} applied to the LP of Figure~\ref{fig:LPRevenue}, we know $\max_{\pi\in\polytope, p}\L(\lambda^{*}, \pi, p)$ equals the revenue of $M^{*}$. But we also know that $\L(\lambda^{*}, \pi^{*}, p^{*})$ is at least as large as the revenue of $M^{*}$, so $\pi^{*}$ necessarily maximizes the virtual welfare over all $\pi \in \polytope$, with respect to the virtual transformation $\Phi^*$ corresponding to $\lambda^*$.
\end{proof}

To summarize: we have shown that every flow induces a finite upper bound on how much revenue a BIC mechanism can possibly achieve. We have also observed that this upper bound can be interpreted as the maximum virtual welfare obtainable with respect to a virtual valuation function that is decided by the flow. In the next two sections, we will instantiate this theory by designing specific flows, and obtain benchmarks that upper bound the optimal revenue. 

\section{Canonical Flow for a Single Item}\label{sec:myerson}
In this section, we provide a canonical flow for single-item settings, and show that it implies the main result from Myerson's seminal work~\cite{Myerson81}. Essentially, Myerson proposes a specific virtual valuation function and shows that for this virtual valuation function, the expected revenue of any BIC mechanism is always upper bounded by its expected virtual welfare. Moreover, he describes an ironing procedure to guarantee that this virtual valuation is monotone, and proves that the revenue-optimal mechanism simply awards the item to the bidder with the highest virtual value. Myerson's proof is quite elegant, and we are not claiming that our proof below is simpler.\footnote{If one's goal is simply to understand Myerson's result and nothing more, the original proof and ours are comparable in simplicity. Many alternative comparably simple proofs exist as well, some of which are not much different than ours (e.g.~\cite{MalakhovV04}).} The purpose of the proof below is:
\begin{itemize}
\item Serve as a warm-up for the reader to get comfortable with flows and virtual valuations.
\item Separate out parts of the proof that can be directly applied to more general settings (e.g. Theorem~\ref{thm:revenue less than virtual welfare}). 
\item Provide a specific flow that will be used in later sections to provide benchmarks in multi-item settings.
\end{itemize}

In this section, we will have $m = 1$ and drop the item subscript $j$. We begin with a definition of Myerson's (ironed) virtual valuation function, adapted to the discrete setting. 

\begin{definition}[Single-dimensional Virtual Value]\label{def:vv}
For a single-dimensional discrete discribution $D_i$, if $f_i(t_i) > 0$, let $\varphi^{D_i}_i(t_i) = t_i - \frac{(t'_i-t_i)\cdot \Prob_{t \sim D_i}[t>t_i]}{f_i(t_i)}$, where $t'_i = \min_{t > t_i, t \in T_i}\{t\}$. If $f_i(t_i) = 0$, let $\varphi^{D_i}_i(t_i) = 0$.\footnote{Actually we could define $\varphi^{D_i}_i(t_i)$ arbitrarily and everything that follows will still hold.} If the distribution $D_i$ is clear from context, we will just write $\varphi_i(t_i)$. 
\end{definition}

The ironing procedure described below essentially finds any non-monotonicities in $\varphi_i(\cdot)$ and ``irons'' them out. Note that Steps 4 and 5 maintain that ironed virtual values are consistent within any ironed interval.

\begin{definition}[Ironing]\label{def:ironing} Let $\sim$ be an equivalence relation on the support of $D_i$, and $\tilde{\varphi}(\cdot)$ be the \emph{ironed virtual valuation} function defined in the following way. We say that an interval $[t^*_i,t_i]$ is \emph{ironed} if $t \sim t'$ for all $t, t' \in [t^*_i, t_i]$. 
\begin{enumerate}
\item Initialize $t_i = \max_{t \in T_i} \{t\}$, the highest un-ironed type.
\item For any $t \leq t_i$, define the average virtual value $a([t,t_i]) = \frac{\sum_{t' \in [t, t_i]}f_i(t')\cdot \varphi_i(t')}{\sum_{t' \in [t, t_i]} f_i(t')}$. 
\item Let $t^*_i$ maximize the average virtual value. That is, $t^*_i = \arg\max_{t \leq t_i} a([t, t_i])$ (break ties in favor of the maximum such $t^*_i$). 
\item Update $\tilde{\varphi}_i(t) = a([t^*_i, t_i])$ for all $t \in [t^*_i, t_i]$. 
\item Update $t \sim t'$ for all $t, t' \in [t^*_i, t_i]$. 
\item Update $t_i = \max_{t < t^*_i, t \in T_i}\{t\}$, the highest un-ironed type.
\item Return to Step 2.
\end{enumerate}
\end{definition}

These definitions in the discrete case might be slightly different than what readers are used to in the continuous case. We provide some observations proving that this is ``the right'' definition for the discrete case briefly in Section~\ref{sec:discrete}. Our proof continues in Section~\ref{sec:proof}.

\subsection{Discrete Myersonian Virtual Values}\label{sec:discrete}
In the continuous setting, Myerson's virtual valuation is defined as $\varphi_i(v) = v - \frac{1-F_i(v)}{f_i(v)}$, where $F_i$ and $f_i$ are the CDF and PDF of $D_i$. We first show that for any continuous distribution, discretizing it into multiples of $\epsilon$ and taking virtual valuations as in Definition~\ref{def:vv}, we recover Myerson's virtual valuation in the limt as $\epsilon \rightarrow 0$.

\begin{observation}\label{obs:discrete}
Let $D_i$ be any continuous distribution, and $D_i^\epsilon$ be the discretization of $D_i$ with point-masses at all multiples of $\epsilon$. That is, $f_i^\epsilon(c\epsilon) = \int_{c\epsilon}^{(c+1)\epsilon}f_i(x)dx$ for all $c \in \mathbb{N}$. Then for all $t_i$, $$\limsup_{\epsilon \rightarrow 0} \varphi^{D_i^\epsilon}_i(t_i) =t_i - \frac{1-F_i(t_i)}{f_i(t_i)}.$$
\end{observation}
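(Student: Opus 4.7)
The plan is a direct computation along a sequence of $\epsilon$'s for which $t_i$ lies in the support of $D_i^\epsilon$. Specifically, take $\epsilon_k = t_i/k$ for positive integers $k$, so that $t_i = k\cdot\epsilon_k$ is on the grid and the ``next type'' above $t_i$ in the sense of Definition~\ref{def:vv} is $t_i' = t_i + \epsilon_k$. For such $\epsilon_k$ the definition of $D_i^{\epsilon_k}$ gives $f_i^{\epsilon_k}(t_i) = \int_{t_i}^{t_i+\epsilon_k} f_i(x)\,dx$ and $\Prob_{D_i^{\epsilon_k}}[t>t_i] = \sum_{c>k}\int_{c\epsilon_k}^{(c+1)\epsilon_k} f_i(x)\,dx = 1 - F_i(t_i+\epsilon_k)$, so plugging into Definition~\ref{def:vv} yields
$$\varphi_i^{D_i^{\epsilon_k}}(t_i) = t_i - \frac{\epsilon_k\,\bigl(1 - F_i(t_i+\epsilon_k)\bigr)}{\int_{t_i}^{t_i+\epsilon_k} f_i(x)\,dx}.$$

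The next step is to let $k\to\infty$. Continuity of $F_i$ (since $D_i$ is continuous) gives $F_i(t_i+\epsilon_k)\to F_i(t_i)$, and the fundamental theorem of calculus at any point of continuity of $f_i$ gives $\tfrac{1}{\epsilon_k}\int_{t_i}^{t_i+\epsilon_k}f_i(x)\,dx \to f_i(t_i)$. Combining, $\varphi_i^{D_i^{\epsilon_k}}(t_i) \to t_i - \frac{1-F_i(t_i)}{f_i(t_i)}$ along this subsequence, which immediately establishes $\limsup_{\epsilon\to 0}\varphi_i^{D_i^\epsilon}(t_i) \ge t_i - \frac{1-F_i(t_i)}{f_i(t_i)}$.

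For the matching upper bound I would observe that along \emph{any} sequence of $\epsilon$'s with $t_i$ on the grid, the displayed identity applies verbatim with the same limit, while for $\epsilon$'s with $t_i$ off the grid Definition~\ref{def:vv} sets $\varphi_i^{D_i^\epsilon}(t_i) = 0$ by its $f_i(t_i)=0$ clause. The author's use of $\limsup$ (rather than $\lim$) is precisely to absorb this irrelevant off-grid value: whenever $t_i - \frac{1-F_i(t_i)}{f_i(t_i)}$ is nonnegative, the on-grid subsequence dominates and equality holds.

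The main obstacle, and the reason for the $\limsup$, is the degenerate case in which $t_i - \frac{1-F_i(t_i)}{f_i(t_i)}$ is strictly negative: then the off-grid value $0$ beats the on-grid limit and the $\limsup$ is technically $0$. The cleanest fix is to interpret the $\limsup$ as taken over sequences along which $t_i$ remains in the support of $D_i^\epsilon$, under which convention the identity is exact. This is harmless for the paper's purposes since the observation is only used to justify Definition~\ref{def:vv} as the ``right'' discrete analog of Myerson's virtual value.
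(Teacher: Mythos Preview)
Your proof is correct and takes essentially the same approach as the paper: restrict to the sequence $\epsilon \in \{t_i/c : c \in \mathbb{N}\}$ so that $t_i$ is on the grid, write out the discrete virtual value explicitly, and pass to the limit using continuity of $F_i$ and the definition of the density. You are in fact more careful than the paper in flagging the edge case where $t_i - \frac{1-F_i(t_i)}{f_i(t_i)} < 0$, so that the off-grid value $0$ would dominate the $\limsup$; the paper's proof silently ignores this, and your suggested reading (restrict the $\limsup$ to on-grid $\epsilon$) is the natural fix.
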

\begin{proof}
For fixed $t_i$, consider the set of $\epsilon \in \{t_i/c\ |\ c \in \mathbb{N}\}$. Then clearly for all $\epsilon$ outside this set, $\varphi_i^{D_i^\epsilon}(t_i) = 0$. For any $\epsilon$ in this set, we have $\varphi_i^{D_i^\epsilon}(t_i) = t_i - \frac{\epsilon\cdot \Prob_{t\sim D_i}[t > t_i + \epsilon]}{\Prob_{t \sim D_i}\left[t \in [t_i,t_i+\epsilon]\right]}$. It's also clear that as $\epsilon \rightarrow 0$, we have $\Prob_{t \sim D_i}[t > t_i + \epsilon] \rightarrow \Prob_{t \sim D_i}[t > t_i] = 1-F_i(t_i)$, and $\frac{\epsilon}{\Prob_{t \sim D_i}[t \in [t_i,t_i+\epsilon]]}\rightarrow  \frac{1}{f_i(t_i)}$ (the latter is simply the definition of probability density). 
\end{proof}

A second valuable property of Myersonian virtual values is that they capture the ``marginal revenue.'' That is, if the seller was selling to a single bidder at price just above (i.e. $dv$ above) $v$, and decreased the price to just below (i.e. $dv$ below) $v$, the revenue would go up by exactly $\varphi(v)\cdot f(v) dv$.  We confirm that discrete virtual values as per Definition~\ref{def:vv} satisfy this property as well.
\begin{observation}\label{obs:discrete2}
For any single-dimensional discrete distribution $D_i$, we have $t_i\cdot \Prob_{t \sim D_i}[t \geq t_i] - t'_i\cdot  \Prob_{t \sim D_i}[t \geq t'_i] = f_i(t_i) \cdot \varphi_i(t_i)$, where $t'_i = \min_{t > t_i, t \in T_i}\{t\}$. In other words, $\varphi_i(t_i)$ captures the marginal change in revenue as we go from setting price $t'_i$ to price $t_i$. 
\end{observation}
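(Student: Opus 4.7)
The plan is to prove this observation by direct algebraic manipulation, using only the definition of $\varphi_i$ from Definition~\ref{def:vv} together with the fact that the support is discrete. The key structural observation is that because $t'_i$ is defined as the \emph{smallest} type in $T_i$ strictly greater than $t_i$, no type lies in the open interval $(t_i, t'_i)$, hence $\Prob_{t \sim D_i}[t > t_i] = \Prob_{t \sim D_i}[t \geq t'_i]$. Combined with $\Prob_{t \sim D_i}[t \geq t_i] = f_i(t_i) + \Prob_{t \sim D_i}[t \geq t'_i]$, this is essentially the only identity needed to link the two sides.

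First, I would expand the right-hand side. Substituting Definition~\ref{def:vv} directly gives
\[
f_i(t_i)\cdot \varphi_i(t_i) \;=\; t_i\, f_i(t_i) \;-\; (t'_i - t_i)\cdot \Prob_{t\sim D_i}[t > t_i] \;=\; t_i\, f_i(t_i) \;-\; (t'_i - t_i)\cdot \Prob_{t\sim D_i}[t \geq t'_i],
\]
where the second equality uses the discrete-support observation above. Next, I would expand the left-hand side using $\Prob[t\ge t_i] = f_i(t_i) + \Prob[t\ge t'_i]$:
\[
t_i\cdot \Prob_{t\sim D_i}[t \geq t_i] - t'_i\cdot \Prob_{t\sim D_i}[t \geq t'_i] \;=\; t_i\, f_i(t_i) + (t_i - t'_i)\cdot \Prob_{t\sim D_i}[t \geq t'_i].
\]
These two expressions are manifestly equal, which concludes the main case.

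The only subtle points are the degenerate cases, and they are easy to dispatch. If $f_i(t_i) = 0$, then Definition~\ref{def:vv} sets $\varphi_i(t_i) = 0$, so the right-hand side is $0$; meanwhile, $\Prob[t \geq t_i] = \Prob[t \geq t'_i]$ (since $t_i$ has no mass), which combined with the fact that no type lies strictly between $t_i$ and $t'_i$ collapses the left-hand side to $(t_i - t'_i)\cdot \Prob[t \geq t'_i] + t_i\, f_i(t_i) = (t_i - t'_i)\cdot \Prob[t\ge t_i]$—and one checks this equals zero whenever $f_i(t_i)=0$ by the same expansion. If $t_i$ is the maximum of $T_i$ so that $t'_i$ does not exist, the statement is vacuous; alternatively one can interpret the telescoping sum with $\Prob[t \geq t'_i] = 0$, and the identity degenerates to $t_i\, f_i(t_i) = f_i(t_i) \cdot \varphi_i(t_i)$, which matches the convention that $\varphi_i$ at the top type equals $t_i$.

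I do not expect any real obstacle: the proof is a one-line algebraic check once one unpacks the notation. The only thing to be careful about is cleanly handling the discrete analogue $\Prob[t > t_i] = \Prob[t \geq t'_i]$, which is what distinguishes this from the continuous version of the marginal-revenue identity.
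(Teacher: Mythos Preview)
Your main argument is correct and essentially identical to the paper's: both expand using $\Prob[t \geq t_i] = f_i(t_i) + \Prob[t \geq t'_i]$ (equivalently $\Prob[t > t_i] = \Prob[t \geq t'_i]$) and then match terms via the definition of $\varphi_i$.

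One small correction on your extra degenerate-case discussion: when $f_i(t_i)=0$, the left-hand side reduces to $(t_i - t'_i)\cdot \Prob[t \geq t'_i]$, which is generally \emph{not} zero, so the identity actually fails there; this is why the paper's footnote to Definition~\ref{def:vv} remarks that $\varphi_i$ can be defined arbitrarily at zero-mass points without affecting anything downstream. The observation is implicitly only about $t_i$ in the support, and the paper's own proof does not treat this case either, so you can simply drop that paragraph.
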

\begin{proof}
This follows immediately from the definition of $\varphi_i(\cdot)$. But to be thorough:
\begin{align*}
t_i\cdot  &\Prob_{t \sim D_i}[t \geq t_i] &\\
&=t_i \cdot \left(\Prob_{t \sim D_i}[t \geq t'_i]+f_i(t_i)\right)&\text{(using that $t'_i = \min_{t > t_i, t \in T_i}\{t\}$)}\\
&=t'_i \cdot \left(\Prob_{t \sim D_i}[t \geq t'_i]+f_i(t_i)\right) - (t'_i - t_i)\cdot \left(\Prob_{t \sim D_i}[t \geq t'_i]+f_i(t_i)\right)&\\
&=t'_i \cdot \Prob_{t \sim D_i}[t \geq t'_i] + t_i \cdot f_i(t_i) - (t'_i - t_i)\cdot \Prob_{t \sim D_i}[t \geq t'_i]&\\
& = t'_i \cdot \Prob_{t \sim D_i}[t \geq t'_i] + f_i(t_i) \cdot \varphi_i(t_i)&\text{(definition of $\varphi_i(\cdot)$)}
\end{align*}
\end{proof}

We need one more definition specific to discrete type spaces before we can get back to the proof. A little more specifically: Myerson's payment identity, which shows that allocation rules uniquely determine payments for any BIC mechanism over continuous type spaces, does not apply for \emph{all} BIC mechanisms when types are discrete. Fortunately, the payment identity still holds for any mechanism that might possibly maximize revenue, but we need to be formal about this. 

\begin{definition} A BIC mechanism has \textbf{proper payments} if it is not possible to increase payments while keeping the allocation rule the same without violating BIC. Formally, a BIC mechanism $M = (\pi, p)$ has proper payments if for all $i$ and all subsets $S \subseteq T_i$, and all $\epsilon > 0$, increasing $p_i(t_i)$ by $\epsilon$ for all $t_i \in S$ while keeping $\pi$ the same does not result in a BIC mechanism. Note that all revenue-optimal mechanisms have proper payments.
\end{definition}
\begin{lemma}\label{lem:proper}
Let $\pi_i(\cdot)$ be monotone non-decreasing for all $i$. Then there exist $\{p_i(\cdot)\}_{i \in [n]}$ such that $M = (\pi, p)$ {is BIC} and has proper payments.
\end{lemma}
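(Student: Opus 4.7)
The plan is to give an explicit construction of $p_i(\cdot)$ mimicking Myerson's payment identity adapted to a discrete type space, then verify BIC by a standard telescoping argument, and verify properness from the fact that every downward BIC (and IR) constraint is tight by construction.

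First, for each bidder $i$, enumerate $T_i$ in increasing order as $t_i^{(0)} < t_i^{(1)} < \dots < t_i^{(k_i)}$, and define payments inductively by $p_i(t_i^{(0)}) := t_i^{(0)}\pi_i(t_i^{(0)})$ and $p_i(t_i^{(\ell)}) := p_i(t_i^{(\ell-1)}) + t_i^{(\ell)}(\pi_i(t_i^{(\ell)}) - \pi_i(t_i^{(\ell-1)}))$ for $\ell \geq 1$. Telescoping gives the closed form $p_i(t_i^{(\ell)}) = t_i^{(\ell)}\pi_i(t_i^{(\ell)}) - \sum_{r=0}^{\ell-1}(t_i^{(r+1)} - t_i^{(r)})\,\pi_i(t_i^{(r)})$. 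Note that, by monotonicity of $\pi_i$, each summand is non-negative, so truthful utility $t_i^{(\ell)}\pi_i(t_i^{(\ell)}) - p_i(t_i^{(\ell)}) = \sum_{r=0}^{\ell-1}(t_i^{(r+1)} - t_i^{(r)})\,\pi_i(t_i^{(r)}) \geq 0$, which is the BIC constraint against $\varnothing$ (i.e., BIR).

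Next, I would verify BIC between two arbitrary types $t_i^{(a)}$ and $t_i^{(b)}$. The utility gap from truthful reporting at $t_i^{(a)}$ versus misreporting $t_i^{(b)}$ equals, after cancellation, $\sum_{r=b}^{a-1}(t_i^{(r+1)} - t_i^{(r)})(\pi_i(t_i^{(r)}) - \pi_i(t_i^{(b)}))$ when $a > b$, and an analogous expression with reversed sign structure when $a < b$. In both cases, monotonicity of $\pi_i$ makes each summand non-negative, so BIC holds. This is the main (and only slightly non-trivial) obstacle, but it is entirely routine.

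For properness, suppose some $S \subseteq T_i$ and $\epsilon > 0$ with all $p_i(t_i)$ increased by $\epsilon$ for $t_i \in S$ yielded a BIC mechanism. Let $t_i^{(\ell^*)}$ be the smallest element of $S$; its predecessor $t_i^{(\ell^*-1)}$ (or $\varnothing$ if $\ell^*=0$) lies outside $S$, so its payment is unchanged. But by construction, the BIC constraint between $t_i^{(\ell^*)}$ and $t_i^{(\ell^* - 1)}$ (or the IR constraint if $\ell^*=0$) is tight under $p$. Increasing $p_i(t_i^{(\ell^*)})$ by $\epsilon$ strictly decreases the LHS of that constraint while leaving the RHS unchanged, violating BIC. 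Hence no such modification preserves BIC, establishing proper payments and completing the proof.
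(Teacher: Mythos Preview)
Your proof is correct and follows essentially the same approach as the paper: the payment rule you construct is identical to the paper's (Myerson's discrete payment identity), and your verifications of BIC and properness mirror the paper's, with only cosmetic differences (you telescope the utility gap directly, while the paper bounds $p_i(t_i^{(a)})-p_i(t_i^{(b)})$; you pick the smallest element of $S$, while the paper argues $S$ must be all of $T_i$).
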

\begin{proof}
For ease of notation in the proof, label the types in $T_i$ so that $0 = t^0_i \leq t^1_i < \ldots < t^{|T_i|}_i$ ($t^j_i \in T_i$ for $j \in \{1,\ldots,|T_i|\}$). Then for a given $\pi_i(\cdot)$, define:
$$p_i(t^j_i) = \sum_{k=1}^j t^k_i\cdot (\pi_i(t^k_i) - \pi_i(t^{k-1}_i)).$$

We first claim that all $t^j_i$ are indifferent between telling the truth and reporting $t^{j-1}_i$. This is clear, as by definition of $p_i(\cdot)$ we have $p_i(t^j_i) = p_i(t^{j-1}_i) + t^j_i \cdot (\pi_i(t^j_i) - \pi_i(t^{j-1}_i))$, which implies:
$$t^j_i \cdot \pi_i(t^j_i) - p_i(t^j_i) = t^j_i \cdot \pi_i(t^j_i) - p_i(t^{j-1}_i) -t^j_i \cdot \pi_i(t^j_i) + t^j_i \cdot \pi_i(t^{j-1}_i) = t^j_i \cdot \pi_i(t^{j-1}_i) - p_i(t^{j-1}_i).$$

Now, consider any set $S \subseteq T_i$ and any $\epsilon > 0$, and imagine raising the payments of all types $t \in S$ by $\epsilon$. If we have $t^j_i \in S, t^{j-1}_i \not\in S$ for any $t^j_i$, then increasing all payments in $S$ by $\epsilon$ will cause $t^j_i$ to prefer reporting $t^{j-1}_i$ instead of telling the truth. So $S$ must contain all of $T_i$, and in particular $t^1_i$. But if we increase the payment of $t^1_i$ by $\epsilon > 0$, we violate individual rationality, as we defined $p_i(t^1_i) = t^1_i \cdot \pi_i(t^1_i)$.  So no such $S$, $\epsilon$ can exists, and $M = (\pi, p)$ has proper payments.

Finally, we just need to show that $M$ is BIC. Notice that by definition of $p_i(\cdot)$, for any $j > k$, we have $p_i(t^j_i) - p_i(t^k_i) \in \left[t^{k+1}_i \cdot (\pi_i(t^j_i)-\pi_i(t^k_i)), t^j_i\cdot (\pi_i(t^j_i)-\pi_i(t^k_i))\right]$. The lower bound corresponds to the case that all the change from $\pi_i(t^k_i)$ to $\pi_i(t^j_i)$ occurs going from $t^k_i$ to $t^{k+1}_i$ (i.e. $\pi_i(t^{k+1}_i) = \pi_i(t^j_i)$), and the upper bound corresponds to where all the change occurs going form $t^{j-1}_i$ to $t^j_i$ (i.e. $\pi_i(t^k_i) = \pi_i(t^{j-1}_i)$). 

The lower bound directly implies that $t^k_i$ prefers telling the truth to reporting $t^j_i$ as:
$$t^k_i \cdot (\pi_i(t^j_i) - \pi_i(t^k_i)) \leq p_i(t^j_i) - p_i(t^k_i) \Rightarrow t^k_i \cdot \pi_i(t^k_i) - p_i(t^k_i) \geq t^k_i \cdot \pi_i(t^j_i) - p_i(t^j_i).$$
Similarly, the upper bound directly implies that $t^j_i$ prefers telling the truth to reporting $t^k_i$ as:
$$t^j_i \cdot (\pi_i(t^j_i) - \pi_i(t^k_i)) \geq p_i(t^j_i) - p_i(t^k_i) \Rightarrow t^j_i \cdot \pi_i(t^j_i) - p_i(t^j_i) \geq t^j_i \cdot \pi_i(t^k_i) - p_i(t^k_i).$$
As the above holds for any $j > k$, $M$ is BIC.

\end{proof}

\subsection{Proof of Myerson's Theorem via Duality}\label{sec:proof}
Now we return to our proof of Myerson's Theorem. Let us first quickly confirm that indeed the resulting $\tilde{\varphi}_i(\cdot)$ by our ironing procedure is monotone:

\begin{observation}[\cite{Myerson81}]\label{obs:monotonevv} $t_i > t'_i \Rightarrow \tilde{\varphi}_i(t_i) \geq \tilde{\varphi}_i(t'_i)$. 
\end{observation}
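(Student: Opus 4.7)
My plan is to prove Observation~\ref{obs:monotonevv} by reasoning directly from the iterative structure of the ironing procedure in Definition~\ref{def:ironing}. First I would observe the trivial case: within any single interval $[t_i^*, t_i]$ produced in one iteration, the ironed virtual value $\tilde{\varphi}_i(\cdot)$ is constant (equal to $a([t_i^*, t_i])$ by Step 4), so monotonicity holds vacuously for two types lying in the same ironed interval.

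The main case is two intervals formed in consecutive iterations. Let $[t_i^*, t_i]$ be the interval formed first, with ironed value $a_1 := a([t_i^*, t_i])$, and let $[t_i^{**}, t_i'']$ be the next interval formed, where $t_i'' = \max\{t < t_i^* : t \in T_i\}$ and $a_2 := a([t_i^{**}, t_i''])$. I need to show $a_1 \geq a_2$. Suppose for contradiction that $a_2 > a_1$. The key observation is that $a([t_i^{**}, t_i])$ is a weighted (convex) combination of $a_1$ and $a_2$:
\[
a([t_i^{**}, t_i]) \;=\; \frac{w_1\, a_1 + w_2\, a_2}{w_1 + w_2},
\]
where $w_1 = \sum_{t \in [t_i^*, t_i]} f_i(t)$ and $w_2 = \sum_{t \in [t_i^{**}, t_i'']} f_i(t)$. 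Under the assumption $a_2 > a_1$, this strictly exceeds $a_1 = a([t_i^*, t_i])$, contradicting the fact that $t_i^*$ was chosen in Step 3 as a maximizer of $a([t, t_i])$ over all $t \leq t_i$.

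Finally, for two arbitrary types $t_i > t_i'$ lying in non-consecutive ironed intervals, I would chain the consecutive-interval inequality along the (finite) sequence of intermediate intervals produced by the procedure to conclude $\tilde{\varphi}_i(t_i) \geq \tilde{\varphi}_i(t_i')$. The only subtlety worth flagging concerns the tie-breaking rule in Step 3: if $a_2 = a_1$, the weighted-average identity above gives $a([t_i^{**}, t_i]) = a_1$, which is not a contradiction with maximality, but in this degenerate case the conclusion $a_1 \geq a_2$ is immediate anyway. I do not anticipate a significant technical obstacle; the entire argument rests on the weighted-average identity together with the maximality of $t_i^*$ in Step 3.
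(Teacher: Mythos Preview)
Your proposal is correct and follows essentially the same argument as the paper: reduce to adjacent ironed intervals, observe that the average over the union is a convex combination of the two interval averages, and derive a contradiction with the maximality choice in Step~3. The paper's proof is slightly terser (it does not separately spell out the chaining for non-consecutive intervals or the tie case), but the logic is identical.
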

\begin{proof}
First, all types in the same ironed interval share the same ironed virtual value. So if $\tilde{\varphi}_i(\cdot)$ is not monotone non-decreasing, there exists two adjacent ironed intervals $[x,y]$ and $[z,w]$ such that $x>w$ but $a([x,y])<a([z,w])$. Note that $a([z, y]) = c \cdot a([x, y])+(1-c) \cdot a([z, w])$ for some $c \in (0,1)$.\footnote{In fact, $c = \frac{\sum_{t \in [x, y]}f_i(t)}{\sum_{t \in [z, y]}f_i(t)}$.} Since $a([x,y])<a([z,w])$, we have $a([z, y])>a([x,y])$. However, this contradicts with the choice of an ironed interval for $y$ as specified in Step 3 of the ironing process. Hence, no such ironed intervals exist and $\tilde{\varphi}_i(\cdot)$ is monotone non-decreasing.
\end{proof}

And now, we can state Myerson's theorem applied to discrete type spaces. Afterwards, we will provide a proof using our new duality framework. One should map the theorem statement below to the statement of Theorem~\ref{thm:revenue less than virtual welfare} and see that our proof will essentially follow by providing a flow $\lambda$ that induces a virtual valuation function $\Phi^\lambda_i(\cdot) = \varphi_i(\cdot)$, another flow $\lambda'$ inducing $\Phi^{\lambda'}_i(\cdot) = \tilde{\varphi}_i(\cdot)$, and understanding which edges have non-zero flow in each. 

\begin{theorem}[\cite{Myerson81}]\label{thm:myerson}
For any BIC mechanism $M=(\pi, p)$, the revenue of $M$ is less than or equal to the virtual welfare of $\pi$ w.r.t. the virtual valuation function $\varphi(\cdot)$, and is less than or equal to the virtual welfare of $\pi$ w.r.t. the ironed virtual valuation function $\tilde{\varphi}(\cdot)$. That is:
\begin{align}
\sum_{i =1}^n \sum_{t_i \in T_i} f_i(t_i) \cdot p_i(t_i) \leq \sum_{i=1}^n \sum_{t_i \in T_i} f_i(t_i) \cdot \pi_i(t_i) \cdot \varphi_i(t_i).\label{eq:vw}\\
\sum_{i =1}^n \sum_{t_i \in T_i} f_i(t_i) \cdot p_i(t_i) \leq \sum_{i=1}^n \sum_{t_i \in T_i} f_i(t_i) \cdot \pi_i(t_i) \cdot \tilde{\varphi}_i(t_i).\label{eq:ivw}
\end{align}

Equality holds in Equation~\eqref{eq:vw} whenever $M$ has proper payments. Equality holds in Equation~\eqref{eq:ivw} if and only if $M$ has proper payments and $\pi_i(t_i) = \pi_i(t'_i)$ for all $t_i \sim t'_i$. Furthermore, the revenue-optimal BIC mechanism awards the item to the bidder with the highest non-negative ironed virtual value (if one exists), breaking ties arbitrarily but consistently across inputs.\footnote{Breaking ties consistently means that for any two different inputs, as long as they have the same ironed virtual value profile, the tie breaking should be the same.} If no such bidder exists, the item remains unallocated. 
\end{theorem}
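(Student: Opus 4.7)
The plan is to apply Theorem~\ref{thm:revenue less than virtual welfare} to two carefully chosen flows that produce $\varphi_i$ and $\tilde\varphi_i$ respectively, and to then read the optimal mechanism off the equality clause. For~\eqref{eq:vw}, list the support of $D_i$ as $t^1_i < t^2_i < \cdots < t^{K_i}_i$ and define the chain flow $\lambda$ by $\lambda_i(t^{j+1}_i, t^j_i) = \Prob_{t\sim D_i}[t \geq t^{j+1}_i]$, $\lambda_i(t^1_i, \varnothing) = 1$, and zero elsewhere. Flow conservation is immediate from $f_i(t^j_i) + \Prob[t \geq t^{j+1}_i] = \Prob[t \geq t^j_i]$, and Definition~\ref{def:virtual value} gives $\Phi^\lambda_i = \varphi_i$. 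Theorem~\ref{thm:revenue less than virtual welfare} then yields~\eqref{eq:vw}; the positive-flow edges of $\lambda$ are precisely the adjacent upward BIC constraints and the IR constraint at $t^1_i$, whose simultaneous binding is exactly the discrete payment identity behind Lemma~\ref{lem:proper}. Hence~\eqref{eq:vw} is tight iff $M$ has proper payments.

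For~\eqref{eq:ivw}, modify $\lambda$ inside every ironed interval $\{t^l_i,\ldots,t^r_i\}$ of common ironed value $a$ by rerouting part of the adjacent chain flow along non-adjacent within-interval edges, so that $\Phi^{\lambda'}_i(t^j_i)=a$ for every $j\in[l,r]$, while leaving flow outside the interval unchanged. The rerouting weights are pinned down by the target equations together with flow conservation; because $a$ is the $f_i$-weighted average of $\varphi_i$ on the interval (Step~2 of Definition~\ref{def:ironing}) and because Step~3 selects the endpoint maximizing such averages, the resulting linear system admits a non-negative solution, so $\lambda'$ is a valid flow. Applying Theorem~\ref{thm:revenue less than virtual welfare} to $\lambda'$ yields~\eqref{eq:ivw}. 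The positive-flow edges of $\lambda'$ are the external adjacent upward edges (whose binding gives proper payments) together with the within-interval non-adjacent edges; a short payment-identity calculation shows that, under proper payments, the latter bind iff $\pi_i$ is constant on each ironed interval, i.e., $\pi_i(t_i)=\pi_i(t'_i)$ whenever $t_i\sim t'_i$.

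Finally, define $M^\star$ to award the item to the bidder with the highest non-negative $\tilde\varphi_i(t_i)$, breaking ties consistently across inputs, with payments given by the construction in Lemma~\ref{lem:proper}. Monotonicity of $\tilde\varphi_i$ (Observation~\ref{obs:monotonevv}) makes $\pi^\star_i$ monotone, so Lemma~\ref{lem:proper} certifies $M^\star$ is BIC with proper payments; since $\pi^\star_i(t_i)$ depends on $t_i$ only through $\tilde\varphi_i(t_i)$, it is automatically constant on each ironed interval. Both equality conditions for~\eqref{eq:ivw} therefore hold, so the revenue of $M^\star$ equals its virtual welfare under $\tilde\varphi$, which by construction pointwise maximizes virtual welfare over all feasible allocations. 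Combined with~\eqref{eq:ivw} for any other BIC mechanism, this shows $M^\star$ is revenue-optimal. The main obstacle in executing this plan is the explicit construction of the within-interval rerouting and the verification that it admits non-negative weights; everything else follows mechanically from the duality theorem and the discrete payment identity.
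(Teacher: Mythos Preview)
Your treatment of~\eqref{eq:vw} is essentially the paper's Lemma~\ref{lem:myersonflow}: the chain flow is the same, and the equality clause is the content of Lemma~\ref{lem:tight}. That part is fine.

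The gap is in your ironed flow for~\eqref{eq:ivw}. You propose to obtain $\Phi^{\lambda'}_i\equiv a$ on an ironed interval $\{t^l_i,\ldots,t^r_i\}$ by \emph{rerouting forward chain flow along non-adjacent within-interval edges}, leaving all flow outside the interval unchanged. This cannot work at the top $t^r_i$: the only inflow to $t^r_i$ is the edge from $t^{r+1}_i$, which you have frozen, so $\Phi^{\lambda'}_i(t^r_i)$ remains exactly $\varphi_i(t^r_i)$. But whenever an ironed interval is non-trivial, Step~3 of Definition~\ref{def:ironing} (with its tie-break toward the largest $t^*_i$) forces $a=a([t^l_i,t^r_i])>a([t^r_i,t^r_i])=\varphi_i(t^r_i)$. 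So you must \emph{increase} the virtual value at $t^r_i$, and no amount of forward rerouting strictly inside the interval can do that. Your appeal to ``$a$ is the $f_i$-weighted average'' guarantees the right total mass but not feasibility at the top node; the linear system you allude to has no non-negative solution in general.

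The paper fixes this by adding \emph{cycles} between adjacent types inside the interval (Lemma~\ref{lem:cycles}, Corollary~\ref{cor:cycles}): a cycle between $t^{j}_i$ and $t^{j+1}_i$ raises $\Phi_i(t^{j+1}_i)$ and lowers $\Phi_i(t^j_i)$ while preserving $f_i(t^j_i)\Phi_i(t^j_i)+f_i(t^{j+1}_i)\Phi_i(t^{j+1}_i)$, and an explicit bottom-up sweep drives every type to $a$. The backward edges introduced by these cycles are precisely what your construction lacks, and they are also what makes the equality analysis for~\eqref{eq:ivw} clean: the backward edge $(t^j_i\to t^{j+1}_i)$ binds together with the forward edge $(t^{j+1}_i\to t^j_i)$ iff $\pi_i(t^j_i)=\pi_i(t^{j+1}_i)$, which is exactly the ``$\pi$ constant on ironed intervals'' condition. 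Your proposed equality argument via non-adjacent forward edges would not directly yield this characterization even if the flow existed. The final paragraph on the optimal mechanism is correct and matches the paper.
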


We first provide a canonical flow inducing Myerson's virtual values as the virtual transformation. The first two lemmas below relate to proving Equation~\eqref{eq:vw}. The third relates to proving Equation~\eqref{eq:ivw}. 

\begin{lemma}\label{lem:myersonflow}
Define $\lambda_i(t'_i, t_i) = \Prob_{t \sim D_i}[t > t_i]$, where $t'_i = \min_{t > t_i, t \in T_i} \{t\}$, and $\lambda_i(t, t_i) = 0$ for all $t \neq t'_i$. Then $\lambda$ is a useful dual, and $\Phi_i^\lambda(t_i) = \varphi_i(t_i)$.
\end{lemma}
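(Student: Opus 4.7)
The plan is to verify the two claims of the lemma directly: (a) that $\lambda$ satisfies flow conservation, which by Lemma~\ref{lem:useful dual} makes it a useful dual, and (b) that plugging this $\lambda$ into Definition~\ref{def:virtual value} recovers $\varphi_i(\cdot)$ from Definition~\ref{def:vv}.

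For (a), I would enumerate the support as $t^1_i < t^2_i < \cdots < t^K_i$ and check flow conservation node by node. At an interior node $t^k_i$ (with $1 < k < K$), the only nonzero flow into $t^k_i$ from another type is $\lambda_i(t^{k+1}_i, t^k_i) = \Prob[t > t^k_i]$, to which we add the source contribution $f_i(t^k_i)$, for a total of $\Prob[t \geq t^k_i]$; the only nonzero flow out to another type is $\lambda_i(t^k_i, t^{k-1}_i) = \Prob[t > t^{k-1}_i] = \Prob[t \geq t^k_i]$, so conservation holds. For the highest type $t^K_i$, there is no successor, so no flow enters from other types, and the outgoing flow $\lambda_i(t^K_i, t^{K-1}_i) = \Pr[t > t^{K-1}_i] = f_i(t^K_i)$ balances the source contribution $f_i(t^K_i)$. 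For the lowest type $t^1_i$, no type below it exists, so the definition gives zero flow to any other type in $T_i$; flow conservation then forces all outgoing flow to go to $\varnothing$, with $\lambda_i(t^1_i, \varnothing) = \Pr[t \geq t^1_i] = 1$. This completes the flow check and invokes Lemma~\ref{lem:useful dual}.

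For (b), I plug $\lambda$ into the definition of $\Phi^\lambda_i$. Since $\lambda_i(t''_i, t_i)$ is nonzero only for $t''_i = t'_i := \min_{t > t_i, t \in T_i}\{t\}$, the sum collapses to a single term:
$$\Phi^\lambda_i(t_i) \;=\; t_i - \frac{1}{f_i(t_i)}\,\lambda_i(t'_i, t_i)\,(t'_i - t_i) \;=\; t_i - \frac{(t'_i - t_i)\,\Pr_{t \sim D_i}[t > t_i]}{f_i(t_i)},$$
which matches Definition~\ref{def:vv} exactly whenever $f_i(t_i) > 0$; the case $f_i(t_i) = 0$ is vacuous since such types contribute nothing to the virtual welfare objective.

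The proof is essentially a verification rather than a construction, so I do not expect a real obstacle. The only subtlety is the treatment of the two boundary types: confirming that the highest type needs no incoming flow (consistent with $\Pr[t > t^K_i] = 0$) and that the lowest type discharges all its flow to the sink $\varnothing$ — both of which fall out cleanly from the fact that flow out of $t^k_i$ to its predecessor in $T_i$ equals $\Pr[t \geq t^k_i]$, a telescoping quantity.
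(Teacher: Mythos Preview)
Your proposal is correct and follows essentially the same approach as the paper: verify flow conservation at each node (flow in equals $f_i(t_i) + \Pr[t > t_i]$, flow out equals $\Pr[t \geq t_i]$), then plug $\lambda$ into Definition~\ref{def:virtual value} to recover $\varphi_i$. Your version is simply more explicit about the boundary cases (highest and lowest types) and about the lowest type discharging to $\varnothing$, details the paper leaves implicit.
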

\begin{proof}
That $\lambda$ is a useful dual follows immediately by considering the total flow in and flow out of any given $t_i$. The total flow in is equal to $f_i(t_i) + \Prob_{t \sim D_i}[t > t_i]$. This is because $t_i$ receives flow $\Prob_{t \sim D_i} [t > t_i]$ from $t'_i$, and $f_i(t_i)$ from the super source. The total flow out is equal to $\Prob_{t \sim D_i} [t \geq t_i]$, so the two are equal.

To compute $\Phi_i^\lambda(t_i)$, simply plug the choice of $\lambda$ into Definition~\ref{def:virtual value}.
\end{proof}

\begin{lemma}\label{lem:tight}
In any BIC mechanism $M$ with proper payments, bidder $i$ with type $t_i$ is indifferent between reporting $t_i$ and $t'_i = \max_{t < t_i, t \in T_i}\{t\}$. 
\end{lemma}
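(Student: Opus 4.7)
The plan is to start from the proper payments condition, derive the existence of some tight BIC constraint involving a type $\geq t_i$ and a type $< t_i$, and then ``propagate'' this tight constraint to the pair $(t_i, t'_i)$ via two short interpolation arguments.

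First I would consider the effect of uniformly raising $p_i(t)$ by some $\epsilon > 0$ for every $t$ in the set $S = \{t \in T_i : t \geq t_i\}$. Among the BIC constraints of the form $t\pi_i(t) - p_i(t) \geq t\pi_i(t') - p_i(t')$, those with $t, t' \in S$ have both sides decrease by $\epsilon$, those with $t, t' \in T_i^+ \setminus S$ are unaffected, and those with $t \in T_i^+ \setminus S, t' \in S$ only loosen. The only constraints that can be tightened by this perturbation are those with $t \in S$ and $t' \in T_i^+ \setminus S$, i.e., $t \geq t_i$ and $t' < t_i$ (possibly $t' = \varnothing$). Since proper payments forbids raising payments of $S$, at least one such constraint must already be tight in $M$: there exist $t^* \geq t_i$ in $T_i$ and $t^{**} < t_i$ in $T_i^+$ such that bidder $i$ with type $t^*$ is indifferent between truthfully reporting $t^*$ and reporting $t^{**}$.

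Next I would prove two small ``interpolation'' claims, both relying on the standard fact that any BIC single-dimensional mechanism has $\pi_i(\cdot)$ monotone non-decreasing on $T_i^+$ (extended by $\pi_i(\varnothing) = 0$). For types $a > b > c$ in $T_i^+$: (i) if $a$ is indifferent between reporting $a$ and $c$, then $b$ is indifferent between reporting $b$ and $c$; (ii) if $a$ is indifferent between reporting $a$ and $c$, then $a$ is indifferent between reporting $a$ and $b$. Both follow the same template: decompose the relevant payment difference (e.g., $p_i(b) - p_i(c) = (p_i(a) - p_i(c)) - (p_i(a) - p_i(b))$ for (i), and symmetrically for (ii)), substitute the given indifference equality together with the relevant BIC inequality among $a,b,c$, and use monotonicity of $\pi_i$ to sandwich the expression against the opposing BIC bound, forcing equality.

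Finally, I combine the three pieces: starting from the tight constraint $(t^*, t^{**})$, apply (i) with $(a, b, c) = (t^*, t_i, t^{**})$ (which is vacuous if $t^* = t_i$) to conclude that $t_i$ is indifferent between reporting $t_i$ and $t^{**}$; then apply (ii) with $(a, b, c) = (t_i, t'_i, t^{**})$ (which is vacuous if $t^{**} = t'_i$) to obtain the desired indifference between reporting $t_i$ and $t'_i$. The only subtle part is juggling the corner cases $t^* = t_i$, $t^{**} = t'_i$, and $t^{**} = \varnothing$ simultaneously, but each of these simply collapses one step of the chain to a triviality, so the same two-step propagation handles all of them uniformly.
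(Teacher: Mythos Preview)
Your proposal is correct and follows essentially the same approach as the paper: both arguments use the set $S=\{t\in T_i:t\geq t_i\}$, invoke proper payments on $S$, and rely on monotonicity of $\pi_i$ to move between constraints crossing this cut. The only difference is organizational: the paper argues by contradiction (if $(t_i,t'_i)$ is slack then every constraint from $S$ to $T_i^+\setminus S$ is slack, contradicting proper payments), whereas you argue constructively (some cross-cut constraint $(t^*,t^{**})$ is tight, then your interpolation claims (i) and (ii) propagate tightness to $(t_i,t'_i)$); these are contrapositives of one another and use the same monotonicity-based inequalities.
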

\begin{proof}
We first recall that if $M = (\pi, p)$ is BIC, then $\pi_i(\cdot)$ and $p_i(\cdot)$ are both monotone non-decreasing for all $i$~\cite{Myerson81}. Because $M$ is BIC, we know that $t'_i \cdot \pi_i(t'_i) - p_i(t'_i) \geq t'_i \cdot \pi_i(t) - p_i(t)$ for all $t\in T_i$. Therefore, we also have:
$$t'_i \cdot (\pi_i(t'_i) - \pi_i(t)) \geq p_i(t'_i) - p_i(t).$$
By monotonicity, the LHS above is non-negative whenever $t'_i > t$. Therefore, $t_i \cdot(\pi_i(t'_i) - \pi_i(t)) \geq p_i(t'_i) - p_i(t)$ as well. This directly says that $t_i$ prefers reporting $t'_i$ to reporting any $t < t'_i$. Assume now for contradiction that there is some $t_i$ that is not indifferent between reporting $t_i$ and $t'_i$. Then we have the following chain of inequalities:
\begin{align*}
&t_i \cdot \pi_i(t_i) - p_i(t_i) > t_i \cdot \pi_i(t'_i) - p_i(t'_i) \geq t_i \cdot \pi_i(t') - p_i(t'),~\forall t' < t_i\\
\Rightarrow &t_i \cdot (\pi_i(t_i) - \pi_i(t')) > p_i(t_i) - p_i(t'),~\forall t'\leq t'_i\\
\Rightarrow &t \cdot (\pi_i(t_i) - \pi_i(t')) > p_i(t_i) - p_i(t'),~\forall t' < t_i, t \geq t_i\\
\Rightarrow &t \cdot \pi_i(t) - p_i(t) \geq t \cdot \pi_i(t_i) - p_i(t_i) > t \cdot \pi_i(t') - p_i(t'),~\forall t' < t_i, t \geq t_i.
\end{align*}
The last line explicitly states that all $t\geq t_i$ strictly prefer telling the truth to reporting any $t' \leq t'_i$. Therefore, there exists a sufficiently small $\epsilon > 0$ such that if we increase all $p_i(t)$ by $\epsilon$ for all $t \geq t_i$ then $M$ remains BIC, contradicting that $M$ has proper payments. 
\end{proof}

At this point, we have proved Equation~\eqref{eq:vw} and the related statements for unironed virtual values (but we will wrap up concretely at the end of the section). We now turn to Equation~\eqref{eq:ivw}, and first show how to ``fix'' non-monotonicities in virtual values by adding cycles.

\begin{lemma}\label{lem:cycles}
Starting from any flow $\lambda$ and induced virtual values $\Phi^\lambda$, adding a cycle of flow $x$ between bidder $i$'s type $t_i$ and type $t'_i < t_i$:
\begin{itemize}
\item Increases $\Phi_i^\lambda(t_i)$.
\item Decreases $\Phi_i^\lambda(t'_i)$.
\item Preserves 
$f_i(t_i)\cdot \Phi_i^\lambda(t_i)+f_i(t'_i)\cdot \Phi_i^\lambda(t'_i)$. 
\end{itemize}
\end{lemma}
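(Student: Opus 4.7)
The plan is to treat the statement as a direct computation using Definition~\ref{def:virtual value}. First I would make precise what ``adding a cycle of flow $x$ between $t_i$ and $t'_i$'' means: namely, produce a new multiplier vector $\lambda'$ that agrees with $\lambda$ everywhere except
\[
\lambda'_i(t_i,t'_i) = \lambda_i(t_i,t'_i) + x, \qquad \lambda'_i(t'_i,t_i) = \lambda_i(t'_i,t_i) + x.
\]
A quick flow-conservation check then shows $\lambda'$ is still useful: at node $t_i$ the inflow and outflow each increase by exactly $x$, and likewise at $t'_i$, while all other nodes are unaffected; so by Lemma~\ref{lem:useful dual}, $\lambda'$ is a valid flow.

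Next I would compute $\Phi^{\lambda'}_i$ by plugging into the formula from Definition~\ref{def:virtual value}. Only the types $t_i$ and $t'_i$ can possibly see a change, since these are the only nodes whose incoming $\lambda$-values are modified. At $t_i$, the term $-\tfrac{1}{f_i(t_i)}\lambda_i(t'_i,t_i)(t'_i - t_i)$ gains an additional contribution $-\tfrac{x(t'_i-t_i)}{f_i(t_i)} = \tfrac{x(t_i-t'_i)}{f_i(t_i)}$, which is strictly positive (since $t_i > t'_i$ and $f_i(t_i) > 0$, as $t_i$ is a real type in the support). So $\Phi^{\lambda'}_i(t_i) = \Phi^\lambda_i(t_i) + \tfrac{x(t_i - t'_i)}{f_i(t_i)}$, giving the first bullet. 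Symmetrically, at $t'_i$ the extra contribution is $-\tfrac{x(t_i - t'_i)}{f_i(t'_i)} < 0$, proving the second bullet.

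Finally, multiplying through by the densities yields
\[
f_i(t_i)\cdot\Phi^{\lambda'}_i(t_i) + f_i(t'_i)\cdot\Phi^{\lambda'}_i(t'_i) = f_i(t_i)\cdot\Phi^\lambda_i(t_i) + f_i(t'_i)\cdot\Phi^\lambda_i(t'_i) + x(t_i-t'_i) - x(t_i-t'_i),
\]
which collapses to the third bullet. There is no hard step here; the only thing to be careful about is the sign tracking in Definition~\ref{def:virtual value} (since the formula weights $(t''_i - t_i)$ rather than $(t_i - t''_i)$), and the implicit assumption that $f_i(t_i), f_i(t'_i) > 0$, which is fine because cycles are only added between types in the support.
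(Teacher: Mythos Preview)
Your proof is correct and follows essentially the same approach as the paper: both arguments unpack Definition~\ref{def:virtual value}, observe that only $\lambda_i(t_i,t'_i)$ and $\lambda_i(t'_i,t_i)$ change, and compute the resulting shifts $\pm x(t_i-t'_i)/f_i(\cdot)$ directly. Your version is a bit more explicit about flow conservation and sign-tracking, but the substance is identical.
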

\begin{proof}
Recall the definition of $\Phi^\lambda_i(t_i) = t_i +\frac{\sum_{t}(t_i - t) \lambda_i(t, t_i)}{f_i(t_i)}$. Adding a cycle of flow $x$ between $t_i$ and $t'_i$ increases $\lambda_i(t'_i, t_i)$ and $\lambda_i(t_i, t'_i)$, but otherwise doesn't change $\lambda$. So $\Phi^\lambda_i(t_i)$ increases by $(t_i - t'_i)x/f_i(t_i)$. Similarly, $\Phi_i^\lambda(t'_i)$ decreases by exactly $(t_i - t'_i)x/f_i(t'_i)$. It is therefore clear that $f_i(t_i)\cdot \Phi_i^\lambda(t_i)+f_i(t'_i)\cdot \Phi_i^\lambda(t'_i)$ preserved, as the changes are inversely proportional to the types' densities. 
\end{proof}
\noindent And we may now conclude that a flow exists inducing Myerson's ironed virtual values as well.

\begin{corollary}\label{cor:cycles}
There exists a flow $\lambda$ such that:
\begin{itemize}
\item $\Phi^\lambda_i(t_i) = \tilde{\varphi}_i(t_i)$. 
\item $\lambda_i(t_i, t'_i) > 0$ whenever $t'_i = \max_{t < t_i, t \in T}\{t\}$. 
\item $\lambda_i(t'_i, t_i) > 0$ when $t'_i = \max_{t < t_i, t \in T}\{t\}$ \textbf{if and only if} $t_i$ and $t'_i$ are in the same ironed interval. 
\item $\lambda_i(t, t') = 0$ for all other $t, t'$.
\end{itemize}
\end{corollary}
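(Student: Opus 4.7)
The plan is to begin from the canonical Myerson flow $\lambda^0$ of Lemma~\ref{lem:myersonflow}, which is already a valid flow inducing the unironed virtual values $\varphi_i$ and which, on inspection, satisfies three of the four required bullets: the forward adjacent edges $\lambda^0_i(t_i, t'_i)$ already carry weight $\Prob[t > t'_i] > 0$, no non-adjacent edges are present, and there are no backward edges at all. The only defect is that $\Phi^{\lambda^0}_i = \varphi_i$ rather than $\tilde{\varphi}_i$. I would remedy this by superimposing cycles on the adjacent pairs that lie inside non-trivial ironed intervals, with cycle sizes chosen to flatten the induced virtual values at $\tilde{\varphi}_i$ across each such interval.

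Fix an ironed interval $t^{(1)} < \cdots < t^{(k)}$ with $k \geq 2$, and let $x_j \geq 0$ denote the size of the cycle I intend to add between $t^{(j)}$ and $t^{(j+1)}$. By Lemma~\ref{lem:cycles}, such a cycle simultaneously raises $\Phi_i(t^{(j+1)})$ by $(t^{(j+1)} - t^{(j)}) x_j / f_i(t^{(j+1)})$, lowers $\Phi_i(t^{(j)})$ by $(t^{(j+1)} - t^{(j)}) x_j / f_i(t^{(j)})$, and preserves flow conservation. Demanding $\Phi_i(t^{(j)}) = \tilde{\varphi}_i$ for all $j$ produces a triangular linear system in the $x_j$, which is uniquely solved by backward telescoping from $j = k$ and yields
\begin{equation*}
(t^{(j)} - t^{(j-1)}) \cdot x_{j-1} \;=\; \sum_{\ell = j}^{k} f_i(t^{(\ell)}) \big(\tilde{\varphi}_i - \varphi_i(t^{(\ell)})\big), \qquad j = 2, \ldots, k.
\end{equation*}
The remaining equation at $j = 1$ is automatic, since $\tilde{\varphi}_i$ is by definition the $f_i$-weighted average of $\varphi_i$ over the interval.

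The step I expect to be the real obstacle is showing $x_{j-1} > 0$ strictly for every $j \geq 2$, which is what the ``if and only if'' in the third bullet demands. This is exactly where the ironing rule pulls its weight: Step~3 of Definition~\ref{def:ironing} picks $t^{(1)}$ as the \emph{largest} element of $\arg\max_{t \leq t^{(k)}} a([t, t^{(k)}])$, so that for every $j \geq 2$ one has the strict inequality $a([t^{(j)}, t^{(k)}]) < a([t^{(1)}, t^{(k)}]) = \tilde{\varphi}_i$. Rearranging this inequality gives $\sum_{\ell = j}^{k} f_i(t^{(\ell)})(\tilde{\varphi}_i - \varphi_i(t^{(\ell)})) > 0$, whence $x_{j-1} > 0$.

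Everything else is routine bookkeeping. Conservation survives because only cycles were added, so the modified $\lambda$ is still a valid flow and hence useful by Lemma~\ref{lem:useful dual}. Forward adjacent edges stay positive because cycles only add to the weight they already had under $\lambda^0$. Backward adjacent edges carry strictly positive weight precisely on the interior edges of non-trivial ironed intervals (by the preceding paragraph) and are left untouched for adjacent pairs split across two different ironed intervals, matching the third bullet exactly. Finally, no non-adjacent edges are ever introduced, giving the fourth bullet.
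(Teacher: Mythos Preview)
Your proposal is correct and takes essentially the same approach as the paper: start from the Myerson flow of Lemma~\ref{lem:myersonflow} and superimpose cycles on adjacent pairs inside each ironed interval until the induced virtual values flatten to $\tilde{\varphi}_i$. The only difference is cosmetic: the paper works bottom-up, adding the cycle at $t^{(1)}$ first (using $\varphi_i(t^{(1)}) > a([t^{(1)},t^{(k)}])$), then iterating upward while tracking that the preserved prefix average $a([t^{(1)},t^{(j)}])$ still exceeds $\tilde{\varphi}_i$; you instead write the whole triangular system at once and telescope from the top, reading off strict positivity of each $x_{j-1}$ directly from the tie-breaking clause in Step~3 of Definition~\ref{def:ironing}. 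Your route has the minor advantage of yielding a closed-form expression for the cycle weights and making the role of the tie-breaking rule completely explicit, whereas the paper's inductive version is slightly more hands-on but avoids setting up any linear algebra.
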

\begin{proof}
Start with the $\lambda$ inducing $\Phi^\lambda_i(t_i) = \varphi_i(t_i)$, and consider any ironed interval $[x, y]$. First, observe that we must have $\varphi_i(x) > a(x,y)$, as otherwise $[x, y]$ would not be an ironed interval (as $x$ doesn't maximize $a(x,y)$, $a(z,y)$ would be at least as large where $z = \min_{t > x, t \in T_i}\{t\}$, and recall that we would break a tie in favor of $z$). 

Now, add a cycle between $x$ and $z = \min_{t > x, t \in T_i}\{t\}$. Per Lemma~\ref{lem:cycles}, this increases $\Phi^\lambda_i(z)$ and decreases $\Phi^\lambda_i(x)$. So increase the weight along this cycle until $\Phi^\lambda_i(x)$ decreases to $\tilde{\varphi}_i(x)$. At this point, either $[x,z]$ is the entire ironed interval, in which case this interval is ``finished.'' Or, maybe $z < y$. In this case, we claim that we can iterate the process with $z$. To see this, observe that by Lemma~\ref{lem:cycles}, we must have preserved $a(x, z)$. Observe again that we must have $a(x,z) > a(x,y)$ in order for $[x,y]$ to be an ironed interval, and we have just set $\Phi^\lambda_i(x) = \tilde{\varphi}_i(x) = a(x,y) < a(x,z)$, so we must have $\Phi^\lambda_i(z) > a(x,z) > a(x,y) = \tilde{\varphi}_i(z)$. So we can again add a cycle between $z$ and $w = \min_{t > z, t \in T_i} \{t\}$ to decrease $\Phi^\lambda_i(z)$ to $\tilde{\varphi}_i(z)$ while preserving $a(x, w)$. Iterating this process all the way until $y$ necessarily adds a cycle between all adjacent types and results in $\Phi^\lambda_i(t) = \tilde{\varphi}_i(t)$ for all $t \in [x,y]$, again by Lemma~\ref{lem:cycles}. Repeating this argument for all ironed intervals proves the corollary.
\end{proof}

\noindent Now we may complete the proof of Theorem~\ref{thm:myerson}. The bulk of the proof is captured by Lemma~\ref{lem:myersonflow} and Corollary~\ref{cor:cycles}, the only remaining work is to confirm the structure of the optimal mechanism.\\

\begin{prevproof}{Theorem}{thm:myerson}
Inequality~\eqref{eq:vw} now immediately follows from Theorem~\ref{thm:revenue less than virtual welfare} and Lemma~\ref{lem:myersonflow}, and the condition for it to be an equality is implied by Lemma~\ref{lem:tight}.  Inequality~\eqref{eq:ivw} follows immediately from Theorem~\ref{thm:revenue less than virtual welfare} and Corollary~\ref{cor:cycles}. Next, we argue why Inequality~\eqref{eq:ivw} is an equality when the stated condition holds. When a mechanism is BIC and $\pi_i(t_i)=\pi_i(t_i')$, then $p_i(t_i)=p_i(t'_i)$, because $0=t_i\cdot(\pi_i(t_i)-\pi_i(t'_i))\geq p_i(t_i)-p_i(t'_i)\geq t'_i\cdot(\pi_i(t_i)-\pi_i(t'_i))=0$ (implied by the BIC constraints). Therefore, when the condition holds, any bidder $i$ with type $t_i$ is indifferent between reporting $t_i$ and any type $t'_i\sim t_i$. Combining this observation with Lemma~\ref{lem:tight}, we know that for the flow specified in Corollary~\ref{cor:cycles}, all BIC constraints bind between any two types $t_i$ and  $t'_i$ with $\lambda(t_i,t'_i)>0$. Thus, Inequality~\eqref{eq:ivw} is an equality when the condition holds due to Theorem~\ref{thm:revenue less than virtual welfare}.

Finally, to see that the optimal mechanism has the prescribed format, observe that the allocation rule that awards the item to the highest non-negative ironed virtual value clearly maximizes ironed virtual welfare. So we get that the revenue of the optimal BIC mechanism is upper bounded by the ironed virtual welfare of this allocation rule. Moreover, ironed virtual values are always monotone non-decreasing, so by Lemma~\ref{lem:proper}, this allocation rule has corresponding proper payments that combine to a BIC mechanism. Finally, because the allocation rule by definition satisfies $\pi_i(t_i) = \pi_i(t'_i)$ whenever $t_i \sim t'_i$, we have that the revenue of this mechanism is equal to its expected ironed virtual welfare (again by Theorem~\ref{thm:revenue less than virtual welfare}), and is therefore optimal (as its expected ironed virtual welfare is an upper bound on the expected revenue of any BIC mechanism). 
\end{prevproof}

In summary, we have provided a duality-based proof of Myerson's Theorem~\cite{Myerson81}. This gives some intuition for the flows we will develop in the following section. Also, it provides a different insight into the difference between ironed and non-ironed virtual values. Expected revenue is equal to expected virtual welfare for all BIC mechanisms (with proper payments) because the flow necessary to derive virtual values only sends non-zero flow along edges that correspond to BIC constraints that are always tight. On the other hand, revenue is only upper bounded by ironed virtual welfare for all BIC mechanisms because the flow necessary to derive ironed virtual values sends non-zero flow along all edges between adjacent types in an equivalence class. So revenue is only equal to ironed virtual welfare if all of the corresponding BIC constraints are tight (and there exist truthful mechanisms for which this doesn't hold). 

\section{Canonical Flow and Virtual Valuation Function for Multiple Items}\label{sec:flow}

In this section, we present a canonical way to set the Lagrangian multipliers/flow that induces our benchmarks for multi-item settings. This flow will use similar ideas to Section~\ref{sec:myerson}. Informally, our approach for a single bidder first divides the entire type space of the bidder into regions based on their \emph{favorite item} (that is, $\argmax_j \{v_j\}$). We'll then use a different ``Myerson-like'' flow within each region, described in more detail shortly. For multiple bidders, we'll still divide the type space of each bidder into regions based on their favorite item, but define the ``favorite'' item slightly differently.

Specifically, let $P_{ij}(t_{-i})$ denote the price that bidder $i$ could pay to receive exactly item $j$ in the VCG mechanism against bidders with types $t_{-i}$.\footnote{Note that when buyers are additive, this is exactly the highest bid for item $j$ from buyers besides $i$. When buyers are unit-demand, buyer $i$ only ever buys one item, and this is the price she would pay for receiving $j$.}
We will partition the type space $T_{i}$ into $m+1$ regions:
\textbf{(i)} $R^{(t_{-i})}_{0}$ contains all types $t_{i}$ such that $t_{ij}< P_{ij}(t_{-i})$, $\forall j$; \textbf{(ii)} $R^{(t_{-i})}_{j}$ contains all types $t_{i}$ such that $t_{ij}-P_{ij}(t_{-i})\geq 0$ and $j$ is the smallest index in $\argmax_{k}\{t_{ik}-P_{ik}(t_{-i})\}$. 
This partitions the types into subsets based on which item provides the largest surplus (value minus price), and we break ties lexicographically. We'll refer to the largest surplus item as the bidder's favorite item. We'll refer to all other items as \emph{non-favorite} items. For any bidder $i$ and any type profile $t_{-i}$ of everyone else, we define $\lambda^{(t_{-i})}_{i}$ to be the following flow. The flow as defined below will look similar to Myerson's (non-ironed) virtual values, and we will need to similarly iron it to accommodate irregular distributions.

For the remainder of this section, it will be helpful to have the above concrete definition of $R_j^{(t_{-i})}$ in mind for defining our flows. However, all results proved in this section apply more broadly, for any definitions of $R_j^{(t_{-i})}$ which are \emph{upwards-closed}. Subsequent work (e.g.~\cite{EdenFFTW17a}) has made use of these generalized benchmarks.

{\begin{definition}[Upwards Closed Regions] We say that regions $\{R_j^{(t_{-i})}\}_{j \in [m]}$ are \emph{upwards-closed} if for all $j\geq 1$, whenever ${v} \in R_j^{(t_{-i})}$, ${v} + c \cdot e_j \in R_j^{(t_{-i})}$ for any $c \geq 0$ as well. Here, $e_j$ denotes the $j^{th}$ standard basis vector in the $m$-dimensional Euclidean space. 
\end{definition}}

{In this section, we state/prove our results for arbitrary upwards-closed regions. In subsequent sections, we will instantiate $R_j^{(t_{-i})}$ as defined in the previous paragraphs.}
\begin{definition}[Initial Canonical Flow] Define our initial canonical flow to be the following:
 \begin{enumerate}
 \item  $\forall j>0$, any flow entering $R^{(t_{-i})}_{j}$ is from  $s$ (the super source) and any flow leaving $R^{(t_{-i})}_{j}$ is to $\varnothing$ (super sink).
 \item For every type $t_{i}$ in region $R^{(t_{-i})}_{0}$,  the flow goes directly to $\varnothing$. That is, $\lambda_i(t_i, \varnothing) = f_i(t_i)$ for all $t_i \in R^{(t_{-i})}_0$. 
\item For every type $t_i$ in region $R^{(t_{-i})}_j$, define type $t'_i$ such that $t'_{ij} = \max_{t < t_{ij}, t \in T_{ij}}\{t\}$, and $t'_{ik} = t_{ik}$ for all $k \neq j$. 
\begin{itemize}
\item If $t'_i \in R^{(t_{-i})}_j$ as well, then set $\lambda_i(t_i, t'_i) = \sum_{t^*_i:\  t^*_{ik} = t_{ik}\ \forall k \neq j\  \land\ t^*_{ij} \geq t_{ij}}f_i(t^*_i)$.
\item If $t'_i \notin R^{(t_{-i})}$, set $\lambda_i(t_i, \varnothing) =  \sum_{t^*_i:\  t^*_{ik} = t_{ik}\ \forall k \neq j\  \land\ t^*_{ij} \geq t_{ij}}f_i(t^*_i)$. 
\end{itemize}
\end{enumerate}
\end{definition}

Observe that this indeed defines a flow. All nodes in $R^{(t_{-i})}_0$ have no flow in (except from the super source), and send exactly this flow to the super sink. All other nodes get flow in from exactly one type (in addition to the super source) and send flow out to exactly one type, and the flow is balanced, just as in Lemma~\ref{lem:myersonflow}. At a high level, what we are doing is restricting attention to a single item and attempting to use the canonical single item flow for just this item. We restrict attention to different items for different types, depending on {the region in which $t_i$ lies (for our later instantiation, this depends on which item gives them highest utility at the prices $P_{ij}(t_{-i})$)}. Let's first study the induced virtual values $\Phi_{ik}^{(t_{-i})}(t_i)$ for when $t_i \notin R_k^{(t_{-i})}$. 

\begin{claim}\label{clm:value term}
For any type $t_{i}\in R^{(t_{-i})}_{j}$, its corresponding virtual value $\Phi_{ik}^{(t_{-i})}(t_{i})$ for item $k$ is exactly its value $t_{ik}$ for all $k\neq j$.
\end{claim}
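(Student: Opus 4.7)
The plan is to compute $\Phi^{(t_{-i})}_{ik}(t_i)$ directly from Definition~\ref{def:virtual value} by understanding exactly which types $t'_i$ satisfy $\lambda^{(t_{-i})}_i(t'_i,t_i) > 0$ in the initial canonical flow. By Definition~\ref{def:virtual value}, for any flow $\lambda^{(t_{-i})}_i$,
\[
\Phi^{(t_{-i})}_{ik}(t_i)=t_{ik}-\frac{1}{f_i(t_i)}\sum_{t'_i \in T_i}\lambda^{(t_{-i})}_i(t'_i,t_i)\bigl(t'_{ik}-t_{ik}\bigr).
\]
So it suffices to show that for every $t'_i$ with $\lambda^{(t_{-i})}_i(t'_i,t_i)>0$, we have $t'_{ik}=t_{ik}$ for all $k\neq j$; the correction term then vanishes coordinate-wise and the claim follows.

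First I would record the structural fact about incoming flow at $t_i \in R_j^{(t_{-i})}$. By the definition of the initial canonical flow, flow can only enter $R_j^{(t_{-i})}$ from the super source $s$ (which does not contribute to the sum in $\Phi^{(t_{-i})}_{ik}$, since that sum ranges over $T_i$) or from another type inside $R_j^{(t_{-i})}$. Moreover, the only way some $t''_i \in R_j^{(t_{-i})}$ can route flow to $t_i$ (as opposed to $\varnothing$) is through the rule in item~3 of the canonical flow definition, which sends flow from $t''_i$ exclusively to the type obtained by decreasing $t''_{ij}$ to the next-smaller value in $T_{ij}$ while leaving all other coordinates fixed. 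Hence if $\lambda^{(t_{-i})}_i(t''_i,t_i)>0$, then $t''_{ik}=t_{ik}$ for every $k\neq j$, and $t''_{ij}=\min_{t>t_{ij},\, t\in T_{ij}}\{t\}$.

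Combining these two observations, every $t'_i \in T_i$ contributing to the sum satisfies $t'_{ik}-t_{ik}=0$ whenever $k\neq j$, so for any such $k$,
\[
\Phi^{(t_{-i})}_{ik}(t_i)=t_{ik}-\frac{1}{f_i(t_i)}\sum_{t'_i\in T_i}\lambda^{(t_{-i})}_i(t'_i,t_i)\cdot 0 = t_{ik},
\]
which is the desired equality. There is no genuine obstacle here: the only subtle point to verify carefully is that the edge-set of the canonical flow never connects two types that differ in a coordinate other than $j$, and this follows directly from item~3 of the construction together with the constraint (item~1) that flow cannot cross regions except through $s$ and $\varnothing$.
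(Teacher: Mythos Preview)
Your proposal is correct and follows essentially the same approach as the paper: write out the virtual value formula from Definition~\ref{def:virtual value}, observe from the construction of the initial canonical flow that any $t'_i$ with $\lambda^{(t_{-i})}_i(t'_i,t_i)>0$ must agree with $t_i$ on all coordinates $k\neq j$, and conclude that the correction term vanishes for those coordinates. The paper's proof is just a terser version of the same argument.
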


\begin{proof}
By the definition of $\Phi_{i}^{(t_{-i})}(\cdot)$ , $\Phi_{ik}^{(t_{-i})}(t_{i})= t_{ik}-{1\over f_{i}(t_{i})}\sum_{t_{i}'} \lambda^{(t_{-i})}_{i}(t_{i}',t_{i})(t_{ik}'-t_{ik})$. Since $t_{i}\in R_{j}$, by the definition of the flow $\lambda^{(t_{-i})}_{i}$, for any $t_{i}'$ such that $\lambda^{(t_{-i})}_{i}(t_{i}',t_{i})>0$, $t_{ik}'-t_{ik}=0$ for all $k\neq j$, therefore $\Phi_{ik}^{(t_{-i})}(t_{i})= t_{ik}$.
\end{proof}

Let's now study the corresponding $\Phi_{ij}^{(t_{-i})}(t_i)$ for this flow when $t_i \in R^{(t_{-i})}_j$. This turns out to be closely related to the Myerson's virtual value function for single-dimensional distributions discussed in Section~\ref{sec:myerson}. For each $i,j$, we use $\varphi_{ij}(\cdot)$ and $\tp_{ij}(\cdot)$ to denote the Myerson virtual value and ironed virtual value function for distribution $D_{ij}$ respectively, as defined in Section~\ref{sec:myerson}. 

\begin{claim}\label{clm:regular flow}
For any type $t_{i}\in R^{(t_{-i})}_{j}$, then the initial canonical flow induces virtual values satisfying: $\Phi_{ij}^{(t_{-i})}(t_{i})=\varphi_{ij}(t_{ij})=t_{ij}-\frac{{(t'_{ij}-t_{ij})\cdot}\Pr_{t\sim D_{ij}}[t>t_{ij}]}{f_{ij}(t_{ij})}$, where $t'_{ij} = \min_{t > t_{ij}, t \in T_{ij}}\{t\}$. 
\end{claim}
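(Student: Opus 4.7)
The plan is to unpack the definition of $\Phi_{ij}^{(t_{-i})}(t_i)$ directly and identify the (essentially unique) nonzero term arising from the canonical flow.

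First, I would write
\[
\Phi_{ij}^{(t_{-i})}(t_i) \;=\; t_{ij} \;-\; \frac{1}{f_i(t_i)} \sum_{t'_i \in T_i} \lambda^{(t_{-i})}_i(t'_i, t_i)\,(t'_{ij} - t_{ij}),
\]
and inspect the canonical flow to see which $t'_i$ contribute. By construction, every nonzero flow edge connects a pair of types that agree in all coordinates except the $j$-indexed one (for some $j \geq 1$), and points from a larger $j$-th coordinate to the next-smaller one. Hence the only type $t'_i$ that can send positive flow \emph{into} $t_i$ is the type $t''_i$ defined by $t''_{ik} = t_{ik}$ for $k \neq j$ and $t''_{ij} = \min\{t \in T_{ij} : t > t_{ij}\}$ (when this minimum exists). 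In the edge case where $t_{ij}$ is already maximal in $T_{ij}$, no such $t''_i$ exists, the sum is empty, and the formula collapses to $t_{ij}$, which matches the claimed right-hand side since $\Pr_{t \sim D_{ij}}[t > t_{ij}] = 0$.

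Second, I would use upwards-closure of $R^{(t_{-i})}_j$: since $t_i \in R^{(t_{-i})}_j$ and $t''_i = t_i + (t''_{ij} - t_{ij})\,e_j$, we get $t''_i \in R^{(t_{-i})}_j$ too. So the clause of the flow definition applicable to $t''_i$ is the one for types whose lower-$j$-neighbor lies in the same region, giving
\[
\lambda^{(t_{-i})}_i(t''_i, t_i) \;=\; \sum_{\substack{t^*_i :\ t^*_{i,-j} = t_{i,-j} \\ t^*_{ij} \geq t''_{ij}}} f_i(t^*_i) \;=\; \sum_{\substack{t^*_i :\ t^*_{i,-j} = t_{i,-j} \\ t^*_{ij} > t_{ij}}} f_i(t^*_i).
\]

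Third, I would use independence of coordinates to factor $f_i(t^*_i) = f_{ij}(t^*_{ij}) \cdot f_{i,-j}(t_{i,-j})$, so the last sum equals $f_{i,-j}(t_{i,-j}) \cdot \Pr_{t \sim D_{ij}}[t > t_{ij}]$. Dividing by $f_i(t_i) = f_{ij}(t_{ij}) \cdot f_{i,-j}(t_{i,-j})$, the factor $f_{i,-j}(t_{i,-j})$ cancels, and substituting into the virtual-value expression above (using $t''_{ij} - t_{ij} = t'_{ij} - t_{ij}$ in the notation of the claim) yields exactly $\varphi_{ij}(t_{ij})$.

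There is no real obstacle here beyond bookkeeping; the only nontrivial ingredient is the use of upwards-closure of $R^{(t_{-i})}_j$ to guarantee that the incoming edge from $t''_i$ is governed by the ``inside the region'' clause of the flow definition (rather than being routed to the sink), and this is exactly why the claim was stated at the level of generality of upwards-closed regions.
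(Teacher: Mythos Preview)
Your proof is correct and follows essentially the same route as the paper's: identify the unique incoming edge (the ``predecessor'' with next-larger $j$-th coordinate), compute its weight using the product structure $f_i = f_{ij}\cdot f_{i,-j}$, and cancel the $f_{i,-j}(t_{i,-j})$ factor. Your explicit invocation of upwards-closure to ensure $t''_i \in R^{(t_{-i})}_j$ (and hence that its outgoing flow targets $t_i$ rather than the sink) is a point the paper leaves implicit; one small thing you might tighten is the ``Hence'' after your first observation, since you also need that no edge from a region $R^{(t_{-i})}_k$ with $k \neq j$ can terminate at $t_i$---this follows because such an edge would require its endpoint to lie in $R^{(t_{-i})}_k$, whereas $t_i \in R^{(t_{-i})}_j$.
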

\begin{proof}
Let us fix $t_{i,-j}$, and prove this is true for all choices of $t_{i,-j}$. If $t_{ij}$ is the largest value in $T_{ij}$, then there is no flow coming into it except the one from the source, so $\Phi_{ij}^{(t_{-i})}(t_{i})=t_{ij}$. For every other value of $t_{ij}$, the flow coming from its predecessor $(t'_{ij},t_{i,-j})$ is exactly (note below that several steps make use of the fact that $f_i(t_i) = \prod_j f_{ij}(t_{ij})$)
$$\sum_{t^*_i:\ t^*_{ik}=t_{ik} \forall k \neq j\ \land\  t^*_{ij} \geq t'_{ij}}f_i(t_i^*) = \sum_{t^*_i:\ t^*_{ik}=t_{ik} \forall k \neq j\ \land\  t^*_{ij} \geq t'_{ij}}f_{ij}(t^*_{ij})\cdot\prod_{k \neq j} f_{ik}(t_{ik}) = \prod_{k\neq j} f_{ik}(t_{ik})\cdot \sum_{t_{ij}^*>t_{ij}} f_{ij}(t_{ij}^*) $$
$$= \prod_{k\neq j} f_{ik}(t_{ik})\cdot \Pr_{t \sim D_{ij}}[t>t_{ij}].$$
Now, we can compute according to Definition~\ref{def:virtual value}: 

$$\Phi_{ij}^{(t_{-i})}(t_{i})= t_{ij} - \frac{(t'_{ij}-t_{ij})\cdot \prod_{k \neq j}f_{ik}(t_{ik})\cdot \Pr_{t \sim D_{ij}}[t > t_{ij}]}{f_i(t_i)} = t_{ij} - \frac{(t'_{ij}-t_{ij})\cdot \Pr_{t \sim D_{ij}}[t > t_{ij}]}{f_{ij}(t_{ij})} =\varphi_{ij}(t_{ij})$$
\end{proof}

Claims~\ref{clm:value term} and~\ref{clm:regular flow} show that our initial canonical flow induces virtual values such that the virtual value of each bidder for all of their \emph{non-favorite} items is exactly their value, while their virtual value for their \emph{favorite} item is exactly their Myersonian virtual value as per Definition~\ref{def:vv}. When $D_{ij}$ is regular, this is the canonical flow we use. When the distribution is not regular, we also need to ``iron'' the virtual values as in Section~\ref{sec:myerson}. Essentially all we are doing is applying the same procedure as Definition~\ref{def:ironing}, but we repeat it below to be clear exactly how the substitutions occur. Below, we use $\tp^{*}_{ij}$ to denote the ironed virtual values instead of $\tp_{ij}$ because we reserve $\tp_{ij}$ to refer exactly to the ironed virtual values that result in the single item case, and we haven't yet proved that they are (essentially) the same.

\begin{definition}[Ironed Canonical Virtual Values]\label{def:multiiron} For a given bidder $i$, valuation vector $t_i \in R_j^{(t_{-i})}$, obtain the ironed canonical values, $\tilde{\varphi}^{*}_{ij}(\cdot)$ in the following manner: let $X$ denote the minimum $t_{ij}$ such that $(t_{ij}; t_{i, -j}) \in R^{(t_{-i})}_j$. Let $\sim^*$ be an equivalence relation on $T_{ij}$. We say that an interval $[t^*_{ij}, t_{ij}]$ is ironed if $t \sim^* t'$ for all $t, t' \in [t^*_{ij}, t_{ij}]$. 
\begin{enumerate}
\item Initialize $t_{ij} = \max_{t \in T_{ij}} \{t\}$, the highest un-ironed type.
\item For any $t\in [X,t_{ij}]$, define the average virtual value $a([t,t_{ij}]) = \frac{\sum_{t' \in [t, t_{ij}]}f_{ij}(t')\cdot \varphi_{ij}(t')}{\sum_{t' \in [t, t_{ij}]} f_{ij}(t')}$. 
\item Let $t^*_i$ maximize the average virtual value. That is, $t^*_{ij} = \arg\max_{t \in [X, t_{ij}] } \{a([t, t_{ij}])\}$ (break ties in favor of the maximum such $t^*_{ij}$). 
\item Update $\tilde{\varphi}^{*}_{ij}(t) = a([t^*_{ij}, t_{ij}])$ for all $t \in [t^*_{ij}, t_{ij}]$. 
\item Update $t \sim^* t'$ for all $t, t' \in [t^*_{ij}, t_{ij}]$. 
\item Update $t_{ij}= \max_{t \in [X, t^*_{ij}), t \in T_i}\{t\}$, the highest un-ironed type.
\item Return to Step 2.
\end{enumerate}

\end{definition}

\begin{lemma}\label{lem:multicycles}
The initial canonical flow can be ironed into a $\lambda$ so that for all $j$, and all $t_i \in R^{(t_{-i})}_j$, $\Phi^\lambda_{ij}(t_i) = \tilde{\varphi}^{*}_{ij}(t_{ij})$ by only adding cycles between types $t_i, t'_i \in R^{(t_{-i})}_j$ satisfying $t_{i,-j} = t'_{i,-j}$.
\end{lemma}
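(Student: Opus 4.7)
My plan is to reduce the multi-item ironing to the single-item ironing of Corollary~\ref{cor:cycles}, applied independently to each appropriately defined ``slice'' of the type space. The key observation is that the initial canonical flow, when restricted to a slice in which $j$ and $t_{i,-j}$ are fixed, looks exactly like single-item Myerson's flow (up to a constant scaling factor), so local cycles within each slice can accomplish the desired ironing without disturbing anything outside the slice.

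Concretely, for each bidder $i$, each $t_{-i}$, each favorite-item index $j\geq 1$, and each $t_{i,-j}$ such that $S_{j,t_{i,-j}} := \{(t_{ij};t_{i,-j}) : t_{ij}\in T_{ij}\}\cap R^{(t_{-i})}_j$ is non-empty, upwards-closedness of $R^{(t_{-i})}_j$ implies $S_{j,t_{i,-j}} = \{(t_{ij};t_{i,-j}) : t_{ij}\geq X\}$ for the threshold $X$ from Definition~\ref{def:multiiron}. By inspecting the initial canonical flow, the only non-source/non-sink flow incident to types in $S_{j,t_{i,-j}}$ goes between $t_{ij}$-adjacent types inside $S_{j,t_{i,-j}}$, with magnitude $\Pr_{t\sim D_{ij}}[t\geq t_{ij}]\cdot \prod_{k\neq j} f_{ik}(t_{ik})$. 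Factoring out the constant $\prod_{k\neq j} f_{ik}(t_{ik})$, this is exactly the single-item Myerson flow of Lemma~\ref{lem:myersonflow} applied to $D_{ij}$ but supported on $[X,\max T_{ij}]$ (with the flow at the bottom of the slice going directly to $\varnothing$ when the predecessor of $X$ falls outside the region).

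I then run the cycle-adding construction in the proof of Corollary~\ref{cor:cycles} on each slice: for each ironed interval produced by Definition~\ref{def:multiiron}, add cycles between $t_{ij}$-consecutive types in the slice so as to push the $j$-th virtual values down to the common average, using Lemma~\ref{lem:cycles} to track the effect. The crucial structural point is that every cycle I add is between two types $t_i, t'_i \in S_{j,t_{i,-j}}$ that agree in all coordinates except $j$; by Lemma~\ref{lem:cycles} the change in $\Phi^\lambda_{ik}(t_i)$ is proportional to $(t_{ik}-t'_{ik})$, so only the $k=j$ component is affected. Thus Claim~\ref{clm:value term} for non-favorite items is preserved, and virtual values for types in other slices or other regions are untouched. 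Iterating this construction over all triples $(i,j,t_{i,-j})$ yields a flow $\lambda$ with $\Phi^\lambda_{ij}(t_i)=\tilde{\varphi}^*_{ij}(t_{ij})$ for every $t_i\in R^{(t_{-i})}_j$.

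The main thing to verify carefully is that Definition~\ref{def:multiiron}'s restricted ironing (over $[X,\max T_{ij}]$, with the original $\varphi_{ij}$ values) is compatible with the cycle procedure: I would check that the constant scaling factor $\prod_{k\neq j}f_{ik}(t_{ik})$ cancels from the averages $a([t,t_{ij}])$ used in Step~2, so the slice-level averaging matches the original Definition~\ref{def:multiiron} averages exactly. Given this, the step-by-step cycle argument from Corollary~\ref{cor:cycles} transfers without modification, and no additional ironing across different slices is needed, which is exactly the conclusion claimed in the lemma.
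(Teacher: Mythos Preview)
Your proposal is correct and takes essentially the same approach as the paper, which simply says ``Exactly the same as Corollary~\ref{cor:cycles}, plus the observation that all types for which values for item $j$ have identical values for items $\neq j$.'' You have faithfully expanded this one-line proof: fixing each slice $(i,j,t_{i,-j})$, observing the flow factors as a constant times the single-item Myerson flow, applying the cycle construction of Corollary~\ref{cor:cycles} within the slice, and noting via Lemma~\ref{lem:cycles} that cycles between types agreeing on all coordinates except $j$ leave $\Phi^\lambda_{ik}$ unchanged for $k\neq j$.
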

\begin{proof}
Exactly the same as Corollary~\ref{cor:cycles}, plus the observation that all types for which values for item $j$ have identical values for items $\neq j$. 
\end{proof}

Now that we have a flow ``ironing'' one of the virtual values, we want to wrap up by observing that $\tilde{\varphi}^{*}_{ij}(v) \leq \tilde{\varphi}_{ij}(v)$ for all $v$ (where $\tilde{\varphi}_{ij}(v)$ is Myerson's ironed virtual value for the distribution $D_{ij}$. 

\begin{lemma}\label{lem:ironed virtual value} For any $i,j$, $t_{-i}, t_{i,-j}, t_{ij}$, $\tp^{*}_{ij}(t_{ij}) \leq \tp_{ij}(t_{ij})$.
\end{lemma}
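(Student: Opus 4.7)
The plan is strong induction on $K:=|T_{ij}|$, directly comparing the two ironing procedures. Enumerate $T_{ij}=\{t_1<\cdots<t_K\}$ with $t_L=X$, and write $q_a:=\Pr_{t\sim D_{ij}}[t\geq t_a]$, $R(q_a):=t_a q_a$, setting $R(q_{K+1}):=0$. By Observation~\ref{obs:discrete2}, every interval average factors as $a([t_a,t_b])=(R(q_a)-R(q_{b+1}))/(q_a-q_{b+1})$, a formula that depends only on the indices and the $R(q_\cdot)$ values and not on the global type set. Hence the two ironings differ solely in which starting indices are admissible when picking a top interval.

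At the first iteration let $a_1:=\argmax_{a\in[1,K]}R(q_a)/q_a$ and $a_1^*:=\argmax_{a\in[L,K]}R(q_a)/q_a$ (with the largest-$a$ tie-breaking rule from Definition~\ref{def:multiiron}). If $a_1\geq L$, then $a_1$ also maximizes over $[L,K]$ and the tie-breaking forces $a_1^*=a_1$, so the top intervals $I_1=[t_{a_1},t_K]$ and $I_1^*$ coincide and $\tp_{ij}(t_k)=\tp_{ij}^*(t_k)$ for $k\in[a_1,K]$. For $k\in[L,a_1-1]$, both procedures now recurse---the full on $\{t_1,\ldots,t_{a_1-1}\}$ and the restricted on $\{t_L,\ldots,t_{a_1-1}\}$---which is a strictly smaller instance of the same comparison (the only change is that the ``sink'' value $R(q_{K+1})=0$ in the formula is replaced by $R(q_{a_1})$), so the inductive hypothesis yields $\tp_{ij}^*(t_k)\leq \tp_{ij}(t_k)$.

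If instead $a_1<L$, then $I_1=[t_{a_1},t_K]$ contains the entire restricted domain, so $\tp_{ij}(t_k)=a(I_1)=R(q_{a_1})/q_{a_1}$ for every $k\in[L,K]$. In the restricted ironing each $t_k$ lies in some interval $I_j^*$ with $\tp_{ij}^*(t_k)=a(I_j^*)$; applying the standard convex-combination argument (if $a(I_2^*)>a(I_1^*)$, then $a([t_{a_2^*},t_K])$ would be a convex combination of $a(I_1^*)$ and $a(I_2^*)$ strictly exceeding $a(I_1^*)$, contradicting $a_1^*$'s maximality, and iterating on the sub-problem gives the rest) shows $a(I_1^*)\geq a(I_2^*)\geq\cdots$. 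Combined with $a(I_1^*)=\max_{a\in[L,K]}R(q_a)/q_a\leq R(q_{a_1})/q_{a_1}$, we get $\tp_{ij}^*(t_k)=a(I_j^*)\leq a(I_1^*)\leq a(I_1)=\tp_{ij}(t_k)$, completing the induction.

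The main obstacle I anticipate is properly phrasing the induction so that the first-case recursion is a legitimate sub-instance of the same comparison. This hinges on the observation that the ironing formula is purely a function of the $R(q_a)$ values and the admissible index set, so truncating the type support from above (effectively changing which value plays the role of the ``sink'' in the average-formula) leaves the structural comparison between full and restricted ironing intact; everything else is a short chord-slope calculation.
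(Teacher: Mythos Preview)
Your proposal is correct and follows essentially the same approach as the paper: compare the two ironing procedures top-down, observe they coincide until the full ironing first picks an interval extending below $X$, and then use that the full maximum dominates the restricted one together with monotonicity of $\tp^{*}_{ij}$ to handle the ``cut'' interval. Your version is in fact more explicit than the paper's---you spell out the induction and the monotonicity chain $a(I_j^*)\leq a(I_1^*)\leq a(I_1)$ in Case~2, whereas the paper compresses this into the single observation $a([x,y])\geq a([X,y])$ (which, as you implicitly recognize, really needs to be $a([x,y])\geq \max_{t\in[X,y]}a([t,y])$ plus monotonicity of $\tp^*$ to finish).
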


\begin{proof}
Observe that the ironing procedure in Definition~\ref{def:multiiron} is nearly identical to that of Definition~\ref{def:ironing}. In fact, for all $x,y \geq X$ ($X$ is defined in Definition~\ref{def:multiiron}), if $x \sim^* y$ as in Definition~\ref{def:multiiron}, then $x \sim y$ as in Definition~\ref{def:ironing}. This immediately yields that $\tp_{ij}(t_{ij}) = \tp^{*}_{ij}(t_{ij})$ for all $t_{ij}$ in ironed intervals $[x,y]$ (as in Definition~\ref{def:ironing}) such that $x \geq X$. But the ironed virtual values might differ if $t_{ij}$ lies inside an ironed interval that is ``cut'' by $X$ in Definition~\ref{def:ironing}. But observe that by the definition of ironing, we necessarily have $a([x,y]) \geq a([X,y])$ in order for $[x,y]$ to possibly be an ironed interval containing $X$. Therefore, we may immediately conclude that $\tp_{ij}(t_{ij}) \geq \tp_{ij}^{*}(t_{ij})$ for all $t_{ij}$, even those in ironed intervals cut by $X$. 
\end{proof}

\begin{lemma}\label{lem:flow properties}
{Let $\{R_j^{(t_{-i})}\}_{j \in [m]}$ define upwards-closed regions. }Then there exists a flow $\lambda^{(t_{-i})}_{i}$ such that $\Phi_{ij}^{(t_{-i})}(t_{i})$ satisfies the following properties: 
\begin{itemize}
\item For any $j>0$, $t_{i}\in R^{(t_{-i})}_{j}$,  $\Phi_{ij}^{(t_{-i})}(t_{i})\leq \tp_{ij}(t_{ij})$, where $\tp_{ij}(\cdot)$ is Myerson's ironed virtual value for  $D_{ij}$. 
\item For any $j$, $t_{i}\in R^{(t_{-i})}_{j}$, $\Phi_{ik}^{(t_{-i})}(t_{i})=t_{ik}$ for all $k\neq j$. In particular, $\Phi_{i}^{(t_{-i})}(t_{i}) = t_{i}$, $\forall t_{i}\in R_{0}^{(t_{-i})}$.
\end{itemize}
\end{lemma}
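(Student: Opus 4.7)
The plan is to show that the ironed version of the initial canonical flow from Section~\ref{sec:flow} satisfies both properties. Concretely, I would start from the initial canonical flow, apply Lemma~\ref{lem:multicycles} to obtain a flow $\lambda^{(t_{-i})}_i$ whose $j$-th coordinate virtual value equals the ironed canonical value $\tp^{*}_{ij}(t_{ij})$ for every $t_i \in R_j^{(t_{-i})}$, and then combine Lemma~\ref{lem:ironed virtual value} with a direct inspection of which edges carry nonzero flow in $\lambda^{(t_{-i})}_i$.

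For the first property, Lemma~\ref{lem:multicycles} immediately gives $\Phi^{(t_{-i})}_{ij}(t_i) = \tp^{*}_{ij}(t_{ij})$ for every $t_i \in R_j^{(t_{-i})}$, and Lemma~\ref{lem:ironed virtual value} then supplies $\tp^{*}_{ij}(t_{ij}) \leq \tp_{ij}(t_{ij})$, yielding the claimed inequality. For the second property, the key observation is that both the initial canonical flow and the cycles added during ironing only transport mass between types $t_i, t'_i$ sharing $t_{i,-j} = t'_{i,-j}$ (differing only in coordinate $j$). Hence in the expression $\Phi^{(t_{-i})}_{ik}(t_i) = t_{ik} - \tfrac{1}{f_i(t_i)} \sum_{t'_i} \lambda^{(t_{-i})}_i(t'_i, t_i)\,(t'_{ik} - t_{ik})$, every nonzero term satisfies $t'_{ik} = t_{ik}$ whenever $k \neq j$, so the sum vanishes and $\Phi^{(t_{-i})}_{ik}(t_i) = t_{ik}$; this is exactly the reasoning of Claim~\ref{clm:value term}, extended through the ironing step since Lemma~\ref{lem:multicycles} guarantees the cycles preserve the relevant support structure. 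Finally, for $t_i \in R_0^{(t_{-i})}$ the construction sends all mass at $t_i$ directly to the super sink $\varnothing$ and no edge of the flow terminates at $t_i$ (the ironing cycles added within each $R_j^{(t_{-i})}$ leave $R_0^{(t_{-i})}$ untouched), so the sum defining $\Phi^{(t_{-i})}_i(t_i)$ is empty on every coordinate and $\Phi^{(t_{-i})}_i(t_i) = t_i$.

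The main subtlety I expect is verifying that upwards-closedness of $\{R_j^{(t_{-i})}\}$ is precisely what makes this construction go through in full generality. Upwards-closedness ensures that, holding $t_{i,-j}$ fixed, the set of $t_{ij}$ with $(t_{ij}; t_{i,-j}) \in R_j^{(t_{-i})}$ is an upward interval $[X, \max T_{ij}]$; this is exactly what guarantees that the initial-flow construction redirects only the ``boundary'' flow out to $\varnothing$ rather than crossing region boundaries, that the ironing in Definition~\ref{def:multiiron} is well-posed on a contiguous interval, and that Lemma~\ref{lem:ironed virtual value}'s comparison between the restricted ironing $\tp^{*}_{ij}$ and Myerson's full ironing $\tp_{ij}$ applies. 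Once that observation is isolated, the three ingredients (Claim~\ref{clm:value term}, Lemma~\ref{lem:multicycles}, and Lemma~\ref{lem:ironed virtual value}) assemble without further calculation.
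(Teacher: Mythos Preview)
Your proposal is correct and follows essentially the same route as the paper: the paper's proof simply says to combine Lemmas~\ref{lem:multicycles} and~\ref{lem:ironed virtual value} for the first bullet, and Claim~\ref{clm:value term} together with the observation that the ironing cycles do not affect non-favorite virtual values for the second bullet. Your additional remarks on $R_0^{(t_{-i})}$ and on the role of upwards-closedness are accurate elaborations of details the paper leaves implicit.
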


\begin{proof}
To see the first bullet, combine Lemmas~\ref{lem:multicycles} and~\ref{lem:ironed virtual value}. To see the second bullet, observe Claim~\ref{clm:value term}, combined with the fact that the cycles added via Lemma~\ref{lem:multicycles} doesn't affect virtual values for the non-favorite items.
\end{proof}

\begin{corollary}\label{cor:upwardsclosed} Let $\{R_j^{(t_{-i})}\}_{j \in [m]}$ define upwards-closed regions. Then for any BIC mechanism with $(\pi,p)$ as its reduced form: 

$$\sum_i \sum_{t_i \in T_i} f_i(t_i) \cdot p_i(t_i) \leq \sum_i \sum_{t_i \in T_i} \sum_j f_i(t_i) \cdot \pi_{ij}(t_i) \cdot \left(t_{ij} \cdot \I\left[t_i \notin R_j^{(t_{-i})}\right] + \tilde{\varphi}_{ij}(t_{ij}) \cdot \I\left[t_i \in R_j^{(t_{-i})}\right]\right)$$
\end{corollary}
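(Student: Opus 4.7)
The plan is to feed a single bidder-level flow into Theorem~\ref{thm:revenue less than virtual welfare}, obtained by averaging the $t_{-i}$-indexed canonical flows of Lemma~\ref{lem:flow properties} against $f_{-i}$. Concretely, for each bidder $i$ let $\lambda_i^{(t_{-i})}$ denote the flow from Lemma~\ref{lem:flow properties}, and define
$$\lambda_i(t_i, t_i') \;:=\; \sum_{t_{-i} \in T_{-i}} f_{-i}(t_{-i}) \cdot \lambda_i^{(t_{-i})}(t_i, t_i').$$
Flow conservation at every node in $T_i$ is a linear constraint, and the source-to-$t_i$ weight equals $f_i(t_i)$ in each $\lambda_i^{(t_{-i})}$ independently of $t_{-i}$, so the convex combination $\lambda_i$ is itself a valid flow in the sense of Lemma~\ref{lem:useful dual}. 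Hence $\lambda=\{\lambda_i\}_i$ is a useful dual.

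Since Definition~\ref{def:virtual value} is linear in $\lambda$, the induced virtual values satisfy $\Phi_i^\lambda(t_i) = \sum_{t_{-i}} f_{-i}(t_{-i})\,\Phi_i^{(t_{-i})}(t_i)$. Applying Theorem~\ref{thm:revenue less than virtual welfare} to the reduced form $(\pi,p)$ of an arbitrary BIC mechanism and expanding the average over $t_{-i}$ yields
$$\sum_i \sum_{t_i\in T_i} f_i(t_i)\,p_i(t_i) \;\leq\; \sum_i \sum_{t_i\in T_i}\sum_{t_{-i}\in T_{-i}} f_i(t_i)\,f_{-i}(t_{-i})\sum_j \pi_{ij}(t_i)\,\Phi_{ij}^{(t_{-i})}(t_i).$$

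To conclude, I invoke Lemma~\ref{lem:flow properties} \emph{pointwise in $t_{-i}$} on every term in the inner sum. For each fixed $t_{-i}$, the lemma gives $\Phi_{ij}^{(t_{-i})}(t_i) = t_{ij}$ whenever $t_i \notin R_j^{(t_{-i})}$ (which covers both $t_i \in R_0^{(t_{-i})}$ and $t_i \in R_k^{(t_{-i})}$ for some $k\neq j$) and $\Phi_{ij}^{(t_{-i})}(t_i) \leq \tilde{\varphi}_{ij}(t_{ij})$ whenever $t_i \in R_j^{(t_{-i})}$. Since $\pi_{ij}(t_i)\geq 0$, substituting these two cases (which together partition the event space) produces exactly the right-hand side of the corollary, with the outer sum over $t_{-i}$ weighted by $f_{-i}(t_{-i})$ being the implicit averaging of the $t_{-i}$-dependent indicators on the stated bound. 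The only delicate step is the flow-averaging, because Theorem~\ref{thm:revenue less than virtual welfare} requires a dual that is not indexed by $t_{-i}$; everything after that is a direct substitution from Lemma~\ref{lem:flow properties} together with nonnegativity of $\pi$.
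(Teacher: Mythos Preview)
Your proof is correct and follows the paper's approach, but you have done slightly more work than the corollary itself requires. Corollary~\ref{cor:upwardsclosed} is stated for a \emph{single fixed} choice of upwards-closed regions per bidder (the superscript $(t_{-i})$ is just a label carried over from Lemma~\ref{lem:flow properties}, not a running variable), so no averaging is needed: for each bidder $i$, take the flow $\lambda_i^{(t_{-i})}$ guaranteed by Lemma~\ref{lem:flow properties}, which is already a useful dual in the sense of Lemma~\ref{lem:useful dual}; apply Theorem~\ref{thm:revenue less than virtual welfare}; and substitute Lemma~\ref{lem:flow properties}'s bounds on $\Phi_{ij}^{(t_{-i})}$ using $\pi_{ij}\geq 0$. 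That is the paper's (implicit) one-line proof.

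The convex-combination step you perform --- averaging $\lambda_i^{(t_{-i})}$ over $t_{-i}\sim D_{-i}$ --- is exactly what the paper does immediately \emph{after} this corollary (see the ``Canonical Flow for Multiple Items'' definition) in order to obtain Theorem~\ref{thm:virtual welfare ub}, where the indicators are replaced by $\Pr_{v_{-i}}[t_i\in R_j^{(v_{-i})}]$. So what you have written is really a proof of Theorem~\ref{thm:virtual welfare ub}; Corollary~\ref{cor:upwardsclosed} as stated is the simpler unaveraged case, which you can recover from your argument by specializing to a family of regions that is constant in $t_{-i}$.
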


{Corollary~\ref{cor:upwardsclosed} upper bounds the optimal revenue for arbitrary upwards-closed regions. Corollary~\ref{cor:upwardsclosedadditive} further relaxes this upper bound by considering the $\pi$ which maximizes expected Virtual Welfare for additive bidders. Note that the relaxation below has been used in follow-up works (e.g.~\cite{EdenFFTW17a}) for additive bidders, but is a very loose relaxation for unit-demand bidders.}

\begin{corollary}\label{cor:upwardsclosedadditive}{Let $\{R_j^{(t_{-i})}\}_{j \in [m]}$ define upwards-closed regions. Then: 
$$\rev(D) \leq \mathbb{E}_{\vec{v} \sim D}\left[\sum_j \max_{i} \left\{t_{ij} \cdot \I\left[t_i \notin R_j^{(t_{-i})}\right] + \tilde{\varphi}_{ij}(t_{ij}) \cdot \I\left[t_i \in R_j^{(t_{-i})}\right]\right\}\right].$$} 
\end{corollary}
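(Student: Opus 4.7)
The plan is to derive Corollary~\ref{cor:upwardsclosedadditive} directly from Corollary~\ref{cor:upwardsclosed} by specializing the feasibility polytope $\polytope$ to the additive case and relaxing the inner sum over bidders to a pointwise maximum. First, I would unpack the right-hand side of Corollary~\ref{cor:upwardsclosed} by recalling that $\pi_{ij}(t_i) = \mathbb{E}_{t_{-i} \sim D_{-i}}[x_{ij}(t_i, t_{-i})]$, where $x_{ij}(\vec{t}) \in [0,1]$ is the ex-post allocation rule of the underlying mechanism. Because the region indicator $\I[t_i \in R_j^{(t_{-i})}]$ already depends on $t_{-i}$, the implicit expectation can be pulled out and combined with the outer sum over $t_i$ into a single expectation over $\vec{t} \sim D$, yielding
\begin{equation*}
\rev(D) \;\leq\; \mathbb{E}_{\vec{t}\sim D}\!\left[\sum_{j=1}^m \sum_{i=1}^n x_{ij}(\vec{t}) \cdot \Psi_{ij}^{(t_{-i})}(t_i)\right],
\end{equation*}
where $\Psi_{ij}^{(t_{-i})}(t_i) := t_{ij}\,\I[t_i \notin R_j^{(t_{-i})}] + \tilde{\varphi}_{ij}(t_{ij})\,\I[t_i \in R_j^{(t_{-i})}]$ denotes the bidder-$i$, item-$j$ virtual value supplied by Lemma~\ref{lem:flow properties}.

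Next, I would invoke additive feasibility: for every realization $\vec{t}$ and every item $j$, the ex-post allocation obeys $x_{ij}(\vec{t}) \geq 0$ and $\sum_i x_{ij}(\vec{t}) \leq 1$. In particular, the inner sum $\sum_i x_{ij}(\vec{t}) \cdot \Psi_{ij}^{(t_{-i})}(t_i)$ is a subconvex combination of the values $\{\Psi_{ij}^{(t_{-i})}(t_i)\}_i$ together with a residual weight $1 - \sum_i x_{ij}(\vec{t})$ at $0$, and so is at most $\max_i \Psi_{ij}^{(t_{-i})}(t_i)$ whenever that maximum is non-negative. Summing over $j$ and exchanging sum and expectation then gives exactly the stated benchmark.

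The only step I expect to require care is the sign of $\max_i \Psi_{ij}^{(t_{-i})}(t_i)$: on profiles where this maximum is strictly negative, the pointwise relaxation above fails, because the mechanism can simply set $x_{\cdot j}(\vec{t}) = \mathbf{0}$ and drive the inner sum to $0 > \max_i \Psi_{ij}^{(t_{-i})}(t_i)$. The clean fix is to first bound the inner sum by $\max\{0,\,\max_i \Psi_{ij}^{(t_{-i})}(t_i)\}$, which is always valid; the stated form then follows because non-negative values make $t_{ij}\cdot\I[t_i \notin R_j^{(t_{-i})}]$ non-negative, so whenever there is at least one bidder whose type lies outside $R_j^{(t_{-i})}$ the max is already non-negative and the two forms agree. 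For the canonical instantiation via the VCG prices $P_{ij}(t_{-i})$ this covers all profiles of interest in subsequent sections. I view this bookkeeping as the only nontrivial obstacle; everything else is direct substitution into Corollary~\ref{cor:upwardsclosed} and the one-line convex-combination inequality.
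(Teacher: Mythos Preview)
Your plan has a real gap in the very first step. In Corollary~\ref{cor:upwardsclosed} (equivalently, the bound in Theorem~\ref{thm:virtual welfare ub}), the region indicator is \emph{not} inside the same expectation over $t_{-i}$ that defines $\pi_{ij}(t_i)$. The canonical flow is $\lambda_i=\E_{v_{-i}\sim D_{-i}}[\lambda_i^{(v_{-i})}]$, so the induced virtual value is $\Phi_{ij}(t_i)=\E_{v_{-i}}[\Phi_{ij}^{(v_{-i})}(t_i)]$ with $v_{-i}$ drawn \emph{independently} of the $t_{-i}$ that appears when you expand $\pi_{ij}(t_i)=\E_{t_{-i}}[x_{ij}(t_i,t_{-i})]$. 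Concretely, the right-hand side of Corollary~\ref{cor:upwardsclosed} equals
\[
\sum_i \E_{t_i}\Bigl[\sum_j \pi_{ij}(t_i)\cdot \E_{v_{-i}}\bigl[\Psi_{ij}^{(v_{-i})}(t_i)\bigr]\Bigr]
\;=\;
\E_{\vec t}\,\E_{\vec v}\Bigl[\sum_{i,j} x_{ij}(\vec t)\,\Psi_{ij}^{(v_{-i})}(t_i)\Bigr],
\]
with $\vec t$ and $\vec v$ independent. It is \emph{not} equal to $\E_{\vec t}[\sum_{i,j} x_{ij}(\vec t)\,\Psi_{ij}^{(t_{-i})}(t_i)]$, because you cannot collapse a product of two expectations over $D_{-i}$ into a single one when both factors depend on the integration variable. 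Thus the displayed equation you derive in your first paragraph is not a consequence of Corollary~\ref{cor:upwardsclosed}.

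The paper's argument is exactly what it says in the sentence preceding the corollary: simply take the maximum of the right-hand side of Corollary~\ref{cor:upwardsclosed} over all $\pi$ feasible for additive bidders. Since additive feasibility decouples across items and for each item $j$ the coefficient of $\pi_{ij}(t_i)$ is the \emph{averaged} quantity $c_{ij}(t_i):=\E_{v_{-i}}[\Psi_{ij}^{(v_{-i})}(t_i)]=t_{ij}\Pr_{v_{-i}}[t_i\notin R_j^{(v_{-i})}]+\tilde\varphi_{ij}(t_{ij})\Pr_{v_{-i}}[t_i\in R_j^{(v_{-i})}]$ (a function of $t_i$ alone), the optimal $\pi$ gives each item to the bidder with the largest $c_{ij}(t_i)$, yielding $\rev(D)\le \E_{\vec t}\bigl[\sum_j \max_i c_{ij}(t_i)\bigr]$. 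In other words, after the max-over-$\pi$ relaxation the indicator in the corollary's statement should be read as a probability over an independent $v_{-i}$, not as a coupled draw with the outer expectation. Your convex-combination step and your sign caveat are fine once applied to the correct coefficients $c_{ij}(t_i)$; the flaw is entirely in the coupling you introduce before that.
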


{Now we instantiate the specific choice of regions and flows for our canonical flow.} At this point, for each $t_{-i}\in T_{-i}$, we have defined a different flow for bidder $i$. We have shown that this flow induces a virtual valuation function such that bidder $i$'s virtual value for all non-favorite items is equal to their value for those items, and their virtual value for their favorite item is at most their Myersonian ironed virtual value for that item. Figure~\ref{fig:singleflow} contains a diagram illustrating our flow for a single bidder, and Figure~\ref{fig:multiflow} illustrates what the flow might look like for non-zero $(t_{-i})$. 

\begin{figure}
  \centering{\includegraphics[width=0.45\linewidth]{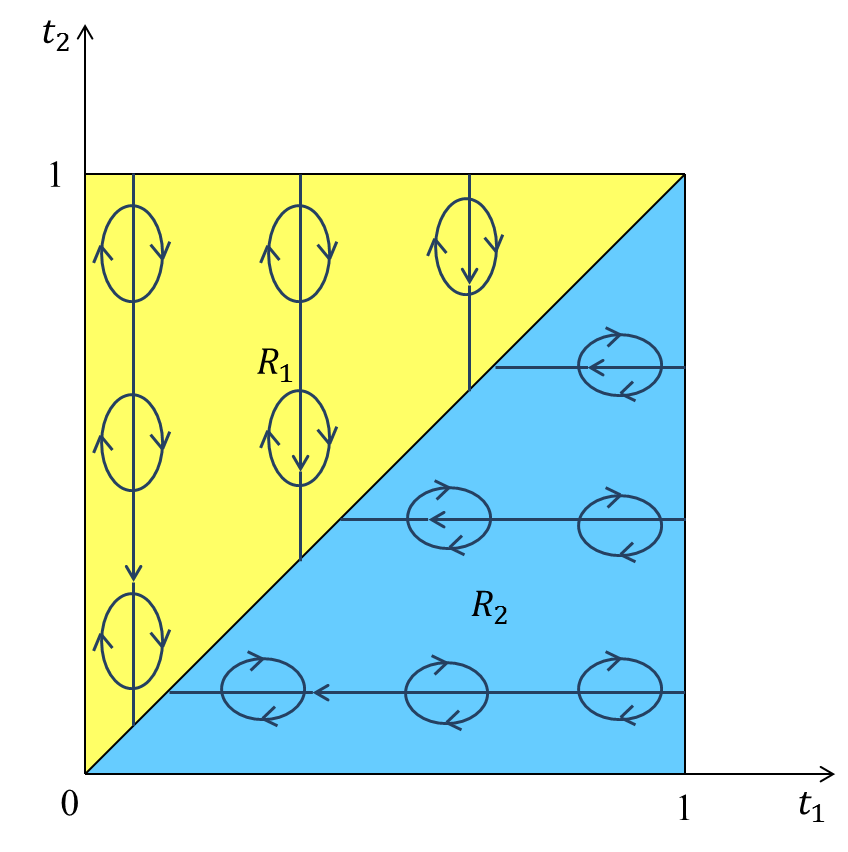}}
  \caption{An example of $\lambda$ (with ironing) for a single bidder.}
  \label{fig:singleflow}
\end{figure}

\begin{figure}
  \centering{\includegraphics[width=0.45\linewidth]{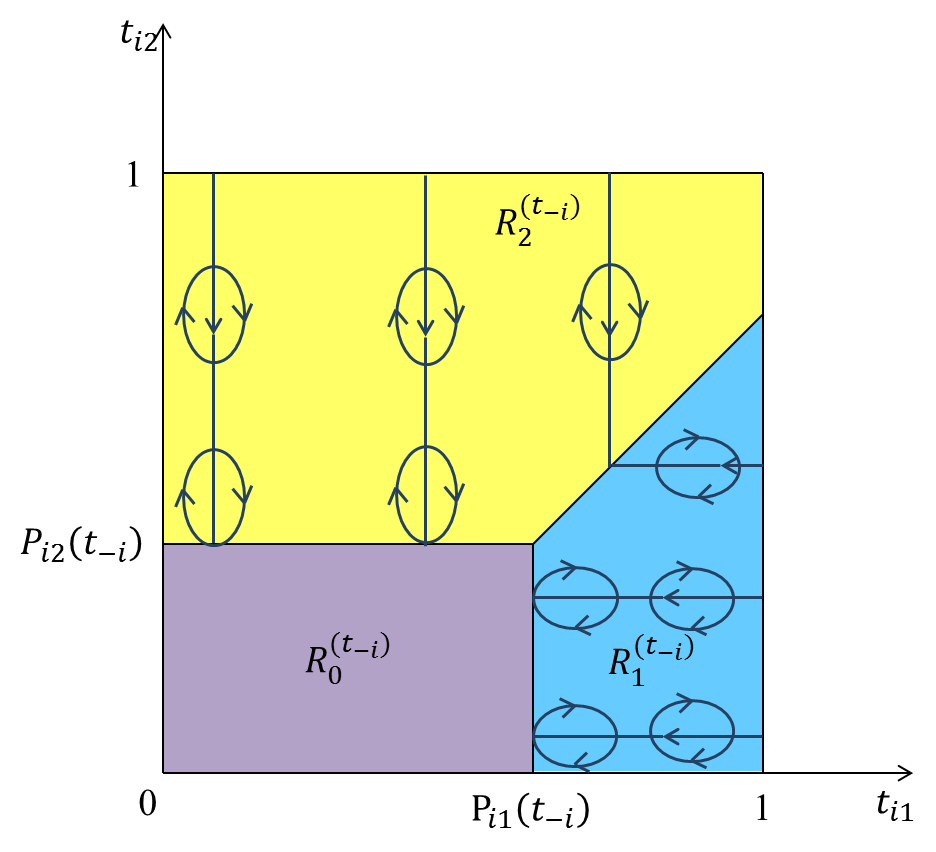}}
  \caption{An example of $\lambda^{(t_{-i})}_{i}$ for two items.}
  \label{fig:multiflow}
\end{figure}

Finally, note that we've defined many possible flows for bidder $i$: each $t_{-i}$ defines different $P_{ij}(t_{-i})$s, which in turn define different $R_j^{(t_{-i})}$s, which define different flows. But we only get to pick one flow for bidder $i$, and it cannot change depending on $t_{-i}$. The flow that we will finally use essentially averages these flows according to $D_{-i}$. 
   
\begin{definition}[Canonical Flow for Multiple Items] Our flow for bidder $i$ is $\lambda_{i}=\sum_{t_{-i}\in T_{-i}} f_{-i}(t_{-i})\lambda^{({t_{-i})}}_{i}$. Accordingly, the virtual value function $\Phi_{i}$ of $\lambda_{i}$ is $\Phi_{i}(\cdot)=\sum_{t_{-i}\in T_{-i}} f_{-i}(t_{-i}) \Phi_{i}^{(t_{-i})}(\cdot)$.
\end{definition}

\noindent\textbf{Intuition behind Our Flow:} The social welfare is a trivial upper bound for revenue, which can be arbitrarily bad in the worst case. To design a good benchmark, we want to replace some of the terms that contribute the most to the social welfare with more manageable ones. The flow $\lambda_i^{(t_{-i})}$ aims to achieve exactly this. For each bidder $i$, we find the item $j$ that contributes the most to the social welfare when awarded to $i$. Then we turn the \textbf{virtual value} of item $j$ into its Myerson's single-dimensional (ironed) virtual value, and keep the \textbf{virtual value} of all the other items equal to the value. This transformation is feasible only if we know exactly $t_{-i}$ and could use a different dual solution for each $t_{-i}$. Since we can't, a natural idea is to define a flow by taking an expectation over $t_{-i}$. This is indeed our flow. 

{We conclude this section with one final lemma and our main theorem regarding the canonical flow. Both proofs are immediate corollaries of the flow definition and Theorem~\ref{thm:revenue less than virtual welfare}.} Note also that our flow only ever sends flow between types that are identical on all but one coordinate, and adjacent in the final coordinate (and that this coordinate is their ``favorite'' item - adjusted by $t_{-i}$). This means that our benchmark not only upper bounds the optimal revenue of any BIC mechanism, but it also upper bounds the optimal revenue of any (non-truthful) mechanism where bidder $i$ with type $t_i$ has no incentive to lie by misreporting their value for a single item to an adjacent value. A corollary of our work in the following sections is that this relaxation does not improve the optimal revenue by more than a constant factor. 

\begin{lemma}\label{lem:value plus virtual value}
For all $i$, $j$, $t_{i}$, $\Phi_{ij}(t_{i})\leq t_{ij}\cdot\Pr_{v_{-i}\sim D_{-i}}\left[t_{i}\notin R^{(v_{-i})}_{j}\right]+\tp_{ij}(t_{ij})\cdot \Pr_{v_{-i}\sim D_{-i}}\left[t_{i}\in R^{(v_{-i})}_{j}\right]$.
\end{lemma}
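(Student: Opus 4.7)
The plan is to expand $\Phi_{ij}(t_i)$ via its definition as an expectation over $t_{-i}$ of the per-profile virtual values $\Phi_{ij}^{(t_{-i})}(t_i)$, and then split this expectation into two cases according to whether $t_i$ lies in the favorite-item region $R_j^{(t_{-i})}$ or not. Both cases are already handled by Lemma~\ref{lem:flow properties}.

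More precisely, by the definition of the canonical flow,
\[
\Phi_{ij}(t_i) \;=\; \sum_{t_{-i} \in T_{-i}} f_{-i}(t_{-i}) \, \Phi_{ij}^{(t_{-i})}(t_i).
\]
I would partition $T_{-i}$ into $A_j(t_i) = \{t_{-i} : t_i \in R_j^{(t_{-i})}\}$ and its complement. On $A_j(t_i)$, the first bullet of Lemma~\ref{lem:flow properties} gives $\Phi_{ij}^{(t_{-i})}(t_i) \leq \tilde{\varphi}_{ij}(t_{ij})$. On the complement, $t_i$ lies in some $R_k^{(t_{-i})}$ with $k \neq j$ (possibly $k=0$), and the second bullet of Lemma~\ref{lem:flow properties} then yields $\Phi_{ij}^{(t_{-i})}(t_i) = t_{ij}$ (since $j$ is a non-favorite index for $t_i$ under $t_{-i}$).

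Plugging these two bounds back and recognizing the sums of $f_{-i}(t_{-i})$ over the two subsets as the stated probabilities $\Pr_{v_{-i} \sim D_{-i}}[t_i \in R_j^{(v_{-i})}]$ and $\Pr_{v_{-i} \sim D_{-i}}[t_i \notin R_j^{(v_{-i})}]$ completes the proof. There is essentially no obstacle here: the work has already been done in Claim~\ref{clm:value term}, Lemma~\ref{lem:multicycles}, and Lemma~\ref{lem:ironed virtual value}, which are packaged into Lemma~\ref{lem:flow properties}; the current lemma is simply the averaging of those per-$t_{-i}$ statements under the distribution $D_{-i}$.
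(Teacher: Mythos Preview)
Your proposal is correct and matches the paper's approach: the paper simply notes that this lemma is an immediate corollary of the definition $\Phi_i(\cdot)=\sum_{t_{-i}} f_{-i}(t_{-i})\,\Phi_i^{(t_{-i})}(\cdot)$ together with the per-$t_{-i}$ bounds established in Lemma~\ref{lem:flow properties}, which is exactly the averaging argument you spell out.
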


\begin{theorem}\label{thm:virtual welfare ub}
Let $M$ be any BIC mechanism with $\big(\pi, p\big)$ as its reduced form. The expected revenue of $M$ is upper bounded by the expected virtual welfare of the same allocation rule with respect to the canonical virtual value function $\Phi_{i}(\cdot)$. In particular, 
\begin{align*}
\sum_{i}&\sum_{t_{i}\in T_{i}} f_{i}(t_{i})\cdot p_{i}(t_{i})
\leq\sum_{i}\sum_{t_{i}\in T_{i}} \sum_{j} f_{i}(t_{i})\cdot \pi_{ij}(t_{i})\cdot \Phi_{ij}(t_{i})\\
&\leq \sum_{i}\sum_{t_{i}\in T_{i}} \sum_{j} f_{i}(t_{i})\cdot \pi_{ij}(t_{i})\cdot 
\left(t_{ij}\cdot\Pr_{v_{-i}\sim D_{-i}}\left[t_{i}\notin R^{(v_{-i})}_{j}\right]
+\tp_{ij}(t_{ij})\cdot \Pr_{v_{-i}\sim D_{-i}}\left[t_{i}\in R^{(v_{-i})}_{j}\right]\right)  \stepcounter{equation}\tag{\theequation} \label{eq:ub} 
\end{align*}
\end{theorem}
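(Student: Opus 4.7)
The plan is to derive the two inequalities separately, both as nearly immediate consequences of results already established in this section.

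For the first inequality, I would first verify that the canonical flow $\lambda_i = \sum_{t_{-i} \in T_{-i}} f_{-i}(t_{-i})\, \lambda_i^{(t_{-i})}$ is a useful dual in the sense of Lemma~\ref{lem:useful dual}. Each $\lambda_i^{(t_{-i})}$ was defined (in the initial canonical flow, and preserved under the cycle-additions of Lemma~\ref{lem:multicycles}) so that flow conservation holds at every type node when the source supplies mass $f_i(t_i)$ at each $t_i$. Since the flow-conservation equations in Lemma~\ref{lem:useful dual} are linear in $\lambda$, any convex combination of valid flows (with the same source weights $f_i(\cdot)$) is again a valid flow, so $\lambda_i$ is useful. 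By linearity of the virtual-value map in Definition~\ref{def:virtual value}, the induced virtual valuation function is exactly $\Phi_i(\cdot) = \sum_{t_{-i}} f_{-i}(t_{-i})\, \Phi_i^{(t_{-i})}(\cdot)$, matching the definition given just before the theorem. Then Theorem~\ref{thm:revenue less than virtual welfare} applied to this useful dual immediately yields
\[
\sum_{i}\sum_{t_{i}\in T_{i}} f_{i}(t_{i})\cdot p_{i}(t_{i}) \;\leq\; \sum_{i}\sum_{t_{i}\in T_{i}} \sum_{j} f_{i}(t_{i})\cdot \pi_{ij}(t_{i})\cdot \Phi_{ij}(t_{i}),
\]
which is the first inequality in \eqref{eq:ub}.

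For the second inequality, I would apply Lemma~\ref{lem:value plus virtual value} pointwise. That lemma states
\[
\Phi_{ij}(t_{i})\;\leq\; t_{ij}\cdot\Pr_{v_{-i}\sim D_{-i}}\!\left[t_{i}\notin R^{(v_{-i})}_{j}\right]+\tp_{ij}(t_{ij})\cdot \Pr_{v_{-i}\sim D_{-i}}\!\left[t_{i}\in R^{(v_{-i})}_{j}\right].
\]
Since $\pi_{ij}(t_i) \in [0,1]$ and $f_i(t_i) \geq 0$, multiplying both sides by $f_i(t_i) \cdot \pi_{ij}(t_i)$ preserves the inequality. Summing termwise over all $i$, $t_i \in T_i$, and $j$ yields the second inequality in \eqref{eq:ub}.

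I do not expect any step here to be a serious obstacle: the theorem is essentially the culmination of the section and is designed to be a clean corollary of Theorem~\ref{thm:revenue less than virtual welfare} and Lemma~\ref{lem:value plus virtual value}. The only place to be careful is the first step, where one must check that averaging the per-$t_{-i}$ flows over $D_{-i}$ still produces a useful dual for the single (unconditional) LP in Figure~\ref{fig:Lagrangian}; this is handled by the observation that both flow conservation and the induction of virtual values from flows are linear, so everything commutes with the convex combination.
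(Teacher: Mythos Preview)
Your proposal is correct and follows essentially the same approach as the paper, which states that the theorem is an immediate corollary of the flow definition and Theorem~\ref{thm:revenue less than virtual welfare}. You have simply made explicit the two points the paper leaves implicit: that the averaged flow $\lambda_i$ remains a useful dual (by linearity of the flow-conservation constraints), and that the second inequality follows termwise from Lemma~\ref{lem:value plus virtual value} using non-negativity of $f_i(t_i)\pi_{ij}(t_i)$.
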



\section{Warm Up: Single Bidder}\label{sec:single}
As a warm up, we start with the single bidder case. In this section, our goal is to show how to use the bounds obtained via Theorem~\ref{thm:virtual welfare ub} to prove that simple mechanisms are approximately optimal for a single additive or unit-demand bidder with independent item values, recovering results of~\cite{ChawlaMS10} and~\cite{BabaioffILW14}. Throughout this section, we keep the same notations but drop the subscript $i$ and superscript $(t_{-i})$ whenever is appropriate.
 
\paragraph{Canonical Flow for a Single Bidder.}
Since the canonical flow and the corresponding virtual valuation functions are defined based on other bidders types $t_{-i}$, let us see how it is simplified when there is only a single bidder. First, the VCG prices are all $0$, therefore $\lambda$ is simply one flow instead of a distribution of different flows. Second, for the same reason, the region $R_{0}$ is empty and region $R_{j}$ contains all types $t$ with $t_{j}\geq t_{k}$ for all $k$ (see Figure~\ref{fig:singleflow} for an example). 
This simplifies Expression~(\ref{eq:ub}) to 
\begin{align*}
&\sum_{t\in T} \sum_{j} f(t)\cdot \pi_{j}(t)\cdot \Big(t_{j}\cdot \I[t\notin R_{j}]+\tp_{j}(t_{j})\cdot \I[t\in R_{j}]\Big)\\
=& \sum_{t\in T} \sum_{j} f(t)\cdot \pi_{j}(t)\cdot t_{j}\cdot \I[t\notin R_{j}] \quad (\textsc{Non-Favorite})\\
&~~~~~+ \sum_{t\in T} \sum_{j} f(t)\cdot \pi_{j}(t)\cdot \tp_{j}(t_{j})\cdot \I[t\in R_{j}]\quad (\textsc{Single})
\end{align*}
\notshow{\begin{align*}
&\sum_{t\in T} \sum_{j} f(t)\cdot \pi_{j}(t)\cdot \Big(t_{j}\cdot \I[t\notin R_{j}]+\tp_{j}(t_{j})\cdot \I[t\in R_{j}]\Big)\\
=& \sum_{t\in T} \sum_{j} f(t)\cdot \pi_{j}(t)\cdot t_{j}\cdot \I[t\notin R_{j}] \stepcounter{equation}\tag{\theequation} \label{eq:single bidder value term}\\
&+ \sum_{t\in T} \sum_{j} f(t)\cdot \pi_{j}(t)\cdot \tp_{j}(t_{j})\cdot \I[t\in R_{j}]\stepcounter{equation}\tag{\theequation} \label{eq:single bidder virtual value term}\\
\end{align*}}

Above, $\textsc{Single}$ refers to the bound coming from cases where $t \in R_j$. We name it ``single'' to reference the connection to single-dimensional settings. $\textsc{Non-Favorite}$ refers to the bound coming from cases where $t \notin R_j$, and we name it ``non-favorite'' because this contribution only comes from non-favorite items. We bound \textsc{Single} below, and \textsc{Non-Favorite} differently for unit-demand and additive valuations. 

\begin{lemma}\label{lem:1 SINGLE}
 For any feasible $\pi(\cdot)$, \textsc{Single} $\leq$ \copies.
\end{lemma}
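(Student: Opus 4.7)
\textbf{Proof plan for Lemma~\ref{lem:1 SINGLE}.} The strategy is to interpret the expression defining $\textsc{Single}$ as (an upper bound on) the expected ironed virtual welfare of a feasible allocation in the copies setting, and then invoke Myerson's theorem (Theorem~\ref{thm:myerson}).

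First, define the allocation rule $\tilde{\pi}_j(t) := \pi_j(t) \cdot \I[t \in R_j]$ in the copies setting. Since $R_0 = \emptyset$ in the single-bidder case and $R_1,\dots,R_m$ partition $T$ with lexicographic tie-breaking, for each $t$ exactly one index $j^*(t)$ satisfies $\I[t \in R_{j^*(t)}]=1$. The original $\pi(t)$ is realized as some distribution over sets $S \in \mathcal{F}$; replacing each such $S$ by $S \cap \{j^*(t)\}$ produces a distribution realizing $\tilde{\pi}(t)$. The resulting sets remain in $\mathcal{F}$ for both the additive case ($\mathcal{F}$ is all subsets of $[m]$) and the unit-demand case ($\mathcal{F}$ consists of $\emptyset$ and singletons), so $\tilde{\pi}$ is a feasible allocation rule in the copies setting.

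Second, in the copies setting the $m$ items correspond to $m$ independent single-dimensional bidders with feasibility $\mathcal{F}$. By Theorem~\ref{thm:myerson} applied copy by copy, the greedy rule that pointwise selects $\argmax_{S \in \mathcal{F}}\sum_{j \in S}\max(\tp_j(t_j),0)$ is monotone in each $t_j$ (using monotonicity of ironed virtual values from Observation~\ref{obs:monotonevv} together with downward-closedness of $\mathcal{F}$), admits proper payments, and thereby achieves expected revenue $\mathbb{E}_t\!\left[\max_{S \in \mathcal{F}}\sum_{j \in S}\max(\tp_j(t_j),0)\right]$, which therefore equals \copies. Since $\tilde{\pi}(t)$ is a distribution over feasible sets, $\sum_j \tilde{\pi}_j(t)\max(\tp_j(t_j),0) \leq \max_{S \in \mathcal{F}}\sum_{j \in S}\max(\tp_j(t_j),0)$ pointwise.

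Combining these two steps, and using $\tilde{\pi}_j(t) \geq 0$ together with $\tp_j(t_j) \leq \max(\tp_j(t_j),0)$, we obtain $\textsc{Single} = \mathbb{E}_t\!\left[\sum_j \tilde{\pi}_j(t)\,\tp_j(t_j)\right] \leq \mathbb{E}_t\!\left[\sum_j \tilde{\pi}_j(t)\,\max(\tp_j(t_j),0)\right]$, and the right-hand side is at most \copies by the previous paragraph. The main (and only non-trivial) obstacle is the feasibility verification of $\tilde{\pi}$ in the copies setting, which relies on downward-closedness of $\mathcal{F}$; the rest of the argument is a direct application of Myerson's single-dimensional optimal mechanism characterization plus the trivial inequality $x \leq \max(x,0)$.
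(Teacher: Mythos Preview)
Your proof is correct and follows essentially the same approach as the paper's. Both arguments define the thinned allocation $\tilde{\pi}_j(t) = \pi_j(t)\cdot\I[t\in R_j]$, verify it is feasible in the copies setting, observe that \textsc{Single} is exactly the expected ironed virtual welfare of $\tilde{\pi}$, and then bound this by \copies\ via Myerson's single-dimensional theory. The paper compresses the last step into one sentence (``the optimal revenue \copies\ equals the maximum ironed virtual welfare, thus no smaller than \textsc{Single}''), whereas you spell out the greedy ironed-virtual-welfare maximizer and insert the intermediate $\max(\tp_j(t_j),0)$ bound---harmless extra detail, but not needed.
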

\begin{proof}
Assume $M$ is the mechanism that induces $\pi(\cdot)$. Consider another mechanism $M'$ for the Copies setting, such that for every type profile $t$, $M'$ serves agent $j$ iff $M$ allocates item $j$ in the original setting and $t\in R_{j}$. As $M$ is feasible in the original setting, $M'$ is clearly feasible in the Copies setting. When agent $j$'s type is $t_{j}$, its probability of being served in $M'$ is $\sum_{t_{-j}}f_{-j}(t_{-j})\cdot\pi_{j}(t_{j},t_{-j})\cdot \I[t\in R_{j}]$ for all $j$ and $t_{j}$. Therefore, \textsc{Single} is the ironed virtual welfare achieved by $M'$ with respect to $\tp(\cdot)$. Since the copies setting is a single dimensional setting, the optimal revenue \copies\ equals the maximum ironed virtual welfare, thus no smaller than \textsc{Single}. Note that this proof makes use of the assumption that item values are independent, as otherwise Myerson's theory doesn't apply.
\end{proof}

\paragraph{Upper Bound for a Unit-demand Bidder.}
{As mentioned previously, the bulk of our work is in obtaining a benchmark and properly decomposing it. Now that we have a decomposition, we can use techniques  similar to those of Chawla et al.~\cite{ChawlaHK07,ChawlaHMS10,ChawlaMS10} to approximate each term.} 

\begin{lemma}\label{lem:unit-demand value}
When the types are unit-demand, for any feasible $\pi(\cdot)$, \textsc{Non-Favorite} $\leq$ \copies.
\end{lemma}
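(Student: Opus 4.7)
The plan is a two-step reduction: first I bound $\textsc{Non-Favorite}$ pointwise by the second-order statistic of the value vector, and then I dominate its expectation by $\copies$ via the revenue of the second-price (Vickrey) auction in the copies setting.

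For the first step, fix any type $t\in T$. Since the bidder is unit-demand, $\sum_j \pi_j(t)\leq 1$, and in particular $\sum_j \pi_j(t)\cdot\I[t\notin R_j]\leq 1$. In the single-bidder case $P_j(\cdot)\equiv 0$, so by the definition of $R_j$, the region $R_j$ consists exactly of those types whose lex-first argmax coordinate is $j$; consequently $\{j : t\notin R_j\}$ is precisely the set of non-favorite items, and $\max_{j:t\notin R_j} t_j$ equals the second-largest coordinate of $t$ (with ties broken consistently with the definition of $R_j$). Call this quantity $t_{(2)}$. Then
$$\sum_j \pi_j(t)\cdot t_j\cdot \I[t\notin R_j] \leq \Big(\max_{j:t\notin R_j} t_j\Big)\cdot \sum_j \pi_j(t)\cdot \I[t\notin R_j]\leq t_{(2)}.$$
Taking expectation over $t\sim D$ yields $\textsc{Non-Favorite}\leq \E_{t\sim D}[t_{(2)}]$.

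For the second step, recall that in the copies setting there are $m$ single-dimensional agents with independent values drawn from $D_1,\ldots,D_m$ under a single-item feasibility constraint. The second-price (Vickrey) auction is DSIC in this setting and generates expected revenue exactly $\E_{t\sim D}[t_{(2)}]$. Since $\copies$ denotes Myerson's optimal BIC revenue in this setting, and Myerson's revenue dominates the revenue of any DSIC mechanism (including Vickrey), we get $\copies\geq \E_{t\sim D}[t_{(2)}]$.

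Chaining the two inequalities yields $\textsc{Non-Favorite}\leq \copies$. The only subtle point is the tie-breaking convention: the lex rule used in the definition of $R_j$ ensures that for every $t$ exactly one index is ``favorite,'' so the maximum of $t_j$ over non-favorite indices genuinely equals the second-order statistic of $t$, and both boundary cases ($m=1$, or multiple coordinates tied at the maximum) are handled correctly. No serious obstacle arises; the proof is a clean CMS-style reduction that leverages the fact that in a unit-demand instance the total allocation probability to non-favorite items is at most one.
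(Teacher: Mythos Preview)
Your proof is correct and follows essentially the same approach as the paper: both bound each term $t_j\cdot\I[t\notin R_j]$ by the second-order statistic $S(t)$, use the unit-demand constraint $\sum_j \pi_j(t)\leq 1$ to collapse the sum, and then observe that $\E[S(t)]$ is exactly the Vickrey revenue in the copies setting, which is at most $\copies$. The only cosmetic difference is that the paper bounds $t_j\cdot\I[t\notin R_j]\leq S(t)$ termwise before summing, whereas you factor out $\max_{j:t\notin R_j}t_j=t_{(2)}$ first; these are equivalent.
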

\begin{proof}
Indeed, we will prove that \textsc{Non-Favorite} is upper bounded by the revenue of the VCG mechanism in the Copies setting. Define $S(t)$ to be the second largest number in $\{t_{1},\cdots, t_{m}\}$. When the types are unit-demand, the Copies setting is a single item auction with $m$ bidders. Therefore, if we run the Vickrey auction in the Copies setting, the revenue is $\sum_{t\in T} f(t)\cdot S(t)$. If $t\notin R_{j}$, then there exists some $k\neq j$ such that $t_{k}\geq t_{j}$, so $t_{j}\cdot \I[t\in R_{j}]\leq S(t)$ for all $j$. Therefore, 
$\sum_{t\in T} \sum_{j} f(t)\cdot \pi_{j}(t)\cdot t_{j}\cdot \I[t\notin R_{j}]\leq \sum_{t\in T} \sum_{j} f(t)\cdot \pi_{j}(t)\cdot S(t) \leq \sum_{t\in T} f(t)\cdot S(t)$. The last inequality is because the bidder is unit demand, so $\sum_{j}\pi_{j}(t)\leq 1$. 
\end{proof}

Combining Lemma~\ref{lem:1 SINGLE} and Lemma~\ref{lem:unit-demand value}, we recover the result of 
Chawla et al. \cite{ChawlaMS10}:\footnote{This bound combined with~\cite{ChawlaHMS10} recovers the state-of-the-art $4$-approximation via item-pricing.}
\begin{theorem}\label{thm:1 unit-demand}
For a single unit-demand bidder, the optimal revenue is upper bounded by $2$\copies.
\end{theorem}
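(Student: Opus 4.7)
The plan is to apply Theorem~\ref{thm:virtual welfare ub} instantiated in the single-bidder setting, which (as derived in the paragraph preceding Lemma~\ref{lem:1 SINGLE}) gives the decomposition of the revenue upper bound into the two terms $\textsc{Single}$ and $\textsc{Non-Favorite}$. Since we are in the single-bidder case, we get to work with a single deterministic flow rather than an expectation over $t_{-i}$, so the indicator $\I[t \in R_j]$ simply encodes whether item $j$ is a favorite item for type $t$ (breaking ties lexicographically).

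Next I would directly invoke the two lemmas proved immediately above: Lemma~\ref{lem:1 SINGLE} to upper bound the favorite-item contribution $\textsc{Single}$ by $\copies$, and Lemma~\ref{lem:unit-demand value} to upper bound the non-favorite-item contribution $\textsc{Non-Favorite}$ by $\copies$ (this second bound crucially uses that the bidder is unit-demand, so $\sum_j \pi_j(t) \leq 1$ and the second-order statistic of $t$ upper bounds the non-favorite values). Summing these two bounds gives $\rev(D) \le \textsc{Single} + \textsc{Non-Favorite} \le 2\copies(D)$, and this holds for the allocation rule $\pi$ of any BIC mechanism, so the bound applies to the optimal BIC mechanism in particular.

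Because all of the technical content is already in Lemma~\ref{lem:1 SINGLE}, Lemma~\ref{lem:unit-demand value}, and Theorem~\ref{thm:virtual welfare ub}, there is really no obstacle to overcome in this theorem itself: it is essentially a one-line assembly of those three results. The only thing worth being careful about is that Theorem~\ref{thm:virtual welfare ub} holds for the reduced form $(\pi,p)$ of any BIC mechanism (including the revenue-optimal one), so taking a supremum on the left-hand side over all BIC mechanisms yields precisely $\rev(D) \le 2\copies(D)$, which is the desired statement.
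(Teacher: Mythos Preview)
Your proposal is correct and matches the paper's own proof exactly: the theorem is stated as an immediate consequence of combining Lemma~\ref{lem:1 SINGLE} and Lemma~\ref{lem:unit-demand value} (applied to the decomposition coming from Theorem~\ref{thm:virtual welfare ub}). There is nothing to add.
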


\paragraph{Upper Bound for an Additive Bidder.}
When the bidder is additive, we need to further decompose \textsc{Non-Favorite} into two terms we call \textsc{Core} and \textsc{Tail}. For simplicity of notation in the proofs that follow, define $r=\srev$. {Again, we remind the reader that most of our work is already done in obtaining our decomposition. The remaining portion of the proof is indeed inspired by prior work of Babaioff et al.~\cite{BabaioffILW14}.\footnote{The resulting 6-approximation is roughly the state of the art - subsequent works have improved the analysis of the same mechanism to guarantee a 5.2-approximation~\cite{MaSL15}.}}

\begin{align*}
\sum_{t\in T} \sum_{j} f(t)&\cdot \pi_{j}(t)\cdot t_{j}\cdot \I[t\notin R_{j}]
\leq \sum_{t\in T} \sum_{j} f(t)\cdot t_{j}\cdot \I[t\notin R_{j}]\\
=& \sum_{j} \sum_{t_{j}> r} f_{j}(t_{j})\cdot t_{j}\cdot \sum_{t_{-j}}f_{-j}(t_{-j})\cdot \I[t\notin R_{j}]
+\sum_{j} \sum_{t_{j}\leq r} f_{j}(t_{j})\cdot t_{j}\cdot\sum_{t_{-j}}f_{-j}(t_{-j})\cdot \I[t\notin R_{j}]\\
\leq& \sum_{j} \sum_{t_{j}> r} f_{j}(t_{j})\cdot t_{j}\cdot \Pr_{t_{-j}\sim D_{-j}}[t\notin R_{j}]\quad (\textsc{Tail})\ +\ \sum_{j} \sum_{t_{j}\leq r} f_{j}(t_{j})\cdot t_{j}\quad (\textsc{Core})
\end{align*}

Before proceeding, let's parse term \textsc{Tail} above (we'll parse $\textsc{Core}$ shortly after). \textsc{Tail} captures contributions to the bound coming from \emph{non-favorite} items whose value is at least \srev. In the term \textsc{Tail}, the main idea is that we should expect $t_j \cdot \Pr[t \notin R_j]$ to be small when $t_j > \srev$. This is because $t_j$ is already quite large, so we should expect the probability that we see another item with even larger value (a necessary condition for $t \notin R_j$) to be quite small. Lemma~\ref{lem:single tail} captures this formally, and makes use of the fact that item values are independent.

\begin{lemma}\label{lem:single tail}
\textsc{Tail} $\leq \srev$.
\end{lemma}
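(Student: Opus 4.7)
\textbf{Proof plan for Lemma~\ref{lem:single tail}.} For each item $k$, let $r_k$ denote the optimal Myerson revenue from selling item $k$ alone, so that $\srev = r = \sum_{k} r_k$ by definition. The plan is to bound $\textsc{Tail}$ in three successive steps, each controlled by a different marginal-revenue argument, and the backbone of the proof is a Markov-type inequality applied twice (to different distributions).

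First, I would upper bound $\Pr_{t_{-j} \sim D_{-j}}[t \notin R_j]$. In the single-bidder setting, $t \in R_j$ iff $j$ is the smallest index in $\argmax_k\{t_k\}$, so a necessary condition for $t \notin R_j$ is that some other coordinate $k \neq j$ satisfies $t_k \geq t_j$. A union bound (using independence of the coordinates) then gives
\[
\Pr_{t_{-j} \sim D_{-j}}[t \notin R_j] \;\leq\; \sum_{k \neq j} \Pr_{t_k \sim D_k}[t_k \geq t_j].
\]
Second, since $r_k \geq t_j \cdot \Pr_{t_k \sim D_k}[t_k \geq t_j]$ (setting $t_j$ as the posted price is a valid single-item pricing for item $k$), we obtain $\Pr[t_k \geq t_j] \leq r_k/t_j$. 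Substituting into the expression for $\textsc{Tail}$ cancels the outer factor of $t_j$:
\[
\textsc{Tail} \;\leq\; \sum_j \sum_{t_j > r} f_j(t_j) \cdot t_j \cdot \sum_{k \neq j}\frac{r_k}{t_j}
= \sum_j \Pr[t_j > r] \cdot \sum_{k \neq j} r_k
\;\leq\; r \cdot \sum_j \Pr[t_j > r],
\]
where the last step uses $\sum_{k \neq j} r_k \leq \sum_k r_k = r$.

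Third, I would apply the same Markov-type reasoning once more, now to item $j$ itself with the price $r$: $r_j \geq r \cdot \Pr[t_j \geq r] \geq r \cdot \Pr[t_j > r]$, so $\Pr[t_j > r] \leq r_j / r$. Summing over $j$ gives $\sum_j \Pr[t_j > r] \leq \sum_j r_j / r = 1$, and plugging back yields $\textsc{Tail} \leq r = \srev$, as desired.

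The only subtle point (the ``main obstacle,'' such as it is) is keeping straight that the tail event $\{t \notin R_j\}$ in the definition of $R_j$ is governed by cross-item comparisons — so one must use a union bound to convert a multi-dimensional event into single-item tail probabilities before the Markov-style inequality can be applied. The rest is bookkeeping; crucially, independence of the coordinates is used both in the union bound and implicitly when we write $\Pr_{t_{-j} \sim D_{-j}}[t_k \geq t_j]$ for a fixed $t_j$.
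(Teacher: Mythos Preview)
Your proof is correct and follows essentially the same skeleton as the paper's: bound $t_j \cdot \Pr[t \notin R_j]$ by $\srev$, then observe that $\sum_j \srev \cdot \Pr[t_j > r]$ is itself the revenue of a separate-selling mechanism and hence at most $\srev$. The one minor difference is in the first step: you apply a union bound and then a per-item Markov inequality ($\Pr[t_k \geq t_j] \leq r_k/t_j$), whereas the paper avoids the union bound by directly noting that posting the uniform price $t_j$ on every item earns at least $t_j \cdot \Pr[\exists k \neq j,\ t_k \geq t_j]$, which is therefore at most $\srev$---a slightly slicker route to the same intermediate bound.
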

\begin{proof}
(Recall that we define $r = \srev$ for ease of notation in some places). By the definition of $R_{j}$, for any given $t_j$, $$\Pr_{t_{-j}\sim D_{-j}}[t\notin R_{j}]\leq\Pr_{t_{-j}\sim D_{-j}}[\exists k\neq j,\ t_{k}\geq t_{j}].\footnote{If not for issues of tie-breaking, this would be equality.}$$
It is clear that by setting price $t_j$ on each item separately, we can make revenue at least $t_j \cdot \Pr_{t_{-j}\sim D_{-j}}[\exists k \neq j, t_k \geq t_j]$, as the buyer will certainly choose to purchase something at price $t_j$ whenever there is an item she values above $t_j$. So we see that therefore $\srev \geq t_j \cdot \Pr_{t_{-j}\sim D_{-j}}[t\notin R_j]$, for all $t_j$. Thus, \textsc{Tail}  $\leq \srev \cdot \sum_{j} \sum_{t_{j}> r} f_{j}(t_{j})=\sum_{j} \srev \cdot \Pr_{t_{j}\sim D_{j}}[t_{j}>\srev]=$ the revenue of selling each item separately at price $\srev$, which by the same exact reasoning is also $\leq \srev$.
\end{proof}

Now, let's parse $\textsc{Core}$. $\textsc{Core}$ captures contibutions to the bound coming from \emph{non-favorite} items whose value is at most $\srev$. The main idea is that \textsc{Core} is the expected sum of independent random variables, each supported on $[0,\srev]$. So maybe \textsc{Core} = $O(\srev)$, which is great. Or, maybe $\textsc{Core}>>\srev$, in which case it should concentrate (due to being the sum of ``small'' independent random variables). In the latter case, we should expect to have $\brev = \Omega(\textsc{Core})$, which is also great. Lemma~\ref{lem:single core} states this formally, and also makes use of the fact that item values are independent.

\begin{lemma}\label{lem:single core}
If we sell the grand bundle at price $\textsc{Core}-2r$, the bidder will purchase it with probability at least $1/2$. In other words, $\brev\geq {\textsc{Core}\over 2}-r$, 
or $\textsc{Core} \leq 2 \brev + 2 \srev.$
\end{lemma}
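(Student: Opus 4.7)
The plan is to post the grand bundle at the deterministic price $p=\textsc{Core}-2r$ and show that the buyer purchases with probability at least $\tfrac12$; the claimed inequality $\brev\ge \textsc{Core}/2-r$ then follows from $\brev\ge p\cdot\Pr[\text{buy}]$. If $\textsc{Core}\le 2r$ the statement is trivial since $\brev\ge 0\ge \textsc{Core}/2-r$, so I will assume $\textsc{Core}>2r$ throughout.

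The key move is to replace the (possibly heavy-tailed) $t_{j}$ by the truncated values $X_{j}:=t_{j}\cdot \I[t_{j}\le r]$ that already appear in the definition of $\textsc{Core}$. Since $X_{j}\le t_{j}$ pointwise, $\Pr[\sum_j t_j\ge p]\ge \Pr[\sum_j X_j\ge p]$, and now I am looking at a sum of \emph{independent} random variables supported on $[0,r]$ whose mean is exactly $\textsc{Core}$. The boundedness $X_j\in[0,r]$ gives the variance bound $\mathrm{Var}(X_j)\le \mathbb{E}[X_j^{2}]\le r\cdot \mathbb{E}[X_j]$, and summing using independence yields $\sigma^{2}:=\mathrm{Var}(\sum_j X_j)\le r\cdot \textsc{Core}$. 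This is the only place the independent-items assumption enters.

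I would then invoke a one-sided concentration inequality (Chebyshev, or more tightly Cantelli) to conclude that $\sum_j X_j$ falls below its mean by more than $2r$ with probability at most $\tfrac12$, which gives $\Pr[\text{buy}]\ge \tfrac12$ and closes the argument. In the regime $\textsc{Core}\le 4r$ Cantelli already delivers this directly, since $\sigma^{2}/(\sigma^{2}+4r^{2})\le \textsc{Core}/(\textsc{Core}+4r)\le \tfrac12$; in the complementary regime a slightly more careful bookkeeping is required.

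The main obstacle I expect is exactly this constant-matching in the concentration step: the natural Cantelli/Chebyshev deviation scale is $\sigma=\Theta(\sqrt{r\textsc{Core}})$, which only equals $2r$ in the regime $\textsc{Core}\lesssim r$. Handling the regime $\textsc{Core}\gg r$ without degrading the bundle-pricing guarantee is the crux; I expect this to require either an AM--GM split such as $\sqrt{r\textsc{Core}}\le r+\textsc{Core}/4$ to rearrange a Cantelli-type bound into the target form, or a sharper one-sided tail estimate exploiting $X_j\ge 0$. All other ingredients (truncation, independence, posted bundle price, $\brev\ge p\cdot\Pr[\text{buy}]$) are routine core--tail moves of the Li--Yao / Babaioff--Immorlica--Lucier--Weinberg type the paper already employs immediately before the lemma.
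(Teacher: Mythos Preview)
Your overall architecture (truncate to $X_j=t_j\cdot\I[t_j\le r]$, show $\sum_j X_j$ concentrates, post the bundle at $\textsc{Core}-2r$) matches the paper exactly. The gap is precisely where you flag it, and neither of your proposed fixes is the right one. The issue is not the tail inequality but the variance bound: $\E[X_j^2]\le r\cdot\E[X_j]$ only gives $\Var(\sum_j X_j)\le r\cdot\textsc{Core}$, and no amount of AM--GM repackaging or one-sided concentration will turn this into a proof that the bundle sells at price $\textsc{Core}-2r$ with probability $\ge 1/2$ when $\textsc{Core}\gg r$. (Your AM--GM route would yield a weaker inequality like $\textsc{Core}\le \tfrac{8}{3}\brev+\tfrac{4}{3}r$, not the stated one, and would force you to change the posted price away from $\textsc{Core}-2r$.)

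The missing ingredient is a sharper second-moment bound that exploits more than $X_j\in[0,r]$. The paper uses the defining property of $r_j:=\max_a a\cdot\Pr[t_j\ge a]$, namely $a\cdot\Pr[t_j\ge a]\le r_j$ for every $a$, to prove the auxiliary estimate $\E[X_j^2]\le 2r_j\cdot r$ (summing $\E[X_j^2]$ by parts and using this revenue bound pointwise). Summing over $j$ and using $\sum_j r_j=r$ then gives $\Var(\sum_j X_j)\le 2r^2$, a bound that is \emph{independent of} $\textsc{Core}$. Chebyshev with deviation $2r$ now gives $\Pr[\sum_j X_j<\textsc{Core}-2r]\le 2r^2/(4r^2)=1/2$ directly, with no case split. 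So the crux is not a better concentration inequality but a better variance estimate, and that estimate comes from the single-item revenue curves rather than from boundedness alone.
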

\begin{proof}
We will first need a technical lemma (also used in~\cite{BabaioffILW14}, but proved here for completeness).
\begin{lemma}\label{lem:second moment}
Let $x$ be a positive single dimensional random variable drawn from $F$ of finite support,\footnote{The same statement holds for continuous distribution as well, and can be proved using integration by parts.} such that for any number $a$, $a\cdot\Pr_{x\sim F}[x\geq a]\leq \mathcal{B}$ where $\mathcal{B}$ is an absolute constant. Then for any positive number $s$, the second moment of the random variable $x_{s}=x\cdot\I[x\leq s]$ is upper bounded by $2\mathcal{B}\cdot s$.
\end{lemma}
\begin{proof}
Let $\{a_{1},\ldots, a_{\ell}\}$ be the intersection of the support of $F$ and $[0,s]$, and $a_{0}=0$. 
\begin{align*}
\E[x_{s}^{2}]=& \sum_{k=0}^{\ell} \Pr_{x\sim F}(x=a_{k})\cdot a_{k}^{2}\\
=& \sum_{k=1}^{\ell} (a_{k}^{2}-a_{k-1}^{2})\cdot \sum_{d= k}^{\ell} \Pr_{x\sim F}(x=a_{d})\\
\leq& \sum_{k=1}^{\ell} (a_{k}^{2}-a_{k-1}^{2})\cdot \Pr_{x\sim F}[x\geq a_{k}]\\
\leq & \sum_{k=1}^{\ell} 2(a_{k}-a_{k-1})\cdot a_{k}\cdot \Pr_{x\sim F}[x\geq a_{k}]\\
\leq& 2\mathcal{B}\cdot \sum_{k=1}^{\ell} (a_{k}-a_{k-1})\\
\leq & 2\mathcal{B}\cdot s
\end{align*}
The penultimate inequality is because $a_{k}\cdot \Pr_{x\sim F}[x\geq a_{k}]\leq \mathcal{B}$. 
\end{proof}

Now with Lemma~\ref{lem:second moment}, for each $j$ define a new random variable $c_{j}$ based on the following procedure: draw a sample $v_{j}$ from $D_{j}$, if $v_{j}$ lies in $[0,r]$, then $c_{j}=v_{j}$, otherwise $c_{j}=0$. Let $c=\sum_{j} c_{j}$. It is not hard to see that we have $\E[c]=\sum_{j}\sum_{t_{j}\leq r} f_{j}(t_{j})\cdot t_{j}$. Now we are going to show that $c$ concentrates because it has small variance. Since the $c_{j}$'s are independent, $\Var[c]=\sum_{j}\Var[c_{j}]\leq \sum_{j}\E[c_{j}^{2}]$. We will bound each $\E[c_{j}^{2}]$ separately. Let $r_{j}=\max_{x} \{x\cdot \Pr_{t_{j}\sim D_{j}}[t_{j}\geq x]\}$. By Lemma~\ref{lem:second moment}, we can upper bound $\E[c_{j}^{2}]$ by $2r_{j}\cdot r$. On the other hand, it is easy to see that $r=\sum_{j} r_{j}$ (as this is exactly the definition of \srev), so $\Var[c]\leq 2r^{2}$. By the Chebyshev inequality, $$\Pr[c< \E[c]-2r]\leq {\Var[c]\over 4r^{2}}\leq {1\over 2}.$$ Therefore, $$\Pr_{t\sim D}[\sum_{j} t_{j}\geq \E[c]-2r]\geq \Pr[c\geq\E[c]-2r]\geq {1\over 2}.$$ 
So $\brev \geq \frac{\E[c]-2r}{2}$, as we can sell the grand bundle at price $\E[c]-2r$, and it will be purchased with probability at least $1/2$.
\end{proof}

\begin{theorem}
For a single additive bidder, the optimal revenue is $\leq 2\brev+4\srev$.
\end{theorem}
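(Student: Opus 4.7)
The plan is to simply combine Theorem~\ref{thm:virtual welfare ub} with the three decomposition lemmas already proved in this section. First I would apply Theorem~\ref{thm:virtual welfare ub} to the optimal BIC mechanism for a single additive bidder; specializing to $n=1$ (so that $R_0$ is empty and the distribution over $t_{-i}$ disappears), the upper bound on $\rev(D)$ splits as a sum of $\textsc{Single}$ and $\textsc{Non-Favorite}$, exactly as displayed in the section.

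Next I would bound each summand in turn. By Lemma~\ref{lem:1 SINGLE}, $\textsc{Single} \leq \copies$, and since the bidder is a single additive bidder, $\copies$ coincides with the revenue of running Myerson's optimal auction separately on each item, which is at most $\srev$ (indeed, it equals the revenue of selling separately via Myersonian reserves, which is bounded by the best item pricing). For $\textsc{Non-Favorite}$, I would invoke the displayed decomposition into $\textsc{Core} + \textsc{Tail}$ already carried out in the text, then plug in Lemma~\ref{lem:single tail} to get $\textsc{Tail} \leq \srev$ and Lemma~\ref{lem:single core} to get $\textsc{Core} \leq 2\brev + 2\srev$.

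Summing these gives $\rev(D) \leq \srev + (2\srev + 2\brev) + \srev = 2\brev + 4\srev$, which is the stated bound. There is essentially no obstacle here, since all the heavy lifting (the duality-based benchmark of Theorem~\ref{thm:virtual welfare ub}, the $\textsc{Single}/\textsc{Non-Favorite}$ split that exploits item independence, and the $\textsc{Core}/\textsc{Tail}$ concentration arguments based on Lemma~\ref{lem:second moment}) has already been done; the final theorem is just the arithmetic that assembles the three lemmas. The only minor subtlety to flag explicitly is the identification $\copies = \srev$ for a single additive bidder, so that the $\textsc{Single}$ term can be charged to $\srev$ rather than an auxiliary quantity.
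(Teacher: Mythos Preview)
Your proposal is correct and matches the paper's proof essentially verbatim: both combine Lemma~\ref{lem:1 SINGLE}, Lemma~\ref{lem:single tail}, and Lemma~\ref{lem:single core} to obtain $\rev(D)\leq\textsc{OPT}^{\textsc{Copies}}+\srev+2\brev+2\srev$, and then use $\textsc{OPT}^{\textsc{Copies}}=\srev$ for a single additive bidder to conclude. One tiny wording nit: for a single additive buyer $\textsc{OPT}^{\textsc{Copies}}$ is exactly $\srev$ (Myerson's reserve on each item \emph{is} the optimal item pricing), not merely an upper bound, which you yourself note at the end.
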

\begin{proof}
Combining Lemma~\ref{lem:1 SINGLE}, \ref{lem:single tail} and \ref{lem:single core}, the optimal revenue is upper bounded by \copies$+\srev+2\brev+2\srev$. It is not hard to see that \copies $ = \srev$, because the optimal auction in the copies setting just sells everything separately. So the optimal revenue is upper bounded by $2\brev+4\srev$.
\end{proof}


\section{Multiple Bidders}\label{sec:multi}
In this section, we show how to use the upper bound in Theorem~\ref{thm:virtual welfare ub} to show that deterministic DSIC mechanisms can achieve a constant fraction of the  (randomized) optimal BIC revenue in multi-bidder settings when the bidders valuations are all unit-demand or additive. Before beginning, we remind the reader of some notation from Section~\ref{sec:prelim}: $\vcg(D)$ refers to the revenue of the VCG mechanism when buyers have values drawn from $D$, and $\bvcg$ refers to the revenue of the optimal ``VCG with entry fees'' mechanism. \copies$(D)$ refers to the optimal achievable revenue in the related single-dimensional ``copies'' setting, where each buyer has been split into $m$ different buyers (one for each item). 

Similar to the single bidder case, we first decompose the upper bound (Expression~\ref{eq:ub}) into three components and bound them separately. 
In the last expression  in what follows, we call the first term \textsc{Non-Favorite}, the second term \textsc{Under} and the third term \textsc{Single}. 
We further break \textsc{Non-Favorite} into two  parts, \textsc{Over} and \textsc{Surplus} and bound them separately. The following are the approximation factors we achieve: 

\begin{theorem}\label{thm:multi unit-demand}
For multiple unit-demand bidders, the optimal revenue is upper bounded by $4$\copies.
\end{theorem}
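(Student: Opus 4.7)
The plan is to apply Theorem~\ref{thm:virtual welfare ub} to an arbitrary BIC mechanism $M$ with reduced form $(\pi,p)$ and decompose the resulting upper bound into four terms---\textsc{Single}, \textsc{Under}, \textsc{Over}, \textsc{Surplus}---each of which I will argue is at most \copies, so that the four contributions sum to the claimed $4\cdot$\copies\ ratio. Concretely, I split the event $\{t_i \notin R_j^{(v_{-i})}\}$ into two sub-events: (i) $\{t_{ij} < P_{ij}(v_{-i})\}$, producing the contribution \textsc{Under}; and (ii) $\{t_{ij} \geq P_{ij}(v_{-i})$ while $j$ is not bidder $i$'s favorite under the VCG prices $P_{i1}(v_{-i}),\ldots,P_{im}(v_{-i})\}$, producing \textsc{Non-Favorite}. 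The $\tp_{ij}$-contribution from $\{t_i \in R_j^{(v_{-i})}\}$ is \textsc{Single}. Finally, I further split \textsc{Non-Favorite} via $t_{ij} = P_{ij}(v_{-i}) + (t_{ij} - P_{ij}(v_{-i}))$ into \textsc{Over} (the VCG-price part) and \textsc{Surplus} (the residual surplus).

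For \textsc{Single}, I replay the argument of Lemma~\ref{lem:1 SINGLE}: interpret \textsc{Single} as the expected ironed virtual welfare (with respect to Myerson's $\tp_{ij}$ on each $D_{ij}$) in the copies setting of a mechanism $M'$ that serves copy $(i,j)$ iff $M$ allocates $j$ to $i$ and $t_i \in R_j^{(v_{-i})}$ for an independent draw $v_{-i}\sim D_{-i}$; because $M$'s allocation is always a matching, so is $M'$'s, and Myerson's theorem in the matching-constrained copies setting gives \textsc{Single} $\leq$ \copies. For the two VCG-price terms \textsc{Under} and \textsc{Over}, the common strategy is to bound each by the expected revenue of a VCG-type DSIC mechanism in the copies setting, which is in turn at most \copies\ since every BIC (in particular DSIC) mechanism in the copies setting has revenue at most \copies. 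In \textsc{Under}, $t_{ij} \leq P_{ij}(v_{-i})$ holds trivially, so the term is dominated by $\sum_{i,j} q_{ij}\cdot \mathbb{E}_{v_{-i}}[P_{ij}(v_{-i})]$, where $q_{ij}$ is the marginal probability that $M$ allocates $j$ to $i$; \textsc{Over} is already exactly of this shape. Using unit-demand feasibility ($\sum_i q_{ij}\le 1$ and $\sum_j q_{ij}\le 1$) together with the fact that $t_{-i}$ and $v_{-i}$ are i.i.d.\ from $D_{-i}$, I can re-couple each expression against a feasible VCG-style mechanism in the copies setting and conclude \textsc{Under} $\leq$ \copies\ and \textsc{Over} $\leq$ \copies.

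The main obstacle is bounding \textsc{Surplus}. Here, unit-demand is crucial: on the event that $j$ is not bidder $i$'s favorite and $t_{ij} \geq P_{ij}(v_{-i})$, the surplus $(t_{ij} - P_{ij}(v_{-i}))^+$ is pointwise at most the favorite surplus $\max_k (t_{ik} - P_{ik}(v_{-i}))^+$. Combined with the unit-demand constraint $\sum_j \pi_{ij}(t_i)\le 1$, this reduces \textsc{Surplus} to $\sum_i \mathbb{E}_{t_i,v_{-i}}\!\left[\max_k (t_{ik} - P_{ik}(v_{-i}))^+\right]$, which is the expected total bidder utility in a ``VCG against independent draws $v_{-i}$'' mechanism. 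Bounding this last quantity by \copies\ is the hardest step and is where the multi-bidder analysis genuinely departs from Theorem~\ref{thm:1 unit-demand}; my plan is to exhibit a feasible DSIC mechanism in the copies setting whose expected revenue captures this utility---roughly, assign bidder-copy $(i,j)$ whenever $j$ is bidder $i$'s favorite under the auxiliary prices $P_{ik}(v_{-i})$, resolving conflicts between different bidders targeting the same item via the matching feasibility of the copies setting and charging a suitably defined VCG-derived price. Once \textsc{Surplus} $\leq$ \copies\ is established, summing \textsc{Single} $+$ \textsc{Under} $+$ \textsc{Over} $+$ \textsc{Surplus} $\leq 4\cdot$\copies\ completes the proof.
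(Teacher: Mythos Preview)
Your decomposition into \textsc{Single}, \textsc{Under}, \textsc{Over}, and \textsc{Surplus} matches the paper exactly, and your treatment of \textsc{Single} is fine. The gaps are in the other three bounds.

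For \textsc{Surplus}, you bound the non-favorite surplus $t_{ij}-P_{ij}(v_{-i})$ by the \emph{maximum} surplus $\max_k (t_{ik}-P_{ik}(v_{-i}))^+$ and arrive at $\sum_i \mathbb{E}_{t_i,v_{-i}}\bigl[\max_k (t_{ik}-P_{ik}(v_{-i}))^+\bigr]$. This is total VCG \emph{utility}, not revenue, and is \emph{not} bounded by \copies. Already for $n=1$ (so all $P_{ij}=0$) with two items i.i.d.\ equal-revenue truncated at $H$, your expression is $\mathbb{E}[\max(t_1,t_2)]\approx 2\ln H$, while \copies\ (optimal single-item auction with two copies) is $\Theta(1)$. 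The paper's key observation (Lemma~\ref{lem:multi unit-demand value}) is that on the relevant event the surplus $t_{ij}-P_{ij}(v_{-i})$ is at most the \emph{second} largest of $\{t_{ik}-P_{ik}(v_{-i})\}_k$; this second-largest quantity is precisely the VCG payment from bidder~$i$'s copies in the copies setting, and that is what makes the bound by \copies\ go through.

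Your common bound $\sum_{i,j} q_{ij}\,\mathbb{E}_{v_{-i}}[P_{ij}(v_{-i})]$ for \textsc{Under} and \textsc{Over} fails for the same reason: take $m=1$, two i.i.d.\ equal-revenue bidders truncated at $H$, and the (BIC) allocation that always gives the item to bidder~1, so $q_{11}=1$; your expression is $\mathbb{E}[v_2]\approx\ln H$ while \copies\ is $\Theta(1)$. Dropping the indicators is fatal. The paper keeps them: for \textsc{Over} (Lemma~\ref{lem:over} via Proposition~\ref{prop:pi with reserve}) it fixes $v$ and builds a copies mechanism that first runs $M$ to pick potential winners and then posts $P_{ij}(v_{-i})$ as a reserve, so the indicator $\I[t_{ij}\ge P_{ij}(v_{-i})]$ becomes an actual sale event; for \textsc{Under} (Lemma~\ref{lem:multi under} via Proposition~\ref{prop:vcg reserve copies}) it uses the value $t_{ij}$ itself as a lazy reserve in a VCG-style copies auction, and invokes Gul--Stacchetti submodularity of $W_T(\cdot)$ to show that whenever $t_{ij}<P_{ij}(v_{-i})$ some copy for item $j$ is served above price $t_{ij}$.
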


\begin{theorem}\label{thm:multi additive}
For multiple additive bidders, the optimal revenue is upper bounded by $6$\copies $+2\bvcg$.
\end{theorem}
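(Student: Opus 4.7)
The plan is to apply the canonical-flow upper bound of Theorem~\ref{thm:virtual welfare ub} and decompose the right-hand side of Inequality~\eqref{eq:ub} into three pieces, according to which region $t_i$ falls in under $t_{-i}$: a \textsc{Single} term collecting contributions where $t_i \in R_j^{(t_{-i})}$ (so $\Phi_{ij}(t_i)$ reduces to the ironed Myerson virtual value $\tilde{\varphi}_{ij}(t_{ij})$), an \textsc{Under} term collecting contributions where $t_i \in R_0^{(t_{-i})}$ (so $t_{ij} < P_{ij}(t_{-i})$), and a \textsc{Non-Favorite} term collecting the remainder, where $t_i \in R_k^{(t_{-i})}$ for some $k \neq j$. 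The target is to show $\textsc{Single} \leq \copies$, $\textsc{Under} \leq \bvcg$, and $\textsc{Non-Favorite} \leq 5\copies + \bvcg$, which combine to the claimed bound.

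For \textsc{Single}, I would directly generalize Lemma~\ref{lem:1 SINGLE}: given the reduced form $\pi$ of any BIC mechanism $M$, build a copies-mechanism $M'$ that serves copy $(i,j)$ precisely when $M$ awards item $j$ to $i$ and $t_i \in R_j^{(t_{-i})}$. Since the regions $\{R_j^{(t_{-i})}\}_{j \in [m]}$ partition $T_i$ on each $t_{-i}$, $M'$ inherits feasibility of $\pi$ in the original setting, and its expected ironed virtual welfare equals \textsc{Single}. Myerson's theorem applied to the single-dimensional copies setting then gives $\textsc{Single} \leq \copies(D)$.

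For \textsc{Under}, the definition of $R_0^{(t_{-i})}$ directly yields $t_{ij} \cdot \ind[t_i \in R_0^{(t_{-i})}] \leq P_{ij}(t_{-i})$, so using that $\pi_{ij}(t_i)$ is independent of $t_{-i}$ bounds \textsc{Under} by $\sum_{i,j} \Pr[i \text{ gets } j] \cdot \mathbb{E}_{t_{-i}}[P_{ij}(t_{-i})]$. For additive bidders, the feasibility constraint $\sum_i q_{ij}(t) \leq 1$ together with the interpretation of $P_{ij}(t_{-i})$ as the VCG payment for $i$ to win $j$ lets us charge this to the expected VCG revenue, giving $\textsc{Under} \leq \vcg(D) \leq \bvcg(D)$.

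The main technical work---and the principal obstacle---is bounding \textsc{Non-Favorite}, for which I would adapt the core-tail philosophy of Li--Yao~\cite{LiY13}. Pick a threshold $\beta_{ij}$ for each $(i,j)$ and write $\textsc{Non-Favorite} = \textsc{Over} + \textsc{Surplus}$, splitting on whether $t_{ij} \geq \beta_{ij}$ or not. For \textsc{Over}, the joint event ``$t_{ij} \geq \beta_{ij}$ and $t_i \in R_k^{(t_{-i})}$ for some $k \neq j$'' certifies a second item at least as attractive as $j$ to bidder $i$, and charging this event to a carefully chosen item pricing in the copies setting yields $\textsc{Over} \leq O(1) \cdot \copies$. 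For \textsc{Surplus}, every contributing $t_{ij}$ lies in $[0, \beta_{ij}]$, so independence across items and bidders lets a Chebyshev-type concentration argument---exactly generalizing Lemma~\ref{lem:single core}---show that each bidder's conditional core sum, given $t_{-i}$, concentrates about its mean. The cleanest multi-bidder extraction is then a VCG-with-entry-fee mechanism in which bidder $i$'s entry fee is her conditional expected surplus contribution given $t_{-i}$, minus a small multiple of its standard deviation, accepted with probability at least $1/2$. The delicate point---precisely what forced Yao's ``$\beta$-adjusted revenue'' and ``$\beta$-exclusive mechanism'' apparatus---is that a bidder's outside option in the multi-bidder setting is no longer ``take nothing'' but ``enter VCG and win what she wins,'' so both the entry-fee acceptance probability and the revenue extracted must be analyzed \emph{relative to} the residual VCG utility; this is where all the multi-bidder intricacies live and what ultimately pins down the final constants $6$ and $2$.
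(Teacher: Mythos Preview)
Your proposal has two genuine gaps, both at the points where it diverges from the paper's decomposition.

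\textbf{The \textsc{Under} bound does not go through.} Your intermediate step $t_{ij}\cdot\I[t_i\in R_0^{(v_{-i})}]\le P_{ij}(v_{-i})$ is fine, but the conclusion $\sum_{i,j}\Pr[i\text{ gets }j]\cdot\mathbb{E}_{v_{-i}}[P_{ij}(v_{-i})]\le\vcg(D)$ is false. For additive bidders $P_{ij}(v_{-i})=\max_{\ell\ne i}v_{\ell j}$, so $\mathbb{E}_{v_{-i}}[P_{ij}(v_{-i})]$ is the expected \emph{maximum of everyone but $i$}, not the expected second-highest. If $\pi$ allocates item $j$ to a weak bidder $i$ (which nothing rules out---$\pi$ is an arbitrary BIC reduced form, and $v_{-i}$ is a fresh independent sample unrelated to the realization on which $\pi$ allocates), this term can be essentially the expected \emph{highest} bid, far exceeding the Vickrey revenue. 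Concretely, with one item and two bidders whose values are $10$ and $1$ deterministically, the mechanism that always allocates to the value-$1$ bidder gives your intermediate bound the value $10$ while $\vcg=1$. The paper handles \textsc{Under} (defined slightly more broadly as the event $t_{ij}<P_{ij}(v_{-i})$, not $t_i\in R_0^{(v_{-i})}$) via a different and considerably trickier route: Proposition~\ref{prop:vcg reserve copies} shows that VCG-with-lazy-reserves in the Copies setting collects at least $\sum_{j\in S}x_j\cdot\I[P_{i_jj}(v_{-i_j})>x_j]$ for any feasible assignment $\{(i_j,j)\}$ and reserves $x_j$; instantiating $x_j=t_{i_jj}$ pointwise and averaging yields \textsc{Under}$\le$\copies (Lemma~\ref{lem:multi under}). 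There is no shortcut through $\bvcg$.

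\textbf{The \textsc{Non-Favorite} split misses the key identity.} You split \textsc{Non-Favorite} by a threshold $\beta_{ij}$ into a tail (\textsc{Over}) and a core (\textsc{Surplus}); this is the single-bidder move. The paper's crucial step is different: it first writes $t_{ij}=P_{ij}(v_{-i})+(t_{ij}-P_{ij}(v_{-i}))$ on the event $t_{ij}\ge P_{ij}(v_{-i})$, so that \textsc{Over} is literally a VCG-price term (bounded by \copies\ via Proposition~\ref{prop:pi with reserve}, which \emph{needs} the indicator $\I[t_{ij}\ge P_{ij}(v_{-i})]$), and \textsc{Surplus} is literally bidder $i$'s VCG utility for item $j$. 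Only then is \textsc{Surplus} split into \textsc{Core}/\textsc{Tail} by the thresholds $r_i(v_{-i})$. Because \textsc{Surplus} is now exactly utility in VCG, the entry-fee mechanism extracts it directly: the random variable $\sum_j b_{ij}(v_{-i})$ \emph{is} bidder $i$'s VCG utility, so the concentration argument immediately controls acceptance probability---no $\beta$-exclusive mechanisms needed. Your split leaves the core measuring raw values, not surplus over VCG, which is precisely why you find yourself reaching for Yao's machinery at the end; the paper's point is that the VCG-price/surplus split makes that machinery unnecessary. Without it, neither the constants $5$ and $1$ you target for \textsc{Non-Favorite} nor any specific constants are established.
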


Note that a simple posted-price mechanism achieves revenue \copies$/6$ when all buyers are unit-demand~\cite{ChawlaHMS10, KleinbergW12}, and selling each item separately using Myerson's auction achieves revenue \copies\ when buyers are additive. Therefore, the CHMS/KW~\cite{ChawlaHMS10,KleinbergW12} posted-price mechanism achieves a 24-approximation to the optimal BIC mechanism (previously, it was known to be a 30-approximation), and Yao's approximation ratios~\cite{Yao15} are improved from 69 to 8. {Some parts of the following analysis draw inspiration from prior works of Chawla et al.~\cite{ChawlaHMS10} and Yao~\cite{Yao15}, however, much of the analysis also represents new techniques. In particular, it is worth pointing out that our proof of Theorem~\ref{thm:multi additive} looks similar to our single-bidder case, whereas Yao's original proof required the entirely new machinery of ``$\beta$-adjusted revenue'' and ``$\beta$-exclusive mechanisms.'' Below is our decomposition, first into \textsc{Non-Favorite}, \textsc{Under}, and \textsc{Single}, then further decomposing \textsc{Non-Favorite} into \textsc{Over} and \textsc{Surplus}. Recall that $\wedge$ refers to ``AND'' and $\vee$ refers to ``OR.''} 

\begin{align*}
&\sum_{i}\sum_{t_{i}\in T_{i}} \sum_{j} f_{i}(t_{i})\cdot \pi_{ij}(t_{i})\cdot \Big(t_{ij}\cdot\Pr_{v_{-i}\sim D_{-i}}[t_{i}\notin R^{(v_{-i})}_{j}] +\tp_{ij}(t_{ij})\cdot \Pr_{v_{-i}\sim D_{-i}}[t_{i}\in R^{(v_{-i})}_{j}]\Big)\\
\leq &\sum_{i}\sum_{t_{i}\in T_{i}} \sum_{j} f_{i}(t_{i})\cdot \pi_{ij}(t_{i})\cdot\sum_{v_{-i}\in T_{-i}} t_{ij} f_{-i}(v_{-i})
\cdot\I\Big[\big(\exists k\neq j,\ t_{ik}-P_{ik}(v_{-i})\geq t_{ij}-P_{ij}(v_{-i})\big) \lor\big(t_{ij}<P_{ij}(v_{-i})\big) \Big] \\
+&\sum_{i}\sum_{t_{i}\in T_{i}} \sum_{j} f_{i}(t_{i}) \pi_{ij}(t_{i})\tp_{ij}(t_{ij})\Pr_{v_{-i}\sim D_{-i}}[t_{i}\in R^{(v_{-i})}_{j}]\\
\leq&\sum_{i}\sum_{t_{i}\in T_{i}} \sum_{j} f_{i}(t_{i})\cdot \pi_{ij}(t_{i})\cdot\\
& ~~~~~~~~\sum_{v_{-i}\in T_{-i}} t_{ij} f_{-i}(v_{-i})\cdot\I\Big[\big(\exists k\neq j,\ t_{ik}-P_{ik}(v_{-i})\geq t_{ij}-P_{ij}(v_{-i})\big)\land \big(t_{ij}\geq P_{ij}(v_{-i})\big) \Big]\quad\textsc{(Non-Favorite)}\\
&+\sum_{i}\sum_{t_{i}\in T_{i}} \sum_{j} f_{i}(t_{i})\cdot \pi_{ij}(t_{i})\cdot\sum_{v_{-i}\in T_{-i}} t_{ij} \cdot f_{-i}(v_{-i})\cdot\I[ t_{ij}<P_{ij}(v_{-i})]
\quad\textsc{(Under)}\\
&+\sum_{i}\sum_{t_{i}\in T_{i}} \sum_{j} f_{i}(t_{i})\cdot \pi_{ij}(t_{i})\cdot\tp_{ij}(t_{ij}) \cdot\Pr_{v_{-i}\sim D_{-i}}[t_{i}\in R^{(v_{-i})}_{j}]  \quad\textsc{(Single)}
\end{align*}

\begin{align*}
&\textsc{Non-Favorite}
\leq \sum_{i}\sum_{t_{i}\in T_{i}} \sum_{j} f_{i}(t_{i})\cdot\pi_{ij}(t_{i}) \cdot\sum_{v_{-i}\in T_{-i}} P_{ij}(v_{-i}) f_{-i}(v_{-i})\I[t_{ij}\geq P_{ij}(v_{-i})] \quad (\textsc{Over})\\
&+ \sum_{i}\sum_{t_{i}\in T_{i}} \sum_{j} f_{i}(t_{i})\cdot \pi_{ij}(t_{i})\cdot \\
&~~~~\sum_{v_{-i}\in T_{-i}} \left(t_{ij}-P_{ij}(v_{-i})\right) \cdot f_{-i}(v_{-i})\cdot\I\Big[\big(\exists k\neq j,\ t_{ik}-P_{ik}(v_{-i})\geq t_{ij}-P_{ij}(v_{-i})\big)\land \big(t_{ij}\geq P_{ij}(v_{-i})\big) \Big]\quad (\textsc{Surplus})
\end{align*}

\noindent Before continuing, let's try to parse these five terms:
\begin{itemize}
\item All terms sum over all bidders, all types, and all items, and take the density of that type times the interim probability that bidder receives that item when reporting that type, times some portion of the virtual valuation for that item. 
\item \textsc{Non-Favorite} takes the \textbf{value} for the item, times the probability that it is \emph{not the bidder's favorite item}, as defined in Section~\ref{sec:flow}, when $v_{-i}$ is drawn from $D_{-i}$ (roughly corresponds to items $k$ such that $t_i \in R_j^{(v_{-i})}$ for $j \neq k$, but not perfectly).  
\item \textsc{Under} takes the \textbf{value} for the item, times the probability that the bidder is \emph{not even willing to purchase the item} at the VCG prices defined by $v_{-i}$ drawn from $D_{-i}$ (roughly corresponds to when $t_i \in R_0^{(v_{-i})}$, but not perfectly). 
\item \textsc{Single} takes the \textbf{Myerson Ironed Virtual Value} for the item, times the probability that it \emph{is the bidder's favorite item} (corresponds to items $j$ such that $t_i \in R_j^{(v_{-i})}$). 
\item \textsc{Over} and \textsc{Surplus} split \textsc{Non-Favorite} in the following way:
\begin{itemize}
\item \textsc{Over} replaces the value in \textsc{Non-Favorite} with the VCG price induced by $v_{-i}$ (and also upper bounds some probabilities by $1$). This roughly corresponds to the revenue obtained by VCG (but not perfectly). 
\item \textsc{Surplus} replaces the value in \textsc{Non-Favorite} with (value - VCG price induced by $v_{-i}$), and roughly corresponds to the bidder's utility for participating in the VCG auction (but not perfectly). 
\item Observe that value = VCG price + (value - VCG price), so this is indeed a decomposition of \text{Non-Favorite}. 
\end{itemize}
\end{itemize}

The plan of attack is as follows: \textsc{Single} will be handled the same way as in Section~\ref{sec:single}. \textsc{Surplus} will be handled similarly to \textsc{Non-Favorite} from Section~\ref{sec:single}, and both parts yield the same approximation guarantees as their single-bidder counterparts. That leaves \textsc{Under} and \textsc{Over}, which we will show each contribute at most an additional \copies, and account for the ``plus two'' in transitioning from single-bidder to multi-bidder bounds. We now proceed to address these terms formally, beginning with \textsc{Surplus}.\\

\noindent\textbf{Analyzing \textsc{Surplus} for Unit-demand Bidders:} The proof of this lemma  is similar in spirit to \Cref{lem:unit-demand value}.

\begin{lemma}\label{lem:multi unit-demand value}
	When the types are unit-demand, for any feasible $\pi(\cdot)$, \textsc{Surplus} $\leq$ \copies.
\end{lemma}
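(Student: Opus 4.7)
The plan is to generalize the single-bidder argument from Lemma~\ref{lem:unit-demand value}, where \textsc{Non-Favorite} was bounded by Vickrey revenue in the (single-item) Copies setting. Here the Copies setting is a bipartite matching auction on $nm$ single-dimensional agents with feasibility constraint ``each original bidder contributes at most one winning copy, each item is won at most once.'' I would show \textsc{Surplus} $\leq$ \copies\ by exhibiting a feasible mechanism $M'$ in the Copies setting whose expected revenue is at least \textsc{Surplus}, and then appealing to optimality of \copies\ in that setting.

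First, I would rewrite the expectation in \textsc{Surplus} using that $v_{-i} \sim D_{-i}$ is an independent fresh draw with the same distribution as $t_{-i}$. Since the $P_{ij}(\cdot)$ are functions of other bidders' types, the mechanism $M'$ I construct in the Copies setting will use the incoming bids from copies associated with bidders $i' \neq i$ to stand in for $v_{-i}$ (or, equivalently, draw $v'_{-i}\sim D_{-i}$ as internal randomness, which is valid since $M'$ can use any randomization that is independent of its own bids). Given bids and the sampled $v'_{-i}$'s, $M'$ computes, for each $i$, the favorite item $j^*_i$ at the VCG prices $P_{ij}(v'_{-i})$; then it uses $\pi_{ij}(t_i)$ restricted to the event ``$j$ is non-favorite for $i$ at $v'_{-i}$ and $t_{ij} \geq P_{ij}(v'_{-i})$'' as the allocation rule for copy $(i,j)$.

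Second, I would verify that $M'$ is feasible in the Copies setting: the unit-demand feasibility of the original $\pi$ guarantees that at most one copy per bidder is served (since at most one item $j$ per $i$ has $\pi_{ij}(t_i)>0$ under the unit-demand constraint on ex-post allocation), and the per-item feasibility of $\pi$ in the original setting guarantees each item is allocated at most once across copies. Third, I would price the copies so that the revenue extracted by $M'$ equals (or lower-bounds) \textsc{Surplus}: since each copy is single-dimensional, the copy value $t_{ij}$ combined with the utility $t_{ij}-P_{ij}(v'_{-i})$ naturally suggests a Vickrey/second-price style charge in the Copies setting whose expected revenue picks up exactly the $(t_{ij}-P_{ij}(v_{-i}))$ term integrated against $\pi$ and the indicator. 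Independence of $v'_{-i}$ from copy $(i,j)$'s own bid is what keeps the mechanism Bayesian IC for copy $(i,j)$.

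The main obstacle will be step three: building a payment rule in the Copies setting whose expected revenue exactly captures the ``utility-at-VCG-prices'' quantity in \textsc{Surplus}, while remaining BIC/IR for every copy. The unit-demand hypothesis $\sum_j \pi_{ij}(t_i) \leq 1$ is essential here, because it lets us interpret $\sum_j \pi_{ij}(t_i)(t_{ij}-P_{ij}(v_{-i}))\I[\cdot]$ as the expected ``second-best utility'' consumed by bidder $i$, which is precisely the kind of quantity a Vickrey-like charge on the Copies matching can absorb. Once the construction is in place, the final step is immediate: $M'$ is a feasible BIC mechanism in the Copies setting, so $\textsc{Surplus}\leq \rev^{M'}(D) \leq \copies(D)$.
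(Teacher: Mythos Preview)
Your proposal has the right instinct (compare to the VCG mechanism in the Copies setting, use the unit-demand bound $\sum_j \pi_{ij}(t_i)\le 1$, and relate the terms in \textsc{Surplus} to a ``second-best utility''), but the construction you sketch has genuine gaps.

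First, the feasibility argument for $M'$ is incorrect. You assert that ``at most one item $j$ per $i$ has $\pi_{ij}(t_i)>0$ under the unit-demand constraint on ex-post allocation.'' This is false: unit-demand only gives $\sum_j \pi_{ij}(t_i)\le 1$, and the reduced form can certainly have several strictly positive coordinates (e.g.\ a mechanism that flips a fair coin between items $1$ and $2$). So the mechanism that serves copy $(i,j)$ with probability $\pi_{ij}(t_i)\cdot\I[\,\cdot\,]$ is not obviously a feasible Copies allocation in the way you claim.

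Second, the payment step, which you yourself flag as the main obstacle, is not resolved. You want $M'$ to collect $(t_{ij}-P_{ij}(v'_{-i}))$ from copy $(i,j)$, but this quantity depends on copy $(i,j)$'s own value $t_{ij}$, so it cannot be charged truthfully. Your sentence about interpreting $\sum_j \pi_{ij}(t_i)(t_{ij}-P_{ij}(v_{-i}))\I[\cdot]$ as ``expected second-best utility'' is not a correct identity; it is only an \emph{inequality}, and that inequality is really the heart of the argument.

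The paper's proof avoids both problems by not building a mechanism that mimics $\pi$ at all. Instead it fixes $(t_i,v_{-i})$, defines $S_i(t_i,v_{-i})$ to be the second largest of $\{t_{ik}-P_{ik}(v_{-i})\}_k$, and observes pointwise that for every $j$,
\[
(t_{ij}-P_{ij}(v_{-i}))\cdot\I\big[\exists k\neq j,\ t_{ik}-P_{ik}(v_{-i})\ge t_{ij}-P_{ij}(v_{-i})\ge 0\big]\ \le\ S_i(t_i,v_{-i})\cdot\I[\exists j',\ t_{ij'}-P_{ij'}(v_{-i})\ge 0].
\]
Since the right-hand side does not depend on $j$, summing against $\pi_{ij}(t_i)$ and using $\sum_j\pi_{ij}(t_i)\le 1$ collapses the $j$-sum. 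Finally, $S_i(t_i,v_{-i})\cdot\I[\cdot]$ is a lower bound on the payment collected from bidder $i$'s copies by plain VCG in the Copies setting on profile $(t_i,v_{-i})$ (the winning copy's threshold bid is $P_{ij}(v_{-i})+\max\{0,S_i\}$). Taking expectations over $(t_i,v_{-i})\sim D_i\times D_{-i}=D$ gives \textsc{Surplus}~$\le$~VCG revenue in Copies~$\le$~\copies. No auxiliary randomness $v'_{-i}$ or mechanism mimicking $\pi$ is needed; the whole proof is a pointwise inequality plus one application of $\sum_j\pi_{ij}\le 1$.
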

\begin{proof}
		Indeed, we will prove that \textsc{Surplus} is bounded above by the revenue of the VCG mechanism in the Copies setting. For any $i$ define $S_{i}(t_{i},v_{-i})$ to be the second largest number in $\{t_{i1}-P_{i1}(v_{-i}),\cdots, t_{im}-P_{im}(v_{-i})\}$. Now consider running the VCG mechanism on type profile $(t_{i},v_{-i})$. An agent $(i,j)$ is served in the VCG mechanism in the Copies setting, iff item $j$ is allocated to $i$ in the VCG mechanism in the original setting, which is equivalent to saying $t_{ij}-P_{ij}(v_{-i})\geq 0$ and $t_{ij}-P_{ij}(v_{-i})\geq t_{ik}-P_{ik}(v_{-i})$ for all $k$. The Copies setting is single-dimensional, therefore any agent's payment is her threshold bid. For agent $(i,j)$, her threshold bid is $P_{ij}(v_{-i})+\max\{0, \max_{k\neq j} t_{ik}-P_{ik}(v_{-i})\}$ which is at least $S_{i}(t_{i},v_{-i})$. On the other hand, for any $i$, whenever $\exists j',\ t_{ij'}-P_{ij'}(v_{-i})\geq 0$, there exists some $j_{i}$ such that $(i,j_{i})$ is served in the VCG mechanism. Combining the two conclusions above, we show that on any profile $(t_{i},v_{-i})$, the payment in the VCG mechanism collected from agents in $\{(i,j)\}_{j\in[m]}$ is at least $S_{i}(t_{i},v_{-i})\cdot\I[\exists j',\ t_{ij'}-P_{ij'}(v_{-i})\geq 0]$. So the total revenue of the VCG Copies mechanism is at least:
		
		$$\sum_{i}\sum_{(t_{i},v_{-i})\in T_i} f(t_{i},v_{-i})\cdot S_{i}(t_{i},v_{-i})\cdot\I[\exists j',\ t_{ij'}-P_{ij'}(v_{-i})\geq 0].$$
		
		Next we argue for any $j$ and $(t_{i},v_{-i})$, the following inequality holds. 
\begin{align*}(t_{ij}-P_{ij}(v_{-i})) \cdot \I\Big[&\big(\exists k\neq j,\ t_{ik}-P_{ik}(v_{-i})\geq t_{ij}-P_{ij}(v_{-i}) \geq 0 \Big]\\ 
&\leq S_{i}(t_{i},v_{-i})\cdot\I[\exists j',\ t_{ij'}-P_{ij'}(v_{-i})\geq 0] \stepcounter{equation}\tag{\theequation} \label{ieq:surplus}\end{align*} 
		We only need to consider the case when the LHS is non-zero. In that case, the RHS has value $S_{i}(t_{i},v_{-i})$, and also there exists some $k$ such that $t_{ik}-P_{ik}(v_{-i})\geq t_{ij}-P_{ij}(v_{-i})$, so $t_{ij}-P_{ij}(v_{-i})\leq S_{i}(t_{i},v_{-i})$. 
		
		So now we can rewrite \textsc{Surplus} and upper bound it with the revenue of the VCG mechanism in the Copies setting.
\begin{align*}
		&\sum_{i}\sum_{t_{i}\in T_{i}} \sum_{j} f_{i}(t_{i})\cdot \pi_{ij}(t_{i})\sum_{v_{-i}\in T_{-i}} (t_{ij}-P_{ij}(v_{-i}))\cdot f_{-i}(v_{-i})\cdot\I\Big[\exists k\neq j,\ t_{ik}-P_{ik}(v_{-i})\geq t_{ij}-P_{ij}(v_{-i})\geq 0\Big]\\
		&= \sum_{i}\sum_{(t_{i},v_{-i})\in T_i} f(t_{i},v_{-i}) \sum_{j} \pi_{ij}(t_{i})\cdot(t_{ij}-P_{ij}(v_{-i})) \cdot \I\Big[\exists k\neq j,\ t_{ik}-P_{ik}(v_{-i})\geq t_{ij}-P_{ij}(v_{-i})\geq 0 \Big]\\
		&\leq  \sum_{i}\sum_{(t_{i},v_{-i})\in T_i} f(t_{i},v_{-i})\sum_{j} \pi_{ij}(t_{i})\cdot S_{i}(t_{i},v_{-i}) \cdot\I[\exists j',\ t_{ij'}-P_{ij'}(v_{-i})\geq 0]\qquad\text{(Inequality~(\ref{ieq:surplus}))}\\
		&\leq  \sum_{i} \sum_{(t_{i},v_{-i})\in T_i} f(t_{i},v_{-i}) \cdot S_{i}(t_{i},v_{-i}) \cdot\I[\exists j',\ t_{ij'}-P_{ij'}(v_{-i})\geq 0] \quad (\sum_{j} \pi_{ij}(t_{i})\leq 1\ \forall i, t_{i})
		\end{align*}		
		The last line is upper bounded by the revenue of the VCG mechanism in the Copies setting by our work above, which is clearly upper bounded by \copies.
\end{proof}

\paragraph{Analyzing \textsc{Surplus} for Additive Bidders:}
Similar to the single bidder case, we will again break the term \textsc{Surplus} into the \textsc{Core} and the \textsc{Tail}, and analyze them separately. Before we proceed, we first define the cutoffs. Let $r_{ij}(v_{-i})=\max_{x\geq P_{ij}(v_{-i})} \{x\cdot \Pr_{t_{ij\sim D_{ij}}}[t_{ij}\geq x]\}$. {The observant reader will notice that this is bidder $i$'s ex-ante payment for item $j$ in Ronen's single-item mechanism~\cite{Ronen01} conditioned on other bidders types being $v_{-i}$, but this connection is not necessary to understand the proof}. Further let $r_{i}(v_{-i})=\sum_{j} r_{ij}(v_{-i})$, $r_{i}=\E_{v_{-i}\sim D_{-i}}[r_{i}(v_{-i})]$ and $r=\sum_{i} r_{i}$, the expected revenue of running Ronen's mechanism separately for each item {(again, the connection to Ronen's mechanism is not necessary to understand the proof)}. We first bound \textsc{Tail} and \textsc{Core}, using arguments similar to the single item case
(Lemmas \ref{lem:single tail} and \ref{lem:single core}), 

\begin{align*}
&\textsc{Surplus} \leq \sum _{i} \sum_{v_{-i}\in T_{-i}}f_{-i}(v_{-i})\sum_{j} \sum_{t_{ij}\geq P_{ij}(v_{-i})}  f_{ij}(t_{ij})\cdot (t_{ij}-P_{ij}(v_{-i}))\cdot\sum_{t_{i,-j}\in T_{i,-j}}f_{i,-j}(t_{i,-j})\cdot\\
&~~~~~~~~~~~~~~~~~~~~~~~~~~~~~~~~~~~~~~~~~~~~~~~~~~~~~~~~~~~~~~~~~~~~~~~~~~~~~~~~~~~~~~~~~~~~\I[\exists k\neq j,\ t_{ik}-P_{ik}(v_{-i})\geq t_{ij}-P_{ij}(v_{-i})]\\
&= \sum _{i} \sum_{v_{-i}\in T_{-i}}f_{-i}(v_{-i})\sum_{j} \sum_{t_{ij}\geq P_{ij}(v_{-i})} f_{ij}(t_{ij})\cdot \\
&~~~~~~~~~~~~~~~~~~~~~~~~~~~~~~~~~~~~~~~~~~(t_{ij}-P_{ij}(v_{-i}))\cdot \Pr_{t_{i,-j}\sim D_{i,-j}}[\exists k\neq j, t_{ik}-P_{ik}(v_{-i})\geq t_{ij}-P_{ij}(v_{-i})]\\
&\leq \sum _{i} \sum_{v_{-i}\in T_{-i}}f_{-i}(v_{-i})\sum_{j} \sum_{t_{ij}> P_{ij}(v_{-i})+r_{i}(v_{-i})} f_{ij}(t_{ij})\cdot \\
& ~~~~~~~~~~~~~~~~~~~~~~~~~~~~~~~~~~~~~~~~~~(t_{ij}-P_{ij}(v_{-i}))\cdot\Pr_{t_{i,-j}\sim D_{i,-j}}[\exists k\neq j,t_{ik}-P_{ik}(v_{-i})\geq t_{ij}-P_{ij}(v_{-i})] \quad (\textsc{Tail})\\
&+ \sum _{i} \sum_{v_{-i}\in T_{-i}}f_{-i}(v_{-i})\sum_{j} \sum_{t_{ij}\in [P_{ij}(v_{-i}), P_{ij}(v_{-i})+r_{i}(v_{-i})]} f_{ij}(t_{ij})\cdot (t_{ij}-P_{ij}(v_{-i}))\quad (\textsc{Core})
\end{align*}

\begin{lemma}\label{lem:multitail}
	$\textsc{Tail}\leq r$.
\end{lemma}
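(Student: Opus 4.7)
My plan is to prove $\textsc{Tail}\leq r$ by mimicking the structure of \Cref{lem:single tail}, but now iterating the ``revenue-bounds-value-times-probability'' trick twice: once over the non-favorite items to bound the misreport probability, and once on item $j$'s tail to bound how often the surplus is even this large. Crucially, I will exploit that $r_{ij}(v_{-i})$ is defined as a maximum over all thresholds $x \geq P_{ij}(v_{-i})$, so the bound $x \cdot \Pr_{t_{ij}\sim D_{ij}}[t_{ij}\geq x]\leq r_{ij}(v_{-i})$ holds for \emph{every} such $x$ simultaneously.

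The first step is to fix $i$, $v_{-i}$, $j$, and a single $t_{ij} > P_{ij}(v_{-i}) + r_i(v_{-i})$, and bound the inner factor $(t_{ij}-P_{ij}(v_{-i}))\cdot \Pr_{t_{i,-j}\sim D_{i,-j}}[\exists k\neq j,\ t_{ik}-P_{ik}(v_{-i})\geq t_{ij}-P_{ij}(v_{-i})]$. By the union bound over $k\neq j$, together with independence of coordinates in $D_i$, this is at most $\sum_{k\neq j}(t_{ij}-P_{ij}(v_{-i}))\cdot \Pr_{t_{ik}\sim D_{ik}}[t_{ik}\geq P_{ik}(v_{-i})+(t_{ij}-P_{ij}(v_{-i}))]$. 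For each $k\neq j$, the threshold $x_k := P_{ik}(v_{-i})+(t_{ij}-P_{ij}(v_{-i}))\geq P_{ik}(v_{-i})$, so the defining property of $r_{ik}(v_{-i})$ yields $x_k\cdot \Pr[t_{ik}\geq x_k]\leq r_{ik}(v_{-i})$, and dropping the nonnegative $P_{ik}(v_{-i})$ from $x_k$ gives $(t_{ij}-P_{ij}(v_{-i}))\cdot \Pr[t_{ik}\geq x_k]\leq r_{ik}(v_{-i})$. Summing over $k\neq j$ produces the pointwise bound $(t_{ij}-P_{ij}(v_{-i}))\cdot \Pr_{t_{i,-j}}[\exists k\neq j,\ldots]\leq \sum_{k\neq j}r_{ik}(v_{-i})\leq r_i(v_{-i})$.

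Plugging this into the inner sum over $t_{ij}$, we get $\sum_{t_{ij}>P_{ij}(v_{-i})+r_i(v_{-i})}f_{ij}(t_{ij})\cdot r_i(v_{-i}) = r_i(v_{-i})\cdot \Pr_{t_{ij}\sim D_{ij}}[t_{ij}>P_{ij}(v_{-i})+r_i(v_{-i})]$. Now I apply the definition of $r_{ij}(v_{-i})$ a second time, with threshold $x=P_{ij}(v_{-i})+r_i(v_{-i})\geq P_{ij}(v_{-i})$: $(P_{ij}(v_{-i})+r_i(v_{-i}))\cdot \Pr[t_{ij}\geq P_{ij}(v_{-i})+r_i(v_{-i})]\leq r_{ij}(v_{-i})$, hence $\Pr[t_{ij}>P_{ij}(v_{-i})+r_i(v_{-i})]\leq r_{ij}(v_{-i})/r_i(v_{-i})$ (dropping $P_{ij}(v_{-i})\geq 0$ from the denominator). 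The two $r_i(v_{-i})$ factors cancel, so the contribution for fixed $i,v_{-i},j$ is at most $r_{ij}(v_{-i})$.

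Summing over $j$ gives $\sum_j r_{ij}(v_{-i}) = r_i(v_{-i})$, then taking expectation over $v_{-i}\sim D_{-i}$ gives $r_i$, and summing over $i$ yields $\textsc{Tail}\leq \sum_i r_i = r$, completing the proof. The only subtlety—and the one part I had to check carefully—is that the \emph{same} threshold-maximum quantity $r_{ij}(v_{-i})$ must absorb \emph{both} factors (surplus $\times$ misreport probability, and tail probability); this is exactly what the ``$\max_{x\geq P_{ij}(v_{-i})}$'' in its definition buys us.
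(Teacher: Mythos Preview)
Your proof is correct and follows essentially the same route as the paper: union-bound the ``exists $k\neq j$'' event, use the defining inequality of $r_{ik}(v_{-i})$ to turn $(t_{ij}-P_{ij}(v_{-i}))\cdot\Pr[\cdot]$ into $\sum_{k\neq j}r_{ik}(v_{-i})\leq r_i(v_{-i})$, then use the defining inequality of $r_{ij}(v_{-i})$ once more on the tail probability $\Pr[t_{ij}>P_{ij}(v_{-i})+r_i(v_{-i})]$ to collapse everything to $r_{ij}(v_{-i})$. Your exposition is slightly more explicit about the two separate invocations of the $\max_{x\geq P_{ij}(v_{-i})}$ property, but the argument is the same.
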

\begin{proof} First, by union bound \begin{align*} &\Pr_{t_{i,-j}\sim D_{i,-j}}[\exists k\neq j,\ t_{ik}-P_{ik}(v_{-i})\geq t_{ij}-P_{ij}(v_{-i})] 
\leq &\sum_{k\neq j} \Pr_{t_{ik}\sim D_{ik}}[\ t_{ik}-P_{ik}(v_{-i})\geq t_{ij}-P_{ij}(v_{-i})].
\end{align*}
 By the definition of $r_{ik}(v_{-i})$, {we certainly have $r_{ik}(v_{-i}) \geq (P_{ik}(v_{-i}) + t_{ij} - P_{ij}(v_{-i}))\cdot \Pr_{t_{ik}\sim D_{ik}}[\ t_{ik}-P_{ik}(v_{-i})\geq t_{ij}-P_{ij}(v_{-i})]$, so we can also derive:}
\begin{align*}
 &\Pr_{t_{ik}\sim D_{ik}}[\ t_{ik}-P_{ik}(v_{-i})\geq t_{ij}-P_{ij}(v_{-i})]
 \leq {r_{ik}(v_{-i})\over P_{ik}(v_{-i})+t_{ij}-P_{ij}(v_{-i})}\leq {r_{ik}(v_{-i})\over t_{ij}-P_{ij}(v_{-i})}.\end{align*}
 Using these two inequalities, we can upper bound \textsc{Tail}:

		\begin{align*}
&\sum_{i}\sum_{v_{-i}\in T_{-i}}f_{-i}(v_{-i})\sum_{j} \sum_{t_{ij}> P_{ij}(v_{-i})+r_{i}(v_{-i})} f_{ij}(t_{ij})
		\cdot \sum_{k\neq j} r_{ik}(v_{-i})\\
		\leq& \sum_{i}\sum_{v_{-i}} f_{-i}(v_{-i})\cdot\sum_{j} r_{i}(v_{-i}) \cdot \sum_{t_{ij}> P_{ij}(v_{-i})+r_{i}(v_{-i})} f_{ij}(t_{ij})\\
		\leq & \sum_{i}\sum_{v_{-i}} f_{-i}(v_{-i})\sum_{j} r_{ij}(v_{-i})\quad \text{(Definition of $r_{ij}(v_{-i})$})\\
		=& r\end{align*}
		\end{proof}

\begin{lemma}\label{lem:multicore}
	$\bvcg\geq {\textsc{Core}\over 2}-r$. In other words, $2r+2\bvcg\geq \textsc{Core}$.
\end{lemma}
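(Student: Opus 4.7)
The plan is to lift the single-bidder core argument from Lemma~\ref{lem:single core} to the multi-bidder setting by treating each bidder $i$'s problem, conditioned on $v_{-i}$, as a shifted single-bidder instance where values for item $j$ are effectively $t_{ij}-P_{ij}(v_{-i})$. For each $i$ and $v_{-i}$, I would define the random variables
$$c_{ij}(v_{-i};t_{ij})\;=\;(t_{ij}-P_{ij}(v_{-i}))\cdot\I\big[t_{ij}\in[P_{ij}(v_{-i}),\,P_{ij}(v_{-i})+r_i(v_{-i})]\big],$$
and set $c_i(v_{-i};t_i)=\sum_j c_{ij}(v_{-i};t_{ij})$. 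By inspection, $\E_{t_i\sim D_i}[c_i(v_{-i};t_i)]$ equals the inner summand defining \textsc{Core} for bidder $i$ and profile $v_{-i}$, so $\textsc{Core}=\sum_i \E_{v_{-i}\sim D_{-i}}\!\big[\E_{t_i\sim D_i}[c_i(v_{-i};t_i)]\big]$.

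The next step is to show that $c_i(v_{-i};\cdot)$ concentrates around its mean. Conditional on $v_{-i}$, the $c_{ij}$'s are mutually independent (since values are independent across items) and each is bounded by $r_i(v_{-i})$. To invoke Lemma~\ref{lem:second moment}, I need the tail bound $a\cdot\Pr[(t_{ij}-P_{ij}(v_{-i}))\I[t_{ij}\geq P_{ij}(v_{-i})]\geq a]\leq r_{ij}(v_{-i})$ for all $a\geq 0$. This follows from the definition of $r_{ij}(v_{-i})$ since $a\cdot\Pr[t_{ij}\geq P_{ij}(v_{-i})+a]\leq (P_{ij}(v_{-i})+a)\cdot\Pr[t_{ij}\geq P_{ij}(v_{-i})+a]\leq r_{ij}(v_{-i})$. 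Applying Lemma~\ref{lem:second moment} with $\mathcal{B}=r_{ij}(v_{-i})$ and $s=r_i(v_{-i})$ gives $\E[c_{ij}(v_{-i})^2]\leq 2r_{ij}(v_{-i})\cdot r_i(v_{-i})$, and summing in $j$ together with $\sum_j r_{ij}(v_{-i})=r_i(v_{-i})$ yields $\Var[c_i(v_{-i})]\leq 2r_i(v_{-i})^2$. Chebyshev's inequality then gives $\Pr_{t_i}[c_i(v_{-i};t_i)\geq \E[c_i(v_{-i})]-2r_i(v_{-i})]\geq \tfrac12$.

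Finally, I would run VCG with per-bidder entry fee $e_i(v_{-i})=\max\!\big(\E_{t_i}[c_i(v_{-i};t_i)]-2r_i(v_{-i}),\,0\big)$, which depends only on $v_{-i}$ and so is a valid BVCG mechanism. Since bidders are additive, bidder $i$'s utility from the VCG allocation at prices $v_{-i}$ is $u_i(t_i,v_{-i})=\sum_j(t_{ij}-P_{ij}(v_{-i}))^{+}\geq c_i(v_{-i};t_i)$, so the Chebyshev bound implies bidder $i$ accepts the entry fee with probability at least $\tfrac12$ whenever $e_i(v_{-i})>0$. The expected entry fee revenue from bidder $i$ is therefore at least $\tfrac12\E_{v_{-i}}[e_i(v_{-i})]\geq \tfrac12\E_{v_{-i}}\!\big[\E_{t_i}[c_i(v_{-i})]-2r_i(v_{-i})\big]$ (the inequality is trivial when the right-hand side is non-positive, as the entry fee is capped at $0$). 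Summing over $i$ and using $\sum_i\E_{v_{-i}}[r_i(v_{-i})]=r$ yields $\bvcg\geq \textsc{Core}/2-r$. The main subtlety is verifying the shifted tail bound needed to apply Lemma~\ref{lem:second moment}; once that is established, everything else parallels the single-bidder argument.
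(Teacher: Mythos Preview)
Your proposal is correct and follows essentially the same approach as the paper: define the truncated surplus variables $c_{ij}(v_{-i})$, bound their second moments via Lemma~\ref{lem:second moment} using the tail bound derived from the definition of $r_{ij}(v_{-i})$, apply Chebyshev to get concentration, and then charge an entry fee of roughly $\E[c_i(v_{-i})]-2r_i(v_{-i})$ that is accepted with probability at least $1/2$. Your version is arguably slightly cleaner in that you explicitly cap the entry fee at zero and note that the final inequality is trivial on profiles where the uncapped fee would be negative, whereas the paper leaves this implicit.
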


\begin{proof}
		Fix any $v_{-i}\in T_{-i}$, let $t_{ij}\sim D_{ij}$, define two new random variables $$b_{ij}(v_{-i}) = (t_{ij}-P_{ij}(v_{-i}))\I[t_{ij}\geq P_{ij}(v_{-i})]$$  and $$c_{ij}(v_{-i})=b_{ij}(v_{-i})\I[b_{ij}(v_{-i}) \leq r_{i}(v_{i})].$$ Clearly, $c_{ij}(v_{-i})$ is supported on $[0,r_{i}(v_{-i})]$. Also, we have \begin{align*}&\E_{t_{ij} \sim D_{ij}}[c_{ij}(v_{-i})] 
		=  \sum_{t_{ij}\in [P_{ij}(v_{-i}), P_{ij}(v_{-i})+r_{i}(v_{-i})]} f_{ij}(t_{ij})\cdot (t_{ij}-P_{ij}(v_{-i})).\end{align*}
		 So we can rewrite \textsc{Core} as 
		$$\sum _{i} \sum_{v_{-i}\in T_{-i}}f_{-i}(v_{-i})\sum_{j} \E[c_{ij}(v_{-i})].$$
		
		Now we will describe a VCG mechanism with per bidder entry fee. Define an entry fee function for bidder $i$ depending on $v_{-i}$ as $e_i(v_{-i})=\sum_{j} \E[c_{ij}(v_{-i})]-2r_{i}(v_{-i})$. We will show that for any $i$ and other bidders types $v_{-i}\in T_{-i}$, bidder $i$ accepts the entry fee $e_{i}(v_{-i})$ with probability at least $1/2$. Since bidders are additive, the VCG mechanism is exactly $m$ separate Vickrey auctions, one for each item. So $P_{ij}(v_{-i})=\max_{\ell\neq i} \{v_{\ell j}\}$, and for any set of $S$, its Clarke Pivot price for $i$ to receive set $S$ is $\sum_{j\in S} P_{ij}(v_{-i})$. 
		
		That also means $\sum_{j} b_{ij}(v_{-i})$ is the random variable that represents bidder $i$'s utility in the VCG mechanism when other bidders bids are $v_{-i}$. If we can prove 
		$\Pr[\sum_{j} b_{ij}(v_{-i})\geq e_{i}(v_{-i})]\geq 1/2$ for all $v_{-i}$, then we know bidder $i$ accepts the entry fee with probability at least $1/2$.
		
		It is not hard to see for any nonnegative number $a$, \begin{align*} &a\cdot\Pr[b_{ij}(v_{-i})\geq a]
		\leq (a+P_{ij}(v_{-i}))\cdot \Pr[t_{ij}\geq a+P_{ij}(v_{-i})]\leq r_{ij}(v_{-i}).\end{align*} Therefore, because each $c_{ij}(v_{-i}) \in [0,r_i(v_{-i})]$, by Lemma~\ref{lem:second moment} we can again bound the second moment as: $\E[c_{ij}(v_{-i})^{2}]\leq 2r_{i}(v_{-i})r_{ij}(v_{-i})$. Since $c_{ij}$'s are independent, \begin{align*}&\Var[\sum_{j} c_{ij}(v_{-i})]=\sum_{j}\Var[ c_{ij}(v_{-i})]
		\leq \sum_{j} \E[c_{ij}(v_{-i})^{2}]\leq 2r_{i}(v_{-i})^{2}.\end{align*}
		
		By Chebyshev inequality, we know \begin{align*}&\Pr[\sum_{j} c_{ij}(v_{-i})\leq \sum_{j} \E[c_{ij}(v_{-i})]-2r_{i}(v_{-i})]
		\leq {\Var[\sum_{j} c_{ij}(v_{-i})]\over 4r_{i}(v_{-i})^{2}}\leq 1/2.\end{align*}
		Therefore, as $b_{ij}(v_{-i}) \geq c_{ij}(v_{-i})$, we can conclude:
$$\Pr[\sum_{j} b_{ij}(v_{-i})\geq e_{i}(v_{-i})]\geq 1/2$$
		
		So the entry fee is accepted with probability at least $1/2$ for all $i$ and $v_{-i}$. So: 

\begin{align*} 
&\bvcg \geq {1\over 2}\sum_{i}\sum_{v_{-i}\in T_{-i}}f_{-i}(v_{-i}) \big(\E[c_{ij}(v_{-i})]-2r_{i}(v_{-i})\big)= {\textsc{Core}\over 2} - r.\end{align*}
\end{proof}

\notshow{
	\begin{theorem}\label{thm:1 unit-demand}
		For a single unit-demand bidder, the optimal revenue is upper bounded by $2OPT^{copies}$.
	\end{theorem}
	\begin{proof}
		Combining Lemma~\ref{lem:1 SINGLE} and Lemma~\ref{lem:unit-demand value}.
	\end{proof}
	
	\begin{align*}
	&\sum_{i}\sum_{j}\int_{T_{i}}f_{i}(t_{i}) \cdot\pi_{ij}(t_{i}) \cdot \sum_{v_{-i}\in T_{-i}: t_{i}\in R_{j}\text{ for flow }\lambda^{i}_{v_{-i}}} f_{-i}(v_{-i}) \Phi_{ij}^{v_{-i}}(t_{i}) d t_{i}\\
	=&\sum_{i}\sum_{j}\sum_{v_{-i}\in T_{-i}} f_{-i}(v_{-i}) \int_{T_{i,-j}}f_{i,-j}(t_{i,-j})\int_{x_{l}^{(v_{-i}, t_{i,-j})}}^{+\infty} f_{ij}(t_{ij}) \cdot\pi_{ij}(t_{i})\Phi_{ij}^{v_{-i}}(t_{i}) d t_{ij} d t_{i,-j}\\
	\leq &\sum_{i}\sum_{j}\sum_{v_{-i}\in T_{-i}} f_{-i}(v_{-i}) \int_{T_{i,-j}}f_{i,-j}(t_{i,-j})\int_{x_{l}^{(v_{-i}, t_{i,-j})}}^{+\infty} f_{ij}(t_{ij}) \cdot\Phi_{ij}^{v_{-i}}(t_{i}) d t_{ij} d t_{i,-j}\\
	\leq & \sum_{i}\sum_{j}\sum_{v_{-i}\in T_{-i}} f_{-i}(v_{-i}) \int_{T_{i,-j}}f_{i,-j}(t_{i,-j})\cdot r_{ij}|v_{-i,j} d t_{i,-j}\ (Corollary~\ref{cor:bound virtual value})\\
	= &r
	\end{align*}
	
	\section{Bounding the first term}
	First notice that $\Pr[(\exists k\neq j, t_{ik}-P_{ik}(t_{-i})\geq t_{ij}-P_{ij}(t_{-i})) \lor (t_{ij}-P_{ij}(t_{-i})< 0)]=\Pr[t_{ij}< P_{ij}(t_{-i})]+\Pr[(t_{ij}\geq S_{ij}) \land (\exists k\neq j, t_{ik}-P_{ik}(t_{-i})\geq t_{ij}-P_{ij}(t_{-i})) )]$. Let's rewrite the first term 
	\begin{align*}
	&\sum_{i}\sum_{j}\int_{T_{i}}f_{i}(t_{i})\cdot\pi_{ij}(t_{i}) \cdot t_{ij}\Pr[(\exists k\neq j, t_{ik}-P_{ik}(t_{-i})\geq t_{ij}-P_{ij}(t_{-i})) \lor (t_{ij}-P_{ij}(t_{-i})< 0)|t_{i}] dt_{i}\\
	=& \sum_{i}\sum_{j}\int_{T_{i}}f_{i}(t_{i})\cdot\pi_{ij}(t_{i}) \cdot t_{ij} \Pr[t_{ij}< P_{ij}(t_{-i})|t_{i}] d t_{i}\qquad (1)\\
	+&\sum_{i}\sum_{j}\int_{T_{i}}f_{i}(t_{i})\cdot\pi_{ij}(t_{i}) \cdot t_{ij} \Pr[(t_{ij}\geq S_{ij}) \land (\exists k\neq j, t_{ik}-P_{ik}(t_{-i})\geq t_{ij}-P_{ij}(t_{-i})) )|t_{i}] d t_{i}\qquad (2)
	\end{align*} 
	
	\subsection{Bounding (1)}
	Let me first bound (1) using the revenue of the Vickrey auction.
	\begin{align*}
	&(1)=\sum_{i}\sum_{j}\int_{T_{ij}} f_{ij}(t_{ij})\cdot t_{ij}\cdot\Pr[t_{ij}< P_{ij}(t_{-i})|t_{ij}] \int_{T_{i,-j}} f_{i,-j}(t_{i,-j})\cdot \pi_{ij}(t_{i}) d t_{i,-j} d t_{ij} \qquad (independence)
	\end{align*}
	
	Let $\pi_{ij}(t_{ij})=\int_{T_{i,-j}} f_{i,-j}(t_{i,-j})\cdot \pi_{ij}(t_{i}) d t_{i,-j}$, clearly this is a feasible reduced form for item $j$. Now the above formula becomes 
	\begin{align*}
	&\sum_{i}\sum_{j}\int_{T_{ij}} f_{ij}(t_{ij})\cdot \pi_{ij}(t_{ij})\cdot t_{ij}\cdot\Pr[t_{ij}< P_{ij}(t_{-i})|t_{ij}] dt_{ij}\\
	\leq &\sum_{i}\sum_{j}\int_{T_{j}} f_{j}(t_{j})\cdot \pi_{ij}(t_{ij})\cdot m_{j}d t_{j}\\
	\leq &\sum_{j}m_{j} \sum_{i}\int_{T_{j}} f_{j}(t_{j})\cdot \pi_{ij}(t_{ij})d t_{j}\\
	\leq &\sum_{j} m_{j}\\
	\leq & SREV\\
	\end{align*}
	
	\subsection{Bounding (2)}
	First, we relax (2) by replacing all $\pi_{ij}(t_{i})$ with $1$. Now it becomes $$\sum_{i}\sum_{j}\int_{T_{i}}f_{i}(t_{i})\cdot t_{ij} \Pr[(t_{ij}\geq P_{ij}(t_{-i})) \land (\exists k\neq j, t_{ik}-P_{ik}(t_{-i})\geq t_{ij}-P_{ij}(t_{-i})) )|t_{i}] d t_{i}.$$
	Let $g_{j}(\cdot)$ be the density function of $P_{ij}(t_{-i})$ when $t_{-ij}$ is drawn from $D_{-ij}$. Now we can rewrite the formula above as
	\begin{align*}
	&\sum_{i}\sum_{j}\int_{T_{i}}f_{i}(t_{i})\cdot t_{ij} \int _{P_{ij}(t_{-i})\leq t_{ij}} g_{j}(P_{ij}(t_{-i}))\Pr[\exists k\neq j, t_{ik}-P_{ik}(t_{-i})\geq t_{ij}-P_{ij}(t_{-i})) |t_{i},P_{ij}(t_{-i})]d P_{ij}(t_{-i}) d t_{i}\\
	=&\sum_{i}\sum_{j}\int_{T_{ij}}f_{ij}(t_{ij})\cdot t_{ij}\int_{P_{ij}(t_{-i})\leq t_{ij}} g_{j}(P_{ij}(t_{-i}))\int_{T_{i,-j}} f_{i,-j}(t_{i,-j})\Pr[\exists k\neq j, t_{ik}-P_{ik}(t_{-i})\geq t_{ij}-P_{ij}(t_{-i})) |t_{i},P_{ij}(t_{-i})]dt_{i,-j} d P_{ij}(t_{-i}) d t_{ij}\\
	= &\sum_{i}\sum_{j}\int_{T_{ij}}f_{ij}(t_{ij})\cdot t_{ij}\int_{P_{ij}(t_{-i})\leq t_{ij}} g_{j}(P_{ij}(t_{-i})) \Pr[\exists k\neq j, t_{ik}-P_{ik}(t_{-i})\geq t_{ij}-P_{ij}(t_{-i})) |t_{ij},P_{ij}(t_{-i})] d P_{ij}(t_{-i}) d t_{ij}\\
	= & \sum_{i}\sum_{j}\int_{0}^{+\infty} g_{j}(P_{ij}(t_{-i}))\int_{t_{ij}\geq P_{ij}(t_{-i})}f_{ij}(t_{ij})\cdot t_{ij} \Pr[\exists k\neq j, t_{ik}-P_{ik}(t_{-i})\geq t_{ij}-P_{ij}(t_{-i})) |t_{ij},P_{ij}(t_{-i})]  d t_{ij} d P_{ij}(t_{-i})\\
	= & \sum_{i}\sum_{j}\int_{0}^{+\infty} g_{j}(P_{ij}(t_{-i}))\int_{t_{ij}\geq P_{ij}(t_{-i})}f_{ij}(t_{ij})\cdot [P_{ij}(t_{-i})+(t_{ij}-P_{ij}(t_{-i}))] \Pr[\exists k\neq j, t_{ik}-P_{ik}(t_{-i})\geq t_{ij}-P_{ij}(t_{-i})) |t_{ij},P_{ij}(t_{-i})]  d t_{ij} d P_{ij}(t_{-i})\\
	\leq & \sum_{i}\sum_{j}\int_{0}^{+\infty} g_{j}(P_{ij}(t_{-i}))\int_{t_{ij}\geq P_{ij}(t_{-i})}f_{ij}(t_{ij})\cdot P_{ij}(t_{-i}) d t_{ij} d_{P_{ij}(t_{-i})}\\
	+ &\sum_{i}\sum_{j}\int_{0}^{+\infty} g_{j}(P_{ij}(t_{-i}))\int_{t_{ij}\geq P_{ij}(t_{-i})}f_{ij}(t_{ij})\cdot (t_{ij}-P_{ij}(t_{-i})) \Pr[\exists k\neq j, t_{ik}-P_{ik}(t_{-i})\geq t_{ij}-P_{ij}(t_{-i})) |t_{ij},P_{ij}(t_{-i})]  d t_{ij} d P_{ij}(t_{-i}) \qquad (3)\\
	= & V + (3)
	\end{align*}
	
	Now we only need to bound (3), this turns out to be very similar to the single bidder case.
	
	\subsection{Bounding (3)}
	We separate the sum into two parts, (i) $t_{ij}\in[P_{ij}(t_{-i}), P_{ij}(t_{-i})+r_{i}]$; and (ii) $t_{ij}>P_{ij}(t_{-i})+r_{ij}$, and bound them using different methods.
	\begin{align*}
	&(3)\\
	=& \sum_{i}\sum_{j}\int_{0}^{+\infty} g_{j}(P_{ij}(t_{-i}))\int_{P_{ij}(t_{-i})}^{P_{ij}(t_{-i})+r_{i}}f_{ij}(t_{ij})\cdot (t_{ij}-P_{ij}(t_{-i})) \Pr[\exists k\neq j, t_{ik}-P_{ik}(t_{-i})\geq t_{ij}-P_{ij}(t_{-i})) |t_{ij},P_{ij}(t_{-i})]  d t_{ij} d P_{ij}(t_{-i})\\
	+& \sum_{i}\sum_{j}\int_{0}^{+\infty} g_{j}(P_{ij}(t_{-i}))\int_{P_{ij}(t_{-i})+r_{i}}^{+\infty}f_{ij}(t_{ij})\cdot (t_{ij}-P_{ij}(t_{-i})) \Pr[\exists k\neq j, t_{ik}-P_{ik}(t_{-i})\geq t_{ij}-P_{ij}(t_{-i})) |t_{ij},P_{ij}(t_{-i})]  d t_{ij} d P_{ij}(t_{-i}) \\
	\leq & \sum_{i}\sum_{j}\int_{0}^{+\infty} g_{j}(P_{ij}(t_{-i}))\int_{P_{ij}(t_{-i})}^{P_{ij}(t_{-i})+r_{i}}f_{ij}(t_{ij})\cdot (t_{ij}-P_{ij}(t_{-i}))  d t_{ij} d P_{ij}(t_{-i}) \qquad (A)\\
	+ & \sum_{i}\sum_{j}\int_{0}^{+\infty} g_{j}(P_{ij}(t_{-i}))\int_{P_{ij}(t_{-i})+r_{i}}^{+\infty}f_{ij}(t_{ij})\cdot (t_{ij}-P_{ij}(t_{-i})) (\sum_{k\neq j} \Pr[t_{ik}-P_{ik}(t_{-i})\geq t_{ij}-P_{ij}(t_{-i})) |t_{ij},P_{ij}(t_{-i})])  d t_{ij} d P_{ij}(t_{-i}) \qquad (B)
	\end{align*}
	
	We will first bound (B). Notice that $\Pr[t_{ik}-P_{ik}(t_{-i})\geq t_{ij}-P_{ij}(t_{-i})) |t_{ij},P_{ij}(t_{-i})]= \Pr[t_{ik}\geq t_{ij}-P_{ij}(t_{-i})+P_{ik}(t_{-i})) |t_{ij},P_{ij}(t_{-i})]$, so therefore I can consider the following auction. Whenever bidder $i$ has the highest value for item $k$, I ask her to pay the second highest plus $t_{ij}-P_{ij}(t_{-i})$. The expected revenue is greater than $\Pr[t_{ik}\geq t_{ij}-P_{ij}(t_{-i})+P_{ik}(t_{-i})) |t_{ij},P_{ij}(t_{-i})]\cdot (t_{ij}-P_{ij}(t_{-i}))$, but is less than $r_{ik}$. Therefore, $\Pr[t_{ik}-P_{ik}(t_{-i})\geq t_{ij}-P_{ij}(t_{-i})) |t_{ij},P_{ij}(t_{-i})]\leq \frac{r_{ik}}{t_{ij}-P_{ij}(t_{-i})}$. Therefore, we have 
	\begin{align*}
	&(B)\\
	\leq &\sum_{i}\sum_{j}\int_{0}^{+\infty} g_{j}(P_{ij}(t_{-i}))\int_{P_{ij}(t_{-i})+r_{i}}^{+\infty}f_{ij}(t_{ij}) (\sum_{k\neq j} r_{ik})d t_{ij} d P_{ij}(t_{-i})\\
	\leq & \sum_{i}\sum_{j}\int_{0}^{+\infty} g_{j}(P_{ij}(t_{-i}))r_{i} \int_{P_{ij}(t_{-i})+r_{i}}^{+\infty}f_{ij}(t_{ij}) d t_{ij} d P_{ij}(t_{-i})\\
	= &\sum_{i}\sum_{j}\int_{0}^{+\infty} g_{j}(P_{ij}(t_{-i}))r_{i} (1- F_{ij}(P_{ij}(t_{-i})+r_{i}))d P_{ij}(t_{-i})\\
	\leq& \sum_{i}\sum_{j} r_{ij}\\
	= &r
	\end{align*}
	The last inequality is because $\int_{0}^{+\infty} g_{j}(P_{ij}(t_{-i}))r_{i} (1- F_{ij}(P_{ij}(t_{-i})+r_{i}))d P_{ij}(t_{-i})$ is less than the revenue of the following mechanism: when $i$ has highest value for item $j$, ask her to pay $P_{ij}(t_{-i})+r_{i}$, and $i$'s expected payment in this mechanism is less than $r_{ij}$.

	The last step is to bound (A). 
	
	\subsection{Bounding (A)}
	Let $c_{ij}$ be a random variable such that $c_{ij}= (r_{ij}-P_{ij}(t_{-i}))\mathbb{I}\{r_{ij}\in [P_{ij}(t_{-i}),P_{ij}(t_{-i})+r_{i}]\}$. (A)$= \sum_{i}\sum_{j}\mathbb{E}[c_{ij}]$. Let $c_{i}= \sum_{j} c_{ij}$. If $c_{i}$ concentrates, then it's easy to extract a constant fraction of $\mathbb{E}[c_{i}]$, by charging entree fee to bidder $i$. We call bidder$i$ good if $c_{i}$ concentrates. Now the key is to argue that if (A) is large comparing to $r$, then the sum of $\mathbb{E}[c_{i}]$ of good bidders is almost (A). 
	
	Again, we will use Chebyshev to argue concentration. $Var[c_{i}]=\sum_{j} Var[c_{ij}]\leq \sum_{j} \mathbb{E}[c_{ij}^{2}]$. Now we show $\mathbb{E}[c_{ij}^{2}]\leq 2r_{i}r_{ij}$.
	\begin{align*}
	&\mathbb{E}[c_{ij}^{2}]=\int_{0}^{+\infty} g_{j}(P_{ij}(t_{-i}))\int_{P_{ij}(t_{-i})}^{P_{ij}(t_{-i})+r_{i}}f_{ij}(t_{ij})\cdot (t_{ij}-P_{ij}(t_{-i}))^{2}  d t_{ij} d P_{ij}(t_{-i})\\
	=& \int_{0}^{+\infty} g_{j}(P_{ij}(t_{-i})) 2\int_{P_{ij}(t_{-i})}^{P_{ij}(t_{-i})+r_{i}}(t_{ij}-P_{ij}(t_{-i}))(F_{ij}(P_{ij}(t_{-i})+r_{i})-F_{ij}(t_{ij})dt_{ij}d_{P_{ij}(t_{-i})}\\
	\leq & 2\int_{0}^{+\infty} g_{j}(P_{ij}(t_{-i})) \int_{P_{ij}(t_{-i})}^{P_{ij}(t_{-i})+r_{i}}t_{ij}(1-F_{ij}(t_{ij})dt_{ij}d_{P_{ij}(t_{-i})}\\
	\leq & 2\int_{0}^{+\infty} g_{j}(S_{j)} \int_{P_{ij}(t_{-i})}^{P_{ij}(t_{-i})+r_{i}}r_{ij}|P_{ij}(t_{-i})dt_{ij} d_{P_{ij}(t_{-i})}\\
	= & 2 r_{i} \int_{0}^{+\infty} g_{j}(S_{j)} r_{ij}|P_{ij}(t_{-i}) dP_{ij}(t_{-i})\\
	= & 2 r_{i} r_{ij}
	\end{align*}
	Therefore $Var[c_{i}]\leq 2r_{i}^{2}$. So if we set the entree fee for bidder $i$ to be $\mathbb{E}[c_{i}]-2r_{i}$, by Chebyshev bidder $i$ will buy with probability at least $1/2$. So the expected revenue is at least (A)$/2 - r$. (A) is upper bounded by the revenue of two Ronen plus two entree fee. 
	
	So totally the virtual welfare is upper bounded by revenue of four Ronen, one Vickrey, one SREV and two entree fee. The approximation factor is $8$.}

\noindent\textbf{Analyzing \textsc{Single}, \textsc{Over} and \textsc{Under}:}
First we consider \textsc{Single}, which is similar to  \Cref{lem:1 SINGLE}. 

\begin{lemma}\label{lem:SINGLE}
	For any feasible $\pi(\cdot)$, \textsc{Single} $\leq$ \copies.
\end{lemma}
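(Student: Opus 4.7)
The plan is to extend the proof of Lemma~\ref{lem:1 SINGLE} to the multi-bidder setting. Given any mechanism $M$ realizing the reduced form $\pi$, I will construct a feasible (randomized) mechanism $M'$ in the Copies setting whose expected ironed virtual welfare (with respect to $\tp$) equals \textsc{Single}. Since the Copies setting is a single-dimensional environment with independent values across the $nm$ copies and feasibility constraints inherited from $\mathcal{F}$, Theorem~\ref{thm:myerson} then bounds the ironed virtual welfare of any feasible mechanism by \copies, which gives the lemma.

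The construction of $M'$ is the natural multi-bidder generalization of the single-bidder construction. On input $\{t_{ij}\}_{i,j}$, set $t_i = (t_{ij})_j$ and run $M$ on $(t_1,\ldots,t_n)$ to obtain an allocation $\alpha \in \mathcal{F}$. Then, using $M'$'s internal randomness, independently for each bidder $i$ draw a fresh ``shadow'' profile $v_{-i}^{(i)}\sim D_{-i}$, and serve copy $(i,j)$ if and only if both $(i,j)\in\alpha$ and $t_i \in R^{(v_{-i}^{(i)})}_j$. Since the served set is a subset of $\alpha$ and $\mathcal{F}$ is downward-closed, $M'$ is feasible in the Copies setting.

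The verification is a short factorization. Conditional on $t_i$, the indicator $\I[(i,j)\in\alpha]$ depends only on $t_{-i}$ (the other bidders' types used inside $M$) and $M$'s internal randomness, while $\I[t_i \in R_j^{(v_{-i}^{(i)})}]$ depends only on $v_{-i}^{(i)}$; since these two sources of randomness are independent, the interim allocation probability of copy $(i,j)$ with value $t_{ij}$ in $M'$ equals $\sum_{t_{i,-j}} f_{i,-j}(t_{i,-j})\cdot \pi_{ij}(t_i) \cdot \Pr_{v_{-i}\sim D_{-i}}[t_i \in R^{(v_{-i})}_j]$. Multiplying by $f_{ij}(t_{ij})\tp_{ij}(t_{ij})$ and summing over $i$, $j$, $t_{ij}$ reproduces \textsc{Single}. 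The one subtlety absent from the single-bidder case, and the only place to be careful, is that \textsc{Single} is defined using a $v_{-i}$ that is independent of $t_{-i}$; this is precisely why $M'$ must draw a fresh shadow rather than reuse the $t_{-i}$ already present in its own input, as conflating the two would break the factorization.
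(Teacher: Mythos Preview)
Your proof is correct and follows essentially the same approach as the paper's. The paper's proof says ``if $M$ allocates item $j$ to bidder $i$ then $M'$ serves agent $(i,j)$ with probability $\Pr_{v_{-i}\sim D_{-i}}[t_i\in R_j^{(v_{-i})}]$''; your fresh shadow draw $v_{-i}^{(i)}\sim D_{-i}$ is simply an explicit implementation of that randomization, and your added remarks (downward-closedness of $\mathcal{F}$ for feasibility, and the need for the shadow to be independent of the $t_{-i}$ fed into $M$) spell out details the paper leaves implicit.
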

\begin{proof}
		Assume $M$ is the ex-post allocation rule that induces $\pi(\cdot)$. Consider another ex-post allocation rule $M'$ for the copies setting, such that for every type profile $t$, if $M$ allocates item $j$ to bidder $i$ in the original setting then $M'$ serves agent $(i,j)$ with probability $\Pr_{v_{-i}\sim D_{-i}}[t_{i}\in R_{j}^{(v_{-i})}]$. As $M$ is feasible in the original setting, $M'$ is clearly feasible in the Copies setting. When agent $(i,j)$ has type $t_{ij}$, her probability of being served in $M'$ is 
\begin{align*}
&\sum_{t_{i,-j}}f_{i,-j}(t_{i,-j})\cdot\pi_{ij}(t_{ij},t_{i,-j})\cdot\Pr_{v_{-i}\sim D_{-i}}[(t_{ij},t_{i,-j})\in R_{j}^{(v_{-i})}] 
\end{align*}
for all $j$ and $t_{ij}$.
		Therefore, \textsc{Single} is the ironed virtual welfare achieved by $M'$ with respect to $\tp(\cdot)$. Since the copies setting is a single dimensional setting, the optimal revenue \copies\ equals the maximum ironed virtual welfare, thus no smaller than \textsc{Single}.
\end{proof}


Next, we move onto \textsc{Over}. Recall that the terms in \textsc{Over} are VCG prices times an indicator that the bidder's value exceeds the VCG prices. So we should hope to be able to cover \textsc{Over} with some VCG-like mechanism. We begin with the following technical propositions:

\begin{proposition}\label{prop:copies}
Let $\pi(\cdot)$ be any reduced form of a BIC mechanism in the original setting. Define $$\Pi_{ij}(t_{ij}) = \mathbb{E}_{t_{i,-j}\sim D_{i,-j}}[\pi_{ij}(t_i)].$$ Then $\Pi_{ij}(t_{ij})$ is monotone in $t_{ij}$.
\end{proposition}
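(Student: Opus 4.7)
The plan is to exploit the fact that for additive or unit-demand bidders (the settings covered in this section), the interim utility of reporting type $t'_i$ when the true type is $t_i$ is exactly $t_i \cdot \pi_i(t'_i) - p_i(t'_i)$, so BIC constraints give a clean pointwise inequality between reduced-form allocations. From this, coordinatewise monotonicity of $\pi_{ij}$ will follow, and monotonicity of $\Pi_{ij}$ is just preserved under the expectation over $t_{i,-j}$.

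Concretely, I would fix any two values $t_{ij} < t'_{ij}$ in $T_{ij}$, and any $t_{i,-j} \in T_{i,-j}$. Let $t_i = (t_{ij}; t_{i,-j})$ and $t'_i = (t'_{ij}; t_{i,-j})$, so $t'_i - t_i = (t'_{ij} - t_{ij}) e_j$. The two BIC constraints between these two types read
\begin{align*}
\pi_i(t_i) \cdot t_i - p_i(t_i) &\geq \pi_i(t'_i) \cdot t_i - p_i(t'_i),\\
\pi_i(t'_i) \cdot t'_i - p_i(t'_i) &\geq \pi_i(t_i) \cdot t'_i - p_i(t_i).
\end{align*}
Adding them and cancelling the payment terms gives $(\pi_i(t'_i) - \pi_i(t_i)) \cdot (t'_i - t_i) \geq 0$, which because $t'_i - t_i$ is a nonnegative scalar multiple of $e_j$ reduces to $\pi_{ij}(t'_i) \geq \pi_{ij}(t_i)$.

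To finish, I would take the expectation of both sides over $t_{i,-j} \sim D_{i,-j}$, which preserves the inequality and yields $\Pi_{ij}(t'_{ij}) \geq \Pi_{ij}(t_{ij})$, as desired.

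There is no real obstacle here; the only thing to be mildly careful about is that the quasi-linear utility form $t_i \cdot \pi_i(\cdot) - p_i(\cdot)$ is valid in both the additive and the unit-demand setting considered here (in the unit-demand case because the reduced form from any feasible mechanism allocates at most one item per bidder, so expected value really is $t_i \cdot \pi_i(t'_i)$). That observation is already spelled out in the preliminaries, so the argument above applies uniformly to both cases.
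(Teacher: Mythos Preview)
Your proof is correct and is essentially the same argument as the paper's: both establish the stronger pointwise claim that $\pi_{ij}(\cdot,t_{i,-j})$ is monotone for each fixed $t_{i,-j}$ by summing the two BIC constraints between $(t_{ij},t_{i,-j})$ and $(t'_{ij},t_{i,-j})$ and using that the types differ only in coordinate $j$. The paper phrases this as a violation of 2-cycle monotonicity by contradiction, while you argue directly, but the content is identical.
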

\begin{proof}
In fact, for all $t_{i,-j}$, we must have $\pi_{ij}(\cdot,t_{i,-j})$ monotone increasing in $t_{ij}$. Assume for contradiction that this were not the case, and let $t_{ij} < t'_{ij}$ with $\pi_{ij}(t_{ij},t_{i,-j})> \pi_{ij}(t'_{ij},t_{i,-j})$. Then $(t_{ij},t_{i,-j}), (t'_{ij},t_{i,-j})$ form a 2-cycle that violates cyclic monotonicity. This is because both types value all items except for $j$ exactly the same. 

To expand a bit for readers not familiar with cyclic monotonicity: observe that $t_{ij} < t'_{ij}$ but $\pi_{ij}(t_{ij},t_{i,-j}) > \pi_{ij}(t'_{ij}, t_{i, -j})$ implies that

$$(t_{ij},t_{i,-j}) \cdot \pi_{i}(t_{ij}, t_{i,-j}) + (t'_{ij},t_{i,-j}) \cdot \pi_{i}(t'_{ij}, t_{i,-j}) <(t'_{ij},t_{i,-j}) \cdot \pi_{i}(t_{ij}, t_{i,-j}) + (t_{ij},t_{i,-j}) \cdot \pi_{i}(t'_{ij}, t_{i,-j})$$
$$\Rightarrow (t_{ij},t_{i,-j}) \cdot \pi_{i}(t_{ij}, t_{i,-j})-(t_{ij},t_{i,-j}) \cdot \pi_{i}(t'_{ij}, t_{i,-j}) <(t'_{ij},t_{i,-j}) \cdot \pi_{i}(t_{ij}, t_{i,-j})-(t'_{ij},t_{i,-j}) \cdot \pi_{i}(t'_{ij}, t_{i,-j}).$$

This directly implies that no matter what prices are set for $p_i(t_{ij},t_{i,-j})$ and $p_i(t'_{ij}, t_{i,-j})$, if bidder $i$ with type $(t_{ij},t_{i,-j})$ is happy to tell the truth, then type $(t'_{ij}, t_{i,-j})$ strictly prefers to lie and report $(t_{ij}, t_{i,-j})$ than tell the truth. 
\end{proof}

\begin{proposition}\label{prop:pi with reserve}
	For any $v\in T$, any $\pi(\cdot)$ that is a reduced form of some BIC mechanism, \begin{align*} &\textsc{OPT}^{\textsc{Copies}} \geq \sum_{i}\sum_{t_{i}\in T_{i}} \sum_{j} f_{i}(t_{i})\cdot \pi_{ij}(t_{i})\cdot P_{ij}(v_{-i})\cdot\I[t_{ij}\geq P_{ij}(v_{-i})] .\end{align*}
	\end{proposition}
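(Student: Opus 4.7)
The plan is to exhibit a BIC mechanism in the copies setting whose expected revenue is at least the right-hand side; since $\copies$ is the supremum of BIC revenues in the copies setting, the desired inequality follows immediately. The key insight is that $\pi$ is ex-post monotone in each coordinate (Proposition~\ref{prop:copies}), so multiplying the allocation by an indicator $\I[s_{ij}\geq P_{ij}(v_{-i})]$ preserves both feasibility and monotonicity of the resulting interim allocation in the copies setting.

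Construction of the copies mechanism $\tilde M^v$: fix any ex-post feasible mechanism $M$ in the original setting whose reduced form is $\pi$. Define $\tilde M^v$ as follows: given copy bids $s=(s_{ij})_{i,j}$, regroup them as $s_i=(s_{ij})_{j}$ and simulate $M$ on $(s_i)_i$; whenever $M$ allocates item $j$ to bidder $i$, allocate it to copy $(i,j)$ iff $s_{ij}\geq P_{ij}(v_{-i})$. Since $v$ is fixed in the statement, $P_{ij}(v_{-i})$ is a constant, independent of any bid, and since $\mathcal{F}$ is downward-closed and we only drop allocations, $\tilde M^v$ is ex-post feasible in the copies setting. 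A short computation using independence of $D_{i,-j}$ and $D_{-i}$ shows that the interim allocation of copy $(i,j)$ when its own bid is $t_{ij}$ equals
\[ \Pi^c_{ij}(t_{ij}) \;=\; \Pi_{ij}(t_{ij})\cdot \I[t_{ij}\geq P_{ij}(v_{-i})],\]
where $\Pi_{ij}(t_{ij})=\E_{t_{i,-j}\sim D_{i,-j}}[\pi_{ij}(t_i)]$ as in Proposition~\ref{prop:copies}. By that proposition $\Pi_{ij}(\cdot)$ is monotone non-decreasing, hence so is $\Pi^c_{ij}(\cdot)$.

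Applying Lemma~\ref{lem:proper} in the (single-dimensional) copies setting produces a BIC payment rule $p^c_{ij}(\cdot)$ given by $p^c_{ij}(t^k)=\sum_{\ell=1}^{k} t^\ell\bigl(\Pi^c_{ij}(t^\ell)-\Pi^c_{ij}(t^{\ell-1})\bigr)$, where $t^0<t^1<\cdots$ enumerates $T_{ij}$. The main technical step, and the only one requiring real thought, is to prove
\[ p^c_{ij}(t_{ij}) \;\geq\; P_{ij}(v_{-i})\cdot \Pi^c_{ij}(t_{ij}) \quad \text{for every } t_{ij}\in T_{ij}.\]
For $t_{ij}<P_{ij}(v_{-i})$ both sides vanish. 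For $t_{ij}\geq P_{ij}(v_{-i})$, only summands with $t^\ell\geq P_{ij}(v_{-i})$ contribute (since earlier increments are zero), each such $t^\ell$ is at least $P_{ij}(v_{-i})$, and the non-negative increments telescope to $\Pi^c_{ij}(t_{ij})$; together these yield the bound.

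Combining these pieces, taking expectations, and summing over $(i,j)$,
\[ \copies \;\geq\; \sum_{i,j}\E_{t_{ij}}[p^c_{ij}(t_{ij})] \;\geq\; \sum_{i,j} P_{ij}(v_{-i})\cdot \E_{t_{ij}}[\Pi^c_{ij}(t_{ij})],\]
and unfolding $\Pi^c_{ij}$ using independence of $D_{i,-j}$ recovers exactly the right-hand side of the proposition. The main obstacle is the telescoping lower bound on $p^c_{ij}$; everything else is mechanical bookkeeping about which coordinates are being integrated out.
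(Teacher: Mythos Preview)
Your proof is correct and follows the same high-level strategy as the paper's: exhibit a feasible, truthful mechanism in the copies setting whose revenue lower-bounds the right-hand side. The difference is in how the payments are handled. The paper starts from any truthful copies mechanism $(M,p)$ inducing $\Pi$, and then adds \emph{lazy reserves}: a potential winner $(i,j)$ is offered service at $\max\{p_{ij}(t),P_{ij}(v_{-i})\}$, so whenever $(i,j)$ was served by $M$ and $t_{ij}\ge P_{ij}(v_{-i})$ it pays at least $P_{ij}(v_{-i})$. You instead truncate the allocation rule itself by the threshold $P_{ij}(v_{-i})$, observe that the resulting interim allocation $\Pi^c_{ij}=\Pi_{ij}\cdot\I[\cdot\ge P_{ij}(v_{-i})]$ is monotone (via Proposition~\ref{prop:copies}), and then invoke Lemma~\ref{lem:proper} to read off the Myerson payments and lower-bound them by the telescoping argument. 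Your route is arguably cleaner within the paper's framework, since truthfulness of the constructed mechanism is immediate from Lemma~\ref{lem:proper}, whereas the paper leaves truthfulness of the lazy-reserve construction implicit. The paper's route, on the other hand, makes the revenue lower bound transparent without any telescoping computation. Both yield exactly the same interim allocation and the same final inequality.
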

\begin{proof} Recall from Proposition~\ref{prop:copies} that every BIC interim form $\pi(\cdot)$ in the original setting corresponds to a monotone interim form in the copies setting, $\Pi(\cdot)$. Let $M$ be any (possibly randomized) allocation rule that induces $\Pi(\cdot)$, and $p(\cdot)$ a corresponding price rule (wlog we can let $(M,p)$ be ex-post IR). Consider the following mechanism instead: on input $t$, first run $(M,p)$ to (possibly randomly) determine a set of potential winners. Then, if $(i,j)$ is a potential winner, offer $(i,j)$ service at price $\max\{p_{ij}(t),P_{ij}(v_{-i}))$. Whenever $(i,j)$ is a potential winner, $t_{ij}\geq p_{ij}(t)$. It is clear that in the event that $(i,j)$ is a potential winner, and $t_{ij} \geq P_{ij}(t_{-i})$, $(i,j)$ will accept the price and pay at least $P_{ij}(v_{-i})$. Therefore, for any $t$ as long as $(i,j)$ is served in $M$, then the payment from $(i,j)$ in the new proposed mechanism is at least $P_{ij}(v_{-i})\I[t_{ij}\geq P_{ij}(v_{-i})]$. That means the total revenue of the new mechanism is at least $\sum_{i}\sum_{t_{i}\in T_{i}} \sum_{j} f_{i}(t_{i})\cdot \pi_{ij}(t_{i})\cdot P_{ij}(v_{-i})\cdot\I[t_{ij}\geq P_{ij}(v_{-i})]$, which is upper bounded by \copies.
	\end{proof}

\begin{lemma}\label{lem:over}
	$\textsc{Over}\leq$ \copies.
\end{lemma}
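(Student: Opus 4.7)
The plan is that this lemma is essentially an immediate corollary of Proposition~\ref{prop:pi with reserve}, which has already done the heavy lifting. That proposition asserts that for \emph{every} fixed type profile $v \in T$,
\[
\textsc{OPT}^{\textsc{Copies}} \;\geq\; \sum_{i}\sum_{t_{i}\in T_{i}} \sum_{j} f_{i}(t_{i})\cdot \pi_{ij}(t_{i})\cdot P_{ij}(v_{-i})\cdot\I[t_{ij}\geq P_{ij}(v_{-i})].
\]
The first step is to notice that the right-hand side depends on $v$ only through each $v_{-i}$, so we may average the inequality over $v \sim D$ with weights $f(v)$. The left-hand side is a constant independent of $v$, so averaging leaves it unchanged. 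On the right-hand side, the $i$-th summand depends only on $v_{-i}$, so Fubini lets us pull the expectation inside and replace $\mathbb{E}_{v \sim D}[\cdot]$ by $\sum_{v_{-i}\in T_{-i}} f_{-i}(v_{-i})\,[\cdot]$ term by term.

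The second step is to compare the resulting expression with the definition of $\textsc{Over}$. After averaging, the right-hand side reads
\[
\sum_{i}\sum_{t_{i}\in T_{i}} \sum_{j} f_{i}(t_{i})\cdot \pi_{ij}(t_{i}) \sum_{v_{-i} \in T_{-i}} f_{-i}(v_{-i})\cdot P_{ij}(v_{-i}) \cdot \I[t_{ij}\geq P_{ij}(v_{-i})],
\]
which is exactly $\textsc{Over}$ as defined in the decomposition preceding this lemma. Thus $\textsc{Over} \leq \textsc{OPT}^{\textsc{Copies}}$.

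I do not anticipate any real obstacle: the conceptual work — using the monotone interim form $\Pi_{ij}(t_{ij}) = \mathbb{E}_{t_{i,-j}}[\pi_{ij}(t_i)]$ guaranteed by Proposition~\ref{prop:copies} to build a feasible copies-setting mechanism, and then layering on reserve prices $P_{ij}(v_{-i})$ to extract at least $P_{ij}(v_{-i})\cdot \I[t_{ij}\geq P_{ij}(v_{-i})]$ from each agent $(i,j)$ that wins — was all done inside Proposition~\ref{prop:pi with reserve}. The only content of the present lemma beyond that proposition is the observation that the inequality is pointwise in $v$, hence survives taking the expectation, and that the expectation matches $\textsc{Over}$ on the nose.
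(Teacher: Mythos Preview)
Your proposal is correct and matches the paper's own proof essentially line for line: the paper also rewrites $\textsc{Over}$ as an average over $v\sim D$, swaps the order of summation, and applies Proposition~\ref{prop:pi with reserve} pointwise in $v$ to bound the inner sum by $\textsc{OPT}^{\textsc{Copies}}$. There is no difference in approach.
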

\begin{proof}
	This can be proved by rewriting \textsc{Over} and then applying Proposition~\ref{prop:pi with reserve}.
	\begin{align*}
\textsc{Over}=&\sum_{i}\sum_{t_{i}\in T_{i}} \sum_{j} f_{i}(t_{i})\cdot \pi_{ij}(t_{i})\cdot\sum_{v\in T} P_{ij}(v_{-i}) f(v)\I[t_{ij}\geq P_{ij}(v_{-i})]\\
= &\sum_{v\in T} f(v)\sum_{i}\sum_{t_{i}\in T_{i}} \sum_{j} f_{i}(t_{i})\cdot \pi_{ij}(t_{i})\cdot P_{ij}(v_{-i}) \cdot \I[t_{ij}\geq P_{ij}(v_{-i})]\\
	\leq& \sum_{v\in T} f(v)\cdot \textsc{OPT}^{\textsc{copies}}=\textsc{OPT}^{\textsc{copies}}
	\end{align*}
\end{proof}

Finally, we move on to \textsc{Under}. When there is only one bidder, \textsc{Under} is always $0$. 
Here, \textsc{Under} $\leq$ \copies, and turns out to be the trickiest part to bound. Recall that \textsc{Under} contains terms that are (non-favorite) values times indicators that these values \emph{do not} exceed the VCG prices. So the high-level hope is that the reason the VCG price for bidder $i$ to receive item $j$ exceeds $t_{ij}$ is because someone else is paying at least $t_{ij}$ for something, and we might hope to be able to come up with a clever charging argument. At a high level, this is indeed the plan, but the proof approach doesn't clearly map onto this intuition. 
We apply Proposition~\ref{prop:vcg reserve copies} (below) once for each type profile $t$, 
using the allocation of this mechanism on type profile $t$ to specify 
$(i_j,j)$ and  let $x_j  = t_{i_jj}$. 
Then taking the convex combination of the RHS of Proposition~\ref{prop:vcg reserve copies} for all profiles $t$ with multipliers $f(t)$ gives \textsc{Under}$\leq$ \copies.

\begin{proposition}\label{prop:vcg reserve copies}
Let $\{(i_{j},j)\}_{j\in S\subseteq[m]}$ be a feasible allocation in the copies setting. For all choices $x_1,\ldots,x_m\geq 0$, \copies $\geq \sum_{v\in T} f(v) \cdot \sum_{j\in S} x_j\cdot \I[P_{i_j j}(v_{-i_{j}}) > x_j]$.
\end{proposition}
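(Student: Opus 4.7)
The approach is to exhibit a feasible, BIC mechanism $M^*$ in the copies setting whose expected revenue lower-bounds the right-hand side; since \copies\ is the revenue of the optimal such mechanism, this establishes the claim. The first key step is the structural observation that whenever $P_{i_j j}(v_{-i_j}) > x_j$, there exists some bidder $\ell \neq i_j$ with $v_{\ell j} > x_j$. In the additive case this is immediate because $P_{i_j j}(v_{-i_j}) = \max_{\ell \neq i_j} v_{\ell j}$. In the unit-demand case, $P_{i_j j}(v_{-i_j})$ equals the marginal welfare gain of making item $j$ available in the optimal matching of $v_{-i_j}$, namely $W^*(v_{-i_j},[m]) - W^*(v_{-i_j},[m]\setminus\{j\})$; removing item $j$ from the optimal matching loses at most $v_{\ell^* j}$, where $\ell^*$ is the bidder matched to $j$, so this marginal is bounded by $v_{\ell^* j}$, giving $v_{\ell^* j} > x_j$ with $\ell^* \neq i_j$.

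For the additive case, the mechanism is clean: for each $j \in S$ independently, run a Vickrey auction with reserve $x_j$ on item $j$ among the copies $\{(\ell,j) : \ell \neq i_j\}$. By the observation, whenever $P_{i_j j}(v_{-i_j}) > x_j$ at least one bidder exceeds the reserve and the sub-auction collects revenue $\geq x_j$. The per-item auctions are jointly feasible because items are independent in the additive setting; summing the per-item revenues and taking expectations over $v \sim D$ then yields the bound.

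The main obstacle is the unit-demand case, in which winners of different per-item sub-auctions can coincide on the same bidder and violate matching feasibility. My plan is to replace the independent auctions by a feasibility-aware allocation: for each type profile $v$, let $S'(v) = \{j \in S : P_{i_j j}(v_{-i_j}) > x_j\}$ and consider the bipartite graph on $(S'(v), \text{bidders})$ with edges $\{(j,\ell) : \ell \neq i_j,\ v_{\ell j} \geq x_j\}$; each item $j$ matched in this graph is sold at price $x_j$ to its matched bidder, giving a valid matching and revenue $x_j$ per matched item. Since such a matching need not saturate $S'(v)$ pointwise (e.g., when a single bidder is the only willing buyer for multiple items in $S'(v)$), the finishing argument would invoke Myerson's characterization of \copies\ as the maximum expected ironed virtual welfare over monotone feasible allocations, and use the given feasibility of $\{(i_j,j)\}_{j \in S}$ together with the independence of values across items and bidders to argue that, in expectation over $v \sim D$, a suitably chosen allocation's ironed virtual welfare is at least $\sum_{j \in S} x_j \Pr[P_{i_j j}(v_{-i_j}) > x_j]$.
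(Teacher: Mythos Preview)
Your additive case is fine. The real content of the proposition is the unit-demand case, and there your proposal has a genuine gap that you already half-acknowledge but do not close.

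You correctly observe that $P_{i_j j}(v_{-i_j}) > x_j$ implies some bidder $\ell^*$ matched to $j$ in the optimal matching of $v_{-i_j}$ has $v_{\ell^* j} > x_j$. The problem is that this $\ell^*$ comes from the optimal matching \emph{with bidder $i_j$ removed}, and these matchings differ across $j$ (since the $i_j$'s differ across $j$). So the witnesses $\ell^*$ you extract for different items can collide, and your bipartite graph on $S'(v)$ need not admit a matching saturating $S'(v)$---you say as much. Your proposed escape via ``a suitably chosen allocation's ironed virtual welfare'' is not a proof: you never specify the allocation, and there is no reason the ironed-virtual-welfare bound you want actually holds without first producing a feasible allocation that serves every $j \in S'(v)$ to a copy with value at least $x_j$.

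The paper closes this gap with a different idea. It runs the VCG allocation (the overall max-welfare allocation on the full profile $v$, not with any bidder removed) with lazy reserves $x_j$ in the copies setting. The crux is to show that for every $j \in S$ with $P_{i_j j}(v_{-i_j}) > x_j$, the winner of item $j$ in \emph{this one global max-welfare allocation} has value exceeding $x_j$; since all such winners come from a single feasible allocation, feasibility is automatic. The argument splits on whether item $j$ goes to $i_j$ (then $v_{i_j j} \ge P_{i_j j}(v_{-i_j}) > x_j$) or to some other bidder $i$, and in the latter case uses the Gul--Stacchetti theorem that the welfare function $W_T(\cdot)$ is submodular in the item set when bidders are gross substitutes, to conclude $v_{ij} \ge W_{[n]\setminus\{i_j\}}([m]\setminus S') - W_{[n]\setminus\{i_j\}}([m]\setminus S' \setminus \{j\}) \ge P_{i_j j}(v_{-i_j}) > x_j$. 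This is exactly the ``single consistent matching'' that your approach is missing; the submodularity step is what makes it work, and nothing in your proposal substitutes for it.
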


\begin{proof}

Before beginning the proof of Proposition~\ref{prop:vcg reserve copies}, we will need the following definition and theorem due to Gul and Stacchetti~\cite{GulS99}.
\begin{definition} 
Let $W_T(S)$ be the maximum attainable welfare using only bidders in $T$ and items in $S$. \end{definition}

\begin{theorem}\label{thm:GS}(\cite{GulS99})
If all bidders in  $T$ have gross substitute valuations, then $W_T(\cdot)$ is a submodular function.
\end{theorem}

Now with Theorem~\ref{thm:GS}, consider in the Copies setting the VCG mechanism with lazy reserve $x_j$ for each copy $(i,j)$. Specifically, we will first solicit bids, then find the max-welfare allocation and call all $(i,j)$ who get allocated temporary winners. Then, if $(i,j)$ is a temporary winner, $(i,j)$ is given the option to receive service for the maximum of their Clarke pivot price in the Copies setting and $x_j$. It is clear that in this mechanism, whenever any agent $(i,j)$ receives service, the price she pays is at least $x_j$. Also, it is not hard to see that this is a truthful mechanism (for the Copies): for all other fixed bids, copy $(i,j)$ can report a bid exceeding the maximum of $x_j$ and their Clarke pivot price, or not. If they report a higher bid, they will receive service and pay the maximum of their Clarke pivot price and $x_j$. If they report a smaller number, they remain unserved. It's clear that bidding the Copy's true value is always optimal. Next, we argue for any $v\in T$ and $j\in S$, whenever $P_{i_j j}(v_{-i_{j}}) > x_j$, there exists some $i$ such that $(i,j)$ is served in the mechanism above.

By the definition of Clarke pivot price, we know $$P_{i_j j}(v_{-i_{j}})=W_{[n]-\{i_{j}\}}([m])-W_{[n]-\{i_{j}\}}([m]-\{j\}).$$ First, we show that if item $j$ is allocated to some bidder $i$ in the max-welfare allocation in the original setting then $v_{ij}\geq P_{ij}(v_{-i})$. Assume $S'$ to be the set of items allocated to bidder $i$. Since the VCG mechanism is truthful, the utility for winning set $S'$ is better than winning set $S'-\{j\}$: 
\begin{align*}\sum_{k\in S'}& v_{ik} -(W_{[n]-\{i\}}([m])-W_{[n]-\{i\}}([m]-S'))\\
&\geq \sum_{k\in S'-\{j\}} v_{ik} -(W_{[n]-\{i\}}([m]) -W_{[n]-\{i\}}([m]-S'+\{j\})).\end{align*}
Rearranging the terms, we get
\begin{align*}
v_{ij}
\geq &W_{[n]-\{i\}}([m]-S'+\{j\})-W_{[n]-\{i\}}([m]-S')\\
\geq & W_{[n]-\{i\}}([m])-W_{[n]-\{i\}}([m]-\{j\})) \quad\text(Theorem~\ref{thm:GS})\\
 = &P_{ij}(v_{-i}).
\end{align*}

Now we still need to argue that whenever $P_{i_j j}(v_{-i_j}) > x_j$, item $j$ is always allocated in the max-welfare allocation to some bidder $i$ with $v_{ij}\geq x_{j}$.
\begin{enumerate}
\item If agent $(i_{j},j)$ is a temporary winner, $$v_{i_{j}j}\geq P_{i_j j}(v_{-i_{j}}) > x_j.$$ Therefore, agent $(i_{j},j)$ will accept the price.
\item If agent $(i_{j},j)$ is not a temporary winner, let $S'$ be the set of items that are allocated to bidder $i_{j}$ in the welfare maximizing allocation in the original setting. Since $W_{[n]-\{i_{j}\}}([m]-S')-W_{[n]-\{i_{j}\}}([m]-S'-\{j\})\geq W_{[n]-\{i_{j}\}}([m])-W_{[n]-\{i_{j}\}}([m]-\{j\})= P_{i_{j}j}(v_{-i_{j}})$ (by Theorem~\ref{thm:GS}), and $P_{i_j j}(v_{-i_j})>x_{j}$, the following are true: (i) item $j$ is awarded to some bidder $i\neq i_j$ in the welfare maximizing allocation, because otherwise $W_{[n]-\{i_{j}\}}([m]-S')$ will have the same value as $W_{[n]-\{i_{j}\}}([m]-S'-\{j\})$; (ii) $v_{ij}>x_{j}$ because  $$v_{ij}\geq W_{[n]-\{i_{j}\}}([m]-S')- W_{[n]-\{i_{j}\}}([m]-S'-\{j\}) = P_{i_j j}(v_{-i_j})> x_j.$$
\end{enumerate}

So now we can conclude that for any $j\in S$ there is certainly some $i$ such that $(i,j)$ is served whenever $P_{i_j j} > x_j$, 
and therefore the revenue of this mechanism in the Copies setting is at least $\sum_{v\in T} f(v) \cdot \sum_{j\in S} x_j\cdot \I[P_{i_j j}(v_{-i_{j}}) > x_j]$, 
which is exactly the same as the sum in the proposition statement.
\end{proof}

\begin{lemma}\label{lem:multi under}
\textsc{Under} $\leq$ \copies.
\end{lemma}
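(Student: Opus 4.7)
\medskip

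\noindent\textbf{Proof plan.} The plan is to follow the hint and apply Proposition~\ref{prop:vcg reserve copies} once per type profile $t \in T$, then take a convex combination. First I would rewrite \textsc{Under} as a single outer expectation over full profiles. Let $\sigma$ be any ex-post feasible (possibly randomized) allocation rule inducing the reduced form $\pi$, so that $\pi_{ij}(t_i) = \mathbb{E}_{t_{-i}\sim D_{-i}}[\sigma_{ij}(t_i,t_{-i})]$. Substituting and relabeling the dummy $t_{-i}$ as part of the full profile $t$ gives
\[
\textsc{Under} \;=\; \sum_{t \in T} f(t) \sum_i \sum_j \sigma_{ij}(t)\cdot t_{ij} \cdot \sum_{v_{-i}\in T_{-i}} f_{-i}(v_{-i})\cdot \I[t_{ij} < P_{ij}(v_{-i})].
\]
Call the inner double sum $A(t)$; the task is to show $A(t)\leq \copies$ for every $t$.

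Next, fix a profile $t$ (and, if $\sigma$ is randomized, condition on its internal coins). Since $\sigma$ is ex-post feasible in the original setting, the set of served pairs $\{(i_j(t), j)\}_{j\in S}$, where $i_j(t)$ denotes the bidder (if any) receiving item $j$ on $t$ and $S\subseteq[m]$ is the set of allocated items, is a feasible allocation in the copies setting (this is exactly how $\mathcal{F}$ translates between the two settings). Apply Proposition~\ref{prop:vcg reserve copies} with this feasible copies-allocation and the reserves $x_j = t_{i_j(t),j}$ to conclude
\[
\copies \;\geq\; \sum_{v\in T} f(v) \sum_{j\in S} t_{i_j(t),j}\cdot \I\!\left[P_{i_j(t),j}(v_{-i_j(t)}) > t_{i_j(t),j}\right].
\]
The indicator on the right depends only on $v_{-i_j(t)}$, so marginalizing the outer sum over $v_{i_j(t)}$ converts the right-hand side into exactly $A(t)$ (restricted to the realization of $\sigma$). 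Hence $A(t) \leq \copies$ for this realization; taking expectation over the coins of $\sigma$ preserves the bound, and then averaging over $t$ with weights $f(t)$ yields $\textsc{Under} \leq \copies$.

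The main subtlety — and really the only one — is making sure the feasibility notions match up: the ``copies feasibility'' required by Proposition~\ref{prop:vcg reserve copies} is precisely the feasibility of $\mathcal{F}$ in the original setting, so any ex-post realization of $\sigma$ automatically yields a feasible copies-allocation to plug in. Once that is unpacked, the lemma is an immediate per-profile application of Proposition~\ref{prop:vcg reserve copies} together with the trivial fact that a convex combination of quantities each bounded by $\copies$ is itself bounded by $\copies$.
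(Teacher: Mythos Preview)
Your proposal is correct and follows essentially the same route as the paper: rewrite \textsc{Under} as an outer expectation over full profiles $t$, then for each fixed $t$ invoke Proposition~\ref{prop:vcg reserve copies} with the served pairs $\{(i_j,j)\}_{j\in S}$ and reserves $x_j = t_{i_j j}$, and finally average over $t$. The only cosmetic difference is that the paper first reduces to a deterministic ex-post rule (noting the inner sum is linear in $\pi$, so the maximum is attained at an extreme point), whereas you handle a randomized $\sigma$ directly by conditioning on its coins; both are equally valid.
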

\begin{proof} The idea is to interpret \textsc{Under} as the revenue of the following mechanism: let $M$ be the mechanism that induces $\pi(\cdot)$. Sample $t$ from $D$, let $S$ be the set of agents that will be served in $M$ for type profile $t$ in the copies setting. Use $t_{ij}$ to be the reserve price for $j$ if $(i,j)\in S$, and use the mechanism in Proposition~\ref{prop:vcg reserve copies}.

First, the inner sum $$\sum_{v_{-i}\in T_{-i}} t_{ij}\cdot f_{-i}(v_{-i})\cdot\I[ t_{ij}<P_{ij}(v_{-i})]$$ only depends on $t_{i}$, so the maximum of \textsc{Under} is achieved by a $\pi(\cdot)$ induced by some deterministic mechanism. Wlog, we consider $\pi(\cdot)$ is induced by a deterministic mechanism whose ex-post allocation rule is $x(\cdot)$. Let us rewrite \textsc{Under} using $x(\cdot)$:
\begin{align*}
\sum_{i}\sum_{t_{i}\in T_{i}}& \sum_{j} f_{i}(t_{i})\cdot \pi_{ij}(t_{i})\cdot \sum_{v_{-i}\in T_{-i}} t_{ij}\cdot f_{-i}(v_{-i})\cdot\I[ t_{ij}<P_{ij}(v_{-i})]\\
=&\sum_{t\in T}f({t}) \sum_{i}\sum_{j} x_{ij}(t)\cdot t_{ij}\cdot\sum_{v\in T} f(v)\cdot\I[ t_{ij}<P_{ij}(v_{-i})]\\
=&\sum_{t\in T}f({t})\cdot \sum_{v\in T} f(v)\sum_{i}\sum_{j} x_{ij}(t)\cdot t_{ij}\cdot\I[ t_{ij}<P_{ij}(v_{-i})]\\
\leq & \sum_{t\in T}f({t})\cdot \textsc{OPT}^\textsc{copies}= \textsc{OPT}^\textsc{copies}
\end{align*} 

The penultimate inequality is because if we let $\{(i_{j},j)\}_{j\in S}$ be the set of agents such that by $x_{i_{j}j}(t)=1$, then  \begin{align*}&\sum_{v\in T} f(v)\sum_{i}\sum_{j} x_{ij}(t)\cdot t_{ij}\cdot\I[ t_{ij}<P_{ij}(v_{-i})]
= \sum_{v\in T} f(v) \cdot \sum_{j\in S} x_j\cdot \I[P_{i_j j}(v_{-i_{j}}) > x_j],\end{align*} and by Proposition~\ref{prop:vcg reserve copies}, this is upper bounded by \copies.
\end{proof}

Combining the above lemmas now yields our theorems:
 
\begin{prevproof}{Theorem}{thm:multi unit-demand}	
Combine Lemmas~\ref{lem:multi unit-demand value},~\ref{lem:SINGLE},~\ref{lem:over} and~\ref{lem:multi under}.
\end{prevproof}	

\begin{prevproof}{Theorem}{thm:multi additive}	
Combining Lemmas~\ref{lem:multitail},~\ref{lem:multicore},~\ref{lem:SINGLE},~\ref{lem:over} and~\ref{lem:multi under}, we get the optimal revenue is upper bounded by $$3\textsc{OPT}^{\textsc{Copies}}+3r+2\bvcg.$$ Since \copies is exactly the revenue of selling each item separately optimally using Myerson's auction, and $r$ is the revenue of some mechanism (Ronen's) that sells the items separately, we have \copies $\geq r$, proving the statement.
\end{prevproof}




\section{Duality Theory Beyond Additive Bidders}\label{sec:general}
In this section we provide a statement of our duality theory that holds much more generally than when bidders are unit-demand or additive. The technical ideas are exactly the same as in Section~\ref{sec:duality} and just require updated notation. 
\vspace{.1in}

\noindent\textbf{Buyer Valuations.} In this section, we will consider buyers with arbitrary valuation functions for subsets of items. That is, buyer $i$ has some valuation function $t_i(\cdot)$ that takes as input a set of items and outputs a value. Buyer $i$'s type is drawn from some distribution $D_i$, and $D = \times_i D_i$ is the joint distribution over profiles of buyer types. We define $\mathcal{F}$ to be a set system over $[n]\times[m]$ that describes all feasible allocations.
\vspace{.1in}

\noindent\textbf{Implicit Forms.} The implicit form of an auction stores for all bidders $i$, and pairs of types $t_i$, $t'_i$, what is the expected value that that agent $i$ will receive when her real valuation function is $t_i$, but she reports $t'_i$ to the mechanism instead (over the randomness in the mechanism and randomness in other agents' reported types, assuming they come from $D_{-i}$) as $\pi_{i}(t_i,t'_i)$. We say that an implicit form is \emph{feasible} if there exists some feasible mechanism (that selects an outcome in $\mathcal{F}$ with probability $1$) that matches the expectations promised by the implicit form. If $\polytope$ is defined to be the set of all feasible implicit forms, it is easy to see (and shown in~\cite{CaiDW13b}, for instance) that $\polytope$ is closed and convex. Note that implicit forms are computed over the same randomness as reduced forms, but store directly the value that bidder $i$ receives for having type $t_i$ and reporting $t'_i$, instead of indirectly via interim probabilities.
\vspace{.1in}

We begin by writing the LP for revenue maximization in this more general setting (\Cref{fig:LPRevenueG}). 
 To proceed, we'll again introduce a variable $\lambda_i(t,t')$ for each of the BIC constraints,
and take the partial Lagrangian of LP~\ref{fig:LPRevenueG} by Lagrangifying all BIC constraints. 
The theory of Lagrangian multipliers tells us that the solution to  LP~\ref{fig:LPRevenueG} is equivalent to the primal variables solving the partially Lagrangified dual (\Cref{fig:LagrangianG}). 

\begin{definition}
Let $\L(\lambda, \pi, p)$ be a the partial Lagrangian defined as follows:
\begin{align*}
 \L(\lambda, \pi, p)=\sum_{i=1}^{n} \left(\sum_{t_i \in T_i} f_{i}(t_{i})\cdot p_i(t_i)+\sum_{t_{i}\in T_{i}}\sum_{t_{i}'\in T_i^{+}} \lambda_{i}(t_{i},t_{i}')\cdot \Big(\pi_i(t_{i},t_i)-\pi_i({t_i,t_{i}'})-\big(p_{i}(t_{i})-p_{i}(t_{i}')\big)\Big)\right)\stepcounter{equation}\tag{\theequation} \label{eq:primal lagrangianG}
\end{align*}
\begin{align*}
=\sum_{i=1}^n \sum_{t_{i}\in T_{i}} p_{i}(t_{i})\Big(f_{i}(t_{i})+&\sum_{t_{i}'\in T_{i}} \lambda_{i}(t_{i}',t_{i})-\sum_{t_{i}'\in T_{i}^{+}} \lambda_{i}(t_{i},t_{i}')\Big)\\
&+\sum_{i=1}^n\sum_{t_{i}\in T_{i}} \Big(\sum_{t_{i}'\in T_{i}^{+}}\lambda_{i}(t_{i},t_{i}')\cdot t_i(X_i(t_i))-\sum_{t'_{i}\in T_{i}}t_{i}'\cdot \lambda_{i}(t_{i}',t_{i})\cdot t'_i(X_i(t_i)) \Big)\stepcounter{equation}\tag{\theequation} \label{eq:dual lagrangianG}
\end{align*}
\end{definition}

In Equation~\eqref{eq:dual lagrangianG}, we use $X_i(t_i)$ to denote the the random set awarded to bidder $i$ when reporting type $t_i$ to the mechanism. That is, $ t_i(X_i(t'_i)):=\pi_i(t_i, t'_i) $. 

\begin{figure}[ht]
\colorbox{MyGray}{
\begin{minipage}{0.98\textwidth} {
\noindent\textbf{Variables:}
\begin{itemize}
\item $p_i(t_i)$, for all bidders $i$ and types $t_i \in T_i$, denoting the expected price paid by bidder $i$ when reporting type $t_i$ over the randomness of the mechanism and the other bidders' types.
\item $\pi_{i}(t_i,t'_i)$, for all bidders $i$, and types $t_i ,t'_i \in T_i$, denoting the expected value that bidder $i$ receives when her real type is type $t_i$ but reports $t'_i$, over the randomness of the mechanism and the other bidders' types.
\end{itemize}
\textbf{Constraints:}
\begin{itemize}
\item $\pi_i(t_i,t_i) - p_i(t_i) \geq \pi_i(t_i,t'_i) - p_i(t'_i) $, for all bidders $i$, and types $t_i \in T_i, t'_i \in T_i^+$, guaranteeing that the implicit form mechanism $({\pi},{p})$ is BIC and BIR.
\item ${\pi} \in \polytope$, guaranteeing ${\pi}$ is feasible.
\end{itemize}
\textbf{Objective:}
\begin{itemize}
\item $\text{Maximize:} \sum_{i=1}^{n} \sum_{t_i \in T_i} f_{i}(t_{i})\cdot p_i(t_i)$, the expected revenue.\\
\end{itemize}}
\end{minipage}}
\caption{A Linear Program (LP) for Revenue Optimization.}
\label{fig:LPRevenueG}
\end{figure}

\begin{figure}[ht]
\colorbox{MyGray}{
\begin{minipage}{0.98\textwidth} {
\noindent\textbf{Variables:}
\begin{itemize}
\item $\lambda_i(t_{i},t_{i}')$ for all $i,t_{i}\in T_{i},t_{i}' \in T_i^{+}$, the Lagrangian multipliers for Bayesian IC constraints.
\end{itemize}
\textbf{Constraints:}
\begin{itemize}
\item $\lambda_i(t_{i},t_{i}')\geq 0$ for all $i,t_{i}\in T_{i},t_{i}' \in T_i^{+}$, guaranteeing that the Lagrangian multipliers are non-negative. 
\end{itemize}
\textbf{Objective:}
\begin{itemize}
\item $\text{Minimize:} \max_{\pi\in\polytope, p} \L(\lambda, \pi, p)$.\\
\end{itemize}}
\end{minipage}}
\caption{Partial Lagrangian of the Revenue Maximization LP.}
\label{fig:LagrangianG}
\end{figure}

Lemma~\ref{lem:useful dual} immediately holds in this setting as well, and the proof is identical. That is, a feasible dual solution is still useful if and only if it induces a flow in the same graph. We will define virtual valuation functions in essentially the same way, just updating notation.

\begin{definition}[Virtual Value Function]\label{def:virtual valueG}
For each $\lambda$, we define a corresponding virtual value function $\Phi(\cdot)$, such that for every bidder $i$, every type $t_{i}\in T_{i}$, $\Phi_{i}(t_{i})(\cdot)=t_{i}(\cdot)-{1\over f_{i}(t_{i})}\sum_{t_{i}'\in T_{i}} \lambda_{i}(t_{i}',t_{i})(t_{i}'(\cdot)-t_{i}(\cdot)).$ That is, $\Phi_i(t_i)$ is a function that takes as input sets of items and outputs a value. For any set of items $S$, $\Phi_i(t_i)(S) = t_{i}(S)-{1\over f_{i}(t_{i})}\sum_{t_{i}'\in T_{i}} \lambda_{i}(t_{i}',t_{i})(t_{i}'(S)-t_{i}(S)).$
\end{definition}

We can now state the proper generalization of Theorem~\ref{thm:revenue less than virtual welfare} in this general setting. The proof is identical to that of Theorem~\ref{thm:revenue less than virtual welfare} and we omit it. In the theorem statement, $X_i(t_i)$ again denotes the random set allocated to bidder $i$ when reporting type $t_i$, so that $\pi_i(t_i, t'_i) = t_i(X_i(t'_i))$. 

\begin{theorem}[Virtual Welfare $\geq$ Revenue]\label{thm:revenue less than virtual welfareG}
Let $\lambda$ be any useful dual solution and $M = (\pi,p)$ any BIC mechanism. Then the revenue of $M$ is $\leq$ the virtual welfare of $\pi$ w.r.t. the virtual value function $\Phi(\cdot)$ corresponding to $\lambda$. That is:
$$\sum_{i=1}^{n} \sum_{t_i \in T_i} f_{i}(t_{i})\cdot p_i(t_i)\leq \sum_{i=1}^{n} \sum_{t_{i}\in T_{i}} f_{i}(t_{i})\cdot\Phi_{i}(t_{i})(X_i(t_i)).$$
Equality holds if and only if for all $i, t, t'$ such that $\lambda_i(t, t') > 0$, the BIC constraint for bidder $i$ between $t$ and $t'$ binds in $M$ (that is, bidder $i$ with type $t$ is indifferent between reporting $t$ and $t'$). Furthermore, let $\lambda^{*}$ be the optimal dual variables and $M^{*}=(\pi^{*},p^{*})$ be the revenue optimal BIC mechanism, then the expected virtual welfare with respect to $\Phi^{*}$ (induced by $\lambda^{*}$) under  $\pi^{*}$ equals the expected revenue of $M^{*}${, and $$\pi^* \in \argmax_{\pi \in \polytope}\left\{\sum_{i=1}^n \sum_{t_i \in T_i} f_i(t_i) \cdot \Phi^*_i(t_i)(X_i(t_i))\right\}.$$}
\end{theorem}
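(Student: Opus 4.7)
The plan is to mirror exactly the proof of Theorem~\ref{thm:revenue less than virtual welfare}, updating only the notation to accommodate general valuations via the implicit form. First, I would observe that Lemma~\ref{lem:useful dual} carries over verbatim: expression~\eqref{eq:dual lagrangianG} has $p_i(t_i)$ appearing linearly with coefficient $f_i(t_i)+\sum_{t_i'\in T_i}\lambda_i(t_i',t_i)-\sum_{t_i'\in T_i^+}\lambda_i(t_i,t_i')$, and since $p_i(t_i)$ is unconstrained, unless this coefficient vanishes the inner maximum is $+\infty$. Vanishing of these coefficients for every $i,t_i$ is precisely flow conservation, so useful $\lambda$ correspond to flows.

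Next, assuming $\lambda$ is useful, I would simplify $\L(\lambda,\pi,p)$ via~\eqref{eq:dual lagrangianG}: drop all $p$-terms (their coefficients are zero), and use flow conservation to replace $\sum_{t_i'\in T_i^+}\lambda_i(t_i,t_i')$ by $f_i(t_i)+\sum_{t_i'\in T_i}\lambda_i(t_i',t_i)$ inside the coefficient of $t_i(X_i(t_i))$. After rearrangement, one obtains
\[
\L(\lambda,\pi,p)=\sum_{i=1}^n\sum_{t_i\in T_i} f_i(t_i)\Big(t_i(X_i(t_i))-\tfrac{1}{f_i(t_i)}\sum_{t_i'\in T_i}\lambda_i(t_i',t_i)\big(t_i'(X_i(t_i))-t_i(X_i(t_i))\big)\Big),
\]
which by Definition~\ref{def:virtual valueG} is exactly $\sum_i\sum_{t_i}f_i(t_i)\Phi_i(t_i)(X_i(t_i))$, the expected virtual welfare.

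In parallel, I would view $\L$ through expression~\eqref{eq:primal lagrangianG}: each Lagrangian term has coefficient $\lambda_i(t_i,t_i')\geq 0$ multiplying $\pi_i(t_i,t_i)-\pi_i(t_i,t_i')-(p_i(t_i)-p_i(t_i'))$, which is non-negative for any BIC $M$ by the BIC constraint in Figure~\ref{fig:LPRevenueG}. Thus $\L(\lambda,\pi,p)\geq \sum_i\sum_{t_i}f_i(t_i)p_i(t_i)$, the expected revenue of $M$. Combining with the previous paragraph gives the desired inequality. Equality follows precisely when every Lagrangian term with $\lambda_i(t_i,t_i')>0$ has slack zero, i.e.\ the BIC constraint between $t_i$ and $t_i'$ is tight, which is the stated complementary slackness condition.

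Finally, for the optimality statement I would invoke strong LP duality applied to the LP of Figure~\ref{fig:LPRevenueG}: at the optimal dual $\lambda^*$, $\max_{\pi\in\polytope,p}\L(\lambda^*,\pi,p)$ equals $\rev^{M^*}(D)$. Since $\L(\lambda^*,\pi^*,p^*)\geq \rev^{M^*}(D)$ always (by the BIC inequality above) and is bounded above by the max, these quantities coincide, so the virtual welfare of $\pi^*$ under $\Phi^*$ equals $\rev^{M^*}(D)$, and $\pi^*$ attains $\max_{\pi\in\polytope}\sum_i\sum_{t_i}f_i(t_i)\Phi^*_i(t_i)(X_i(t_i))$. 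I do not anticipate a genuine obstacle: the only mild subtlety is bookkeeping the substitution $\pi_i(t_i,t_i')=t_i(X_i(t_i'))$ and recognizing that the ``dot product'' $t_i\cdot\pi_i(t_i')$ used in the additive/unit-demand proof is replaced by $t_i$ evaluated on the (random) set $X_i(t_i')$, with all algebraic manipulations otherwise unchanged.
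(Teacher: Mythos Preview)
Your proposal is correct and follows essentially the same approach as the paper; indeed, the paper explicitly states that the proof is identical to that of Theorem~\ref{thm:revenue less than virtual welfare} and omits it, and your write-up faithfully reproduces that argument with the notation updated to the implicit-form setting.
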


\section{Conclusion}
We present a new duality framework for Bayesian mechanism design, and show how to recover and improve the state-of-the-art mechanisms for additive or unit-demand bidders with independent item values. Additionally, our proofs for the single-item, unit-demand, and additive settings are ``unified'' in the sense that we've separated out part of the proof that is (nearly) identical for all three settings (the duality-based upper bound, Section~\ref{sec:flow}), so that the additional work necessary for each result is minimized (Sections~\ref{sec:single} and~\ref{sec:multi}).\footnote{One way to think of this is that if one wishes to exclusively understand the main result from a single one of~\cite{Myerson81, ChawlaMS10, BabaioffILW14, Yao15}, the quickest way to do so is probably still just to read the original papers. However, if one wishes to understand all of these results, a substantial fraction of the proofs overlap via our approach.}

Additionally, our framework provides a principled starting point for future work (as evidenced by the numerous recent follow-ups of Section~\ref{sec:subsequent}). Many of these works themselves explore new areas and present further open questions (such as competition complexity~\cite{EdenFFTW17a, LiuP17}, limited complementarity~\cite{EdenFFTW17b}, two-sided markets~\cite{BrustleCWZ17}, and ``one-and-a-half'' dimensional settings~\cite{FiatGKK16, DevanurW17, DevanurHP17}). 

Our approach also yields insight into any incentives problem that can be captured by an LP with ``incentive constraints'' and ``feasibility constraints.'' Domains such as signaling (e.g.~\cite{Dughmi14}) and contract theory (e.g.~\cite{Carroll15}) are amenable to such LP formulations. Bayesian Persuasion seems an especially enticing domain, as the algorithmic CDW framework has already found application there~\cite{DughmiX16}. 

\section{Acknowledgements}
We would like to thank Costis Daskalakis and Christos Papadimitriou for numerous helpful discussions during the preliminary stage of this work, and Jason Hartline and Rakesh Vohra for helpful discussion and pointers to related works. 

\bibliographystyle{alpha}
\bibliography{Yang}
\end{document}